\newcommand{\neff}{n_{\operatorname{eff}}}
\newcommand{\code}[4]{[#1,#2,#3]_{#4}}
\newcommand{\codeshort}[3]{[#1,#2]_{#3}}
\newcommand{\codeeff}[4]{\left[\underline{#1},#2,#3\right]_{#4}}
\newcommand{\codeshorteff}[3]{\left[\underline{#1},#2\right]_{#3}}
\newcommand{\snumb}[3]{s_{#3}(#1,#2)} 
\newtheorem{theorem}{Theorem}
\newtheorem{proposition}{Proposition}
\newtheorem{lemma}{Lemma}
\begin{document}

\title[Classification of $8$-divisible binary linear codes with minimum distance $24$]{Classification of $\mathbf{8}$-divisible binary linear codes with minimum distance $\mathbf{24}$}

\author{Sascha Kurz}
\address{Sascha Kurz, University of Bayreuth, 95440 Bayreuth, Germany}
\email{sascha.kurz@uni-bayreuth.de}


\abstract{We classify $8$-divisible binary linear codes with minimum distance $24$ and small length. As an application we consider 
the codes associated to nodal sextics with $65$ ordinary double points.\\[2mm]
\textbf{Keywords:} triply even codes, divisible codes, classification, nodal sextics\\
\textbf{MSC:} Primary  94B05.
}}
\maketitle

\section{Introduction}
Doubly even codes were subject to extensive research in the last years. For applications and enumeration results 
we refer e.g.\ to \cite{doran2011codes}. More recently, triply even codes were studied, see e.g.\ \cite{betsumiya2012triply,honold2019lengths}.  
These two classes of binary linear codes are special cases of so-called $\Delta$-divisible codes, where all weights are divisible 
by $\Delta$. Being introduced by Ward, see \cite{ward2001divisible} for a survey, they have many applications. A recent example 
is the maximum size of partial spreads, i.e., sets of $k$-dimensional subspaces of $\mathbb{F}_q^v$ with trivial intersection and 
maximum possible cardinality. All currently known upper bounds for partial spreads can be deduced from non-existence results for 
$q^{k-1}$-divisible projective codes, see \cite{heinlein2019projective,honold2018partial}. For some enumeration results for projective $2^r$-divisible 
codes we refer to \cite{ubt_eref40887}. It has been observed in \cite{honold2018partial} that among the linear codes with maximum 
possible minimum distance $d$ there are often examples which are $q^r$-divisible, provided that $q^r$ divides $d$. Here we study the 
special case of triply even, i.e., $8$-divisible binary  linear codes with minimum distance $d=24$. We exhaustively enumerate all 
such codes for small lengths. While those classification results are of cause of interest in coding theory, there is another motivation 
coming from algebraic geometry. A \textit{nodal surface} is a hypersurface of degree $s$ in $\mathbb{P}_3(\mathbb{C})$ with $\mu$ ordinary 
double points (nodes) as its only singularities. The maximum number $\mu(s)$ of nodes was determined by Cayley \cite{cayley1869vii} and 
Schl\"afli \cite{dr1863distribution} for $s=3$ and by Kummer \cite{kummer1864flachen} for $s=4$, respectively. In \cite{beauville1979nombre} 
Beauville concluded the existence of a binary linear code $C$ in $\mathbb{F}_2^n$ with certain further properties from the existence of a 
nodal surface with $m\ge n$ nodes. This connection  allowed him to overcome the general upper bound of Basset \cite{basset1906maximum} and 
especially to determine $\mu(5)=31$. The coding theoretic approach was used in \cite{jaffe1997sextic} to obtain $\mu(6)<66$, so that 
$\mu(6)=65$ due to the existence of the so-called Barth sextic \cite{barth1996two}. In \cite[Theorem 5.5.9]{PhdPettersen} a unique irreducible 
$3$-parameter family of $65$-nodal sextics containing the Barth sextic was determined. For the next case only $99\le \mu(7)\le 104$ is known, 
see \cite{labs2004septic} and \cite{varchenko1983semicontinuity}, respectively. The following general properties of the associated code $C$ of a nodal 
surface with degree $s$ and $m$ nodes are known. For the dimension $k$ of $C$ a general argument of Beauville \cite{beauville1979nombre} 
gives $k\ge m-\left\lceil s^3/2\right\rceil +2s^2-3s+1$, see \cite[Proposition 4.3]{jaffe1997sextic}. If $s$ is odd, then $C$ is doubly even and triply even 
otherwise, see \cite[Proposition 2.11]{catanese1981babbage}. The minimum distance $d$ satisfies $d\ge 2\lceil s(s-2)/2\rceil$, see 
\cite[Theorem 1.10]{endrass1997minimal}. In some cases further weights can be excluded. For a more extensive overview on the history and technical 
details of nodal surfaces with many nodes we refer the interested reader e.g.\ to \cite{PhdLabs}. 

The remaining part of the paper is organized as follows. In Section~\ref{sec_computer_classification} we describe algorithms for the exhaustive generation 
of linear codes and apply them for $8$-divisible binary linear codes with minimum distance $24$ and small parameters. As an application codes of nodal 
surfaces are considered in Section~\ref{sec_nodal_surfaces}. In Section~\ref{sec_theoretical} we collect some theoretical arguments that partially 
show our computational findings from the two previous sections. 
The optimal $\code{n}{k}{24}{2}$ codes that are $8$-divisible are tabulated in Appendix~\ref{sec_class_optimal}.  

\section{Computer classification of linear codes}
\label{sec_computer_classification}

A \emph{$q$-ary linear code} $C$ of \emph{length} $n$ and \emph{dimension} $k$, or an $\codeshort{n}{k}{q}$ code, is a $k$-dimensional subspace of $\mathbb{F}_q^n$. 
It can be represented by a basis. Written rowwise this is called \emph{generator matrix} in coding theory. An example of a $\codeshort{64}{13}{2}$ code 
is given by: 
$$
\begin{pmatrix}
1111111111111111111111111111111111111111111111111111111111111111\\
0000000000000000000000000000000011111111111111111111111111111111\\
0000000000000000111111111111111100000000000000001111111111111111\\
0000000011111111000000001111111100000000111111110000000011111111\\
0000111100001111000011110000111100001111000011110000111100001111\\
0011001100110011001100110011001100110011001100110011001100110011\\
0101010101010101010101010101010101010101010101010101010101010101\\
0000000000001111001100110011110000111100001100111111000011111111\\
0001001000011101000100100001110101001000010001110100100001000111\\
0000011001100000001110101010001101011100110001011001111111111001\\
0000000001101001000000001001011001011010110011000101101000110011\\
0001001000011101011110110111010000101110110111101011100001001000\\
0000001101010110011001011100111100000011101010011001101011001111\\
\end{pmatrix}
$$
Elements of such a subspace are called \emph{codewords}. The \emph{weight} of a codeword is the number of non-zero coordinates. So, each non-empty 
linear code contains exactly one codeword of weight zero. The \emph{minimum distance} $d$ of a linear code is the smallest non-zero weight of a 
codeword. If $a_i$ denotes the number of codewords of weight $i$, then the \emph{weight enumerator} is given by 
$W(z)=\sum_{i\ge 0} a_iz^i$. In our example we have
$$
  W(z)=1z^{0}+1008z^{24}+6174z^{32}+1008z^{40}+1z^{64},
$$
i.e., the minimum distance is given by $d=24$. Adding zero columns to the above generator matrix does neither change the dimension, the minimum 
distance nor the weight enumerator. However, the length is increased so that we call the smallest possible length the \emph{effective length} $\neff$. To ease 
the notation we write $\codeshorteff{n}{k}{q}$ for a $k$-dimensional code over $\mathbb{F}_q$ with effective length $\neff=n$. If we want to highlight the minimum  
distance $d$ of a code we speak of an $\code{n}{k}{d}{q}$ or $\codeeff{n}{k}{d}{q}$ code. If only weights from a set 
$\{w_1,\dots,w_l\}\subset\mathbb{N}$ can occur in the code we speak of an $\code{n}{k}{\{w_1,\dots,w_l\}}{q}$ code. We also use notations as $\codeshorteff{\le n}{k}{q}$ and 
$\codeshorteff{n}{\ge k}{q}$, as well as their variants, in order to denote the set of all $q$-ary $k$-dimensional linear codes with effective length at most $n$ and the set 
of all $q$-ary linear codes with effective length $n$ and dimension at least $k$, respectively. The dual of an $\codeshort{n}{k}{q}$ code $C$ is the set of all codewords in $\mathbb{F}_q^n$ 
that are perpendicular on $C$. The \emph{dual code} $C^*$ has length $n$ and dimension $n-k$. By $a_i^*$ we denote the number of codewords of weight $i$ of the 
dual code, so that we can also speak of the minimal dual distance $d^*$. In our example the minimal dual distance is $4$. The weight distribution of a linear code 
and its dual is related by the so-called MacWilliams identities, see e.g.\ \cite{macwilliams1977theory}. For an $\codeshorteff{n}{k}{2}$ code we have:
\begin{equation}
  \label{eq_mac_williams}
  \sum a_i^* x^{n-i}y^i=\tfrac{1}{2^k}\cdot\sum a_i(x+y)^{n-i}(x-y)^i. 
\end{equation}  

Given a linear code $C$ we can consider the span of the columns of a generator matrix of $C$, i.e., we have an associated multiset of $1$-dimensional subspaces, called 
points, of $\mathbb{F}_q^n$. Starting from a multiset of points we can naturally associate a code, see e.g.\ \cite{dodunekov1998codes} for more details. 
Geometrically $d^*\ge 3$ means that the associated multiset of points is indeed a set of points, i.e., the code is \emph{projective}. $d^*\ge 4$ translates to the 
geometrical fact that the associated set of points  does not contain a full line. Permuting columns 
of a generator matrix of a linear code does not change the key parameters of the code and is considered as the set of automorphisms. Here we restrict ourselves 
to the automorphisms of the corresponding multiset of points which ignores permutations of identical columns. The automorphism group of our example has order 
$\# \operatorname{Aut}=23224320$. The code was obtained in \cite{delsarte1975alternating} and has the following nice description, see \cite{jaffe1997sextic}:
It is a subcode of the second order Reed-Muller code $R(2,6)$ containing the first order Reed-Muller code $R(1,6)$ as a subcode. The
cosets of $R(1,6)$ in it correspond to the symplectic forms $B_a$ in $\mathbb{F}_{64}$, given by $B_a(x,y)=\operatorname{tr}((ax^4 + a^{16}x^{16})y)$.

One way to generate linear $\code{n}{k}{W}{q}$ codes with weights in some set $W\subseteq \mathbb{N}$ is to start from an $\code{n'}{k-1}{W}{q}$ subcode, where 
$n'\le n-1$, and to append another row to the generator matrix. This approach consists of two steps. First one has to determine candidates for the additional row 
of the generator matrix that lead to an $\codeshort{n}{k}{q}$ code with weights in $W$ and then one has to filter out the non-isomorphic copies, c.f.\ \cite{bouyukliev2007q}. 
We start by formulating the first part as an enumeration problem of integral points in a polyhedron:
\begin{lemma}
  \label{lemma_ILP}
  Let $G$ be a systematic generator matrix of an $\codeshorteff{n}{k}{2}$ code whose weights are $\Delta$-divisible and are contained in $\left[a\cdot \Delta,b\cdot \Delta\right]$. 
  By $c(u)$ we denote the number of columns of $G$ that equal $u$ for all $u$ in $\mathbb{F}_2^k\backslash\mathbf{0}$, $c(\mathbf{0})=n'-n$,  and let $\mathcal{S}(G)$ be 
  the set of feasible solutions of 
  \begin{eqnarray}
    \Delta y_h+\sum_{v\in\mathbb{F}_2^{k+1}\,:\,v^\top h=0} x_v =n-a\Delta&&\forall h\in\mathbb{F}_2^{k+1}\backslash\mathbf{0}\label{eq_hyperplane}\\
    x_{(u,0)}+x_{(u,1)} =c(u) && \forall u\in\mathbb{F}_2^{k} \label{eq_c_sum}\\
    x_{e_i}\ge 1&&\forall 1\le i\le k+1\label{eq_systematic}\\
    x_v\in \mathbb N &&\forall v\in\mathbb{F}_2^{k+1}\\ 
    y_h\in\{0,...,b-a\} && \forall h\in\mathbb{F}_2^{k+1}\backslash\mathbf{0}\label{hyperplane_var},
  \end{eqnarray}
  where $e_i$ denotes the $i$th unit vector in $\mathbb{F}_2^{k+1}$ and $n'\ge n+1$. Then, for every systematic generator matrix $G'$ of an $\codeshorteff{n'}{k+1}{2}$ code $C'$ 
  whose first $k$ rows coincide with $G$ we have a solution $(x,y)\in\mathcal{S}(G)$ such that $G'$ has exactly $x_v$ columns equal to $v$ for each $v\in\mathbb{F}_2^{k+1}$.
\end{lemma}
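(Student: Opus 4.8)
The plan is to write down the required point $(x,y)$ explicitly and then verify membership in $\mathcal{S}(G)$ one constraint at a time. Given $G'$ as in the statement, I set $x_v$ equal to the number of columns of $G'$ that equal $v$, for every $v\in\mathbb{F}_2^{k+1}$; this is already the vector $x$ named in the conclusion, so the final clause of the lemma becomes automatic once $(x,y)\in\mathcal{S}(G)$ has been proved. For the $y$-part: since $G'$ has rank $k+1$, for each $h\in\mathbb{F}_2^{k+1}\setminus\mathbf{0}$ the word $h^\top G'$ is a nonzero codeword of $C'$, and because $C'$ is $\Delta$-divisible with every nonzero weight lying in $[a\Delta,b\Delta]$, that weight equals $(a+y_h)\Delta$ for a unique $y_h\in\{0,\dots,b-a\}$; I take these $y_h$ as the second component. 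With this choice $x_v\in\mathbb{N}$ and \eqref{hyperplane_var} hold by construction, so what is left to check is \eqref{eq_c_sum}, \eqref{eq_systematic} and \eqref{eq_hyperplane}.

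The first two I would dispose of at once. Appending a row to $G$ moves no column, so the columns of $G'$ whose first $k$ coordinates form a prescribed vector $u\in\mathbb{F}_2^{k}$ are exactly the ones in the $c(u)$ positions where $G$ carries the column $u$ — here I use the convention $c(\mathbf{0})=n'-n$ for the zero column — and each of them equals either $(u,0)$ or $(u,1)$; hence $x_{(u,0)}+x_{(u,1)}=c(u)$, which is \eqref{eq_c_sum}. For \eqref{eq_systematic} I would use that a systematic generator matrix contains the $(k+1)\times(k+1)$ identity matrix among its columns, so each unit vector $e_i$ with $1\le i\le k+1$ occurs as a column of $G'$ and thus $x_{e_i}\ge 1$.

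The step with actual content is \eqref{eq_hyperplane}, which is a hyperplane count. Since only a row was appended, $G'$ has the same number of columns as $G$, namely $\sum_{u\in\mathbb{F}_2^{k}}c(u)=n+(n'-n)=n'$, so that $\sum_{v\in\mathbb{F}_2^{k+1}}x_v=n'$. For a fixed nonzero $h$ the $j$-th entry of the codeword $h^\top G'$ equals $v^\top h$ for $v$ the $j$-th column of $G'$, so the weight of $h^\top G'$ is the number of columns $v$ with $v^\top h=1$, counted with multiplicity; grouping columns by value gives
\[
  \hweight(h^\top G')=\sum_{v\,:\,v^\top h=1}x_v=\Big(\sum_{v\in\mathbb{F}_2^{k+1}}x_v\Big)-\sum_{v\,:\,v^\top h=0}x_v .
\]
Substituting $\hweight(h^\top G')=(a+y_h)\Delta$ together with $\sum_{v}x_v=n'$ and rearranging turns this into \eqref{eq_hyperplane}, and with that $(x,y)\in\mathcal{S}(G)$ is established.

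I do not anticipate a genuine obstacle: the argument is essentially a dictionary between columns of $G'$, functionals $h$, and the weights they induce, held together by $\sum_{v}x_v=n'$ and $\Delta$-divisibility. The two spots that will need a little care are the all-zero vector $\mathbf{0}\in\mathbb{F}_2^{k+1}$, which lies in every hyperplane $\{v:v^\top h=0\}$ and contributes nothing to any codeword weight — so the displayed identity holds regardless of how many zero columns $G'$ actually has — and the claim that each $y_h$ is a genuine integer in $\{0,\dots,b-a\}$, which is exactly where one invokes that $C'$ is $\Delta$-divisible and carries no weight above $b\Delta$, i.e.\ that $C'$ is one of the codes we set out to enumerate.
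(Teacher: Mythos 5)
Your proposal is correct and takes essentially the same route as the paper's own (much terser) proof: set $x_v$ equal to the column counts of $G'$, read off $y_h$ from the weight of $h^\top G'$, and verify the constraints one by one; the paper merely asserts that (\ref{eq_systematic}) follows from systematicity, (\ref{eq_c_sum}) from appending a row, and that (\ref{eq_hyperplane}) and (\ref{hyperplane_var}) ``correspond to'' the weight restrictions, whereas you carry out the hyperplane count explicitly. One point deserves a remark, though. Your computation, which is the correct one, yields $\Delta y_h+\sum_{v:\,v^\top h=0}x_v=n'-a\Delta$, because $\sum_v x_v=n'$ is the total number of columns of $G'$; so the rearrangement does \emph{not} literally produce (\ref{eq_hyperplane}) as printed, whose right-hand side is $n-a\Delta$. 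The printed right-hand side is in fact a typo for $n'-a\Delta$: for $h=(h_0,0)$ with $h_0\neq\mathbf{0}$ one gets $\sum_{v:\,v^\top h=0}x_v=n'-\operatorname{wt}(h_0^\top G)$, and the printed equation would force $\Delta y_h=\bigl(\operatorname{wt}(h_0^\top G)-a\Delta\bigr)-(n'-n)$, which is in general negative or not a multiple of $\Delta$. You should therefore state explicitly that you are proving the constraint with $n'$ in place of $n$ rather than claiming it ``turns into (\ref{eq_hyperplane})'' verbatim. Finally, as you note yourself, placing $y_h$ in $\{0,\dots,b-a\}$ uses that $C'$ (not only $C$) is $\Delta$-divisible with weights in $[a\Delta,b\Delta]$; this hypothesis is only implicit in the statement, and the paper's proof relies on it in exactly the same implicit way, so making it explicit as you do is an improvement rather than a deviation.
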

\begin{proof}
  Let such a systematic generator matrix $G'$ be given and $x_v$ denote the number of columns of $G'$ that equal $v$ for all $v\in\mathbb{F}_2^{k+1}$. Since $G'$ is systematic, 
  Equation~(\ref{eq_systematic}) is satisfied. As $G'$ arises by appending a row to $G$, also Equation~(\ref{eq_c_sum}) is satisfied. Obviously, the $x_v$ are non-negative 
  integers. The conditions (\ref{eq_hyperplane}) and (\ref{hyperplane_var}) correspond to the restriction that the weights are $\Delta$-divisible and contained in 
  $\{a\Delta,\dots,b\Delta\}$.   
\end{proof}
We remark that also every solution in $\mathcal{S}(G)$ corresponds to an $\codeshorteff{n'}{k+1}{2}$ code $C'$ with generator matrix $G'$ containing $C$ as a subcode. The method can also be 
easily adopted to field sizes $q>2$ by simply counting $1$-dimensional subspaces in $C$ and $x$ instead of vectors. Half of the constraints (\ref{eq_hyperplane}) 
are automatically satisfied since $C$ satisfies all constraints on the weights. If there are further forbidden weights in $\{i\Delta\,:a\le i\le b\}$ then, one may 
also use the approach of Lemma~\ref{lemma_ILP}, but has to filter out the integer solutions that correspond to codes with forbidden weights. Another application of this 
first generate, then filter strategy is to remove some of the constraints (\ref{eq_hyperplane}), which speeds up, at least some, lattice point enumeration algorithms.  

For the first part, i.e., the application of Lemma~\ref{lemma_ILP}, we use an implementation of the LLL lattice point enumeration algorithm, see \cite{wassermann2002attacking}. 
For the filtering of non-isomorphic copies we have used the software \texttt{Q-Extension} \cite{bouyukliev2007q} or \texttt{CodeCan} \cite{feulner2009automorphism}.
It remains to specify the choice of the parameters $n$, $n'$, and $k$. In order to generate $\codeshorteff{n'}{k+1}{2}$ codes all $\codeshorteff{n}{k}{2}$ codes with $n<n'$ have  
to be known, so that the generation is performed with increasing dimension $k$. However, this way we get a lot of isomorphic copies since a $\codeshorteff{[n'}{k+1}{2}$ code $C'$ 
usually contains several non-isomorphic $\codeshorteff{n}{k}{2}$ subcodes $C$. To slightly reduce this effect, we assume that every column of the generator matrix of $C$ is 
contained at least $n'-n$ times, since otherwise there exists a $\codeshorteff{\hat{n}}{k}{2}$ code $\hat{C}$ with $\hat{n}>n$ that can be extended to $C'$. In other words, we 
assume that the vector of the effective lengths in the generation path of a code is weakly decreasing. We remark that more sophisticated assumptions on the order of the generation 
of subcodes can be made to even better overcome the problem of the generation of a huge number of isomorphic codes. However, in order to be even resistant to a some local 
hardware failures in our computations, we have decided not to implement those. 
  
We have cross checked\footnote{The $\codeeff{\le 60}{7}{\{24,32,40\}}{2}$ codes have also been generated by solely using \texttt{Q-Extension}. 
As the $\codeeff{n}{k}{\{24,32,40,48,56,64\}}{2}$ codes contain the $\codeeff{n}{k}{\{24,32,40\}}{2}$ codes, we have another 
cross check.} our algorithms and implementations with the case of $4$-divisible codes treated by Miller et al. \cite{doran2011codes}, 
\url{https://rlmill.github.io/de_codes}. For all such codes with $n\le 28$ and $k\le 7$ our numbers coincide. Note that there are $1452663$ $4$-divisible 
$\codeshorteff{28}{7}{2}$ codes. In the meantime the algorithmic approach described above is implemented in more generality, see \cite{kurz2019lincode} for the details.

We remark that other approaches for classifying linear codes can e.g.\ be found in \cite[Section 7.3]{kaski2006classification} or 
\cite{betten2006error,bouyukliev2007q,feulner2014classification}.

In tables (\ref{tab_8div_p1})-(\ref{tab_8div_p3}) we have stated the number of $8$-divisible $\codeshorteff{n}{k}{2}$ codes with minimum distance $24$, dimension 
$k\le 13$, and small lengths. Note that blank entries on the left of each row correspond to a zero, while blank entries on the right of each 
row correspond to values that are not computed due to the exponential growth of the number of codes.  

\begin{table}[htp]
  \begin{center}
    \begin{tabular}{rrrrrrrrrrrrrrrrr}
      \hline
      k/n & 24 & 32 & 36 & 40 & 42 & 44 & 45 & 46 & 47 & 48 & 49 & 50 & 51 & 52 & 53 &  54 \\ 
      \hline
      1   &  1 &  1 &  0 &  1 &  0 &  0 &  0 &  0 &  0 &  1 &  0 &  0 &  0 &  0 &  0 &   0 \\ 
      2   &    &    &  1 &  1 &  0 &  2 &  0 &  0 &  0 &  3 &  0 &  0 &  0 &  3 &  0 &   0 \\
      3   &    &    &    &    &  1 &  1 &  0 &  2 &  0 &  4 &  0 &  3 &  0 &  6 &  0 &   8 \\ 
      4   &    &    &    &    &    &    &  1 &  1 &  2 &  4 &  1 &  4 &  5 & 15 &  5 &  23 \\
      5   &    &    &    &    &    &    &    &    &  1 &  4 &  1 &  6 &  5 & 30 & 15 &  92 \\
      6   &    &    &    &    &    &    &    &    &    &  1 &  1 &  2 &  5 & 21 & 29 & 160 \\
      7   &    &    &    &    &    &    &    &    &    &    &    &  1 &  1 &  4 &  7 &  58 \\
      8   &    &    &    &    &    &    &    &    &    &    &    &    &  1 &  0 &  0 &   1 \\                    
      \hline 
    \end{tabular}
    \caption{Number of $8$-divisible $\codeshorteff{n}{k}{2}$ codes with minimum distance $24$ -- part 1.}
    \label{tab_8div_p1}
  \end{center}
\end{table}

\begin{table}[htp]
  \begin{center}
    \begin{tabular}{rrrrrrrrrrrrrrrrr}
      \hline
      k/n &  55 &   56 &   57 &    58 &    59 &     60 &      61 &      62 \\ 
      \hline
      1   &   0 &    1 &    0 &     0 &     0 &      0 &       0 &       0 \\
      2   &   0 &    4 &    0 &     0 &     0 &      5 &       0 &       0 \\
      3   &   0 &   15 &    0 &    10 &     0 &     23 &       0 &      21 \\ 
      4   &  19 &   68 &   13 &    78 &    40 &    201 &      41 &     259 \\
      5   &  88 &  411 &  180 &   992 &   687 &   3384 &    1478 &    8040 \\
      6   & 303 & 1813 & 2026 & 11696 & 14870 &  83368 &         &         \\
      7   & 143 & 1493 & 3604 & 34945 & 93503 & 852947 &         &         \\
      8   &   4 &   55 &   61 &  1486 & 10971 & 376697 & 1900541 &         \\
      9   &     &    2 &    0 &     4 &    14 &    618 &   19362 & 2410702 \\
     10   &     &      &      &       &       &      6 &       8 &     682 \\
     11   &     &      &      &       &       &        &         &       3 \\                      
      \hline 
    \end{tabular}
    \caption{Number of $8$-divisible $\codeshorteff{n}{k}{2}$ codes with minimum distance $24$ -- part 2.}
    \label{tab_8div_p2}
  \end{center}
\end{table}

The computations were performed on a linux cluster of the university of Bayreuth set up in 2009. This elderly 
computing cluster consists of roughly 250 nodes with Intel Xeon E5 processors with 8 physical cores, 2.3 gigacycles, 
and 24 gigabyte RAM each. For our computations we could ran up to 400 jobs in parallel. The entire computation 
took less than a CPU year in total. 

\begin{table}[htp]
  \begin{center}
    \begin{tabular}{rrrrrrrrrrrrrrrrr}
      \hline
      k/n &     63 &    64 &   65 &    66 \\ 
      \hline
      1   &      0 &     1 &    0 &     0 \\
      2   &      0 &     6 &    0 &     0 \\
      3   &      0 &    41 &    0 &    25 \\ 
      4   &    108 &   557 &   84 &   644 \\
      5   &   4617 & 22267 & 8647 & 46571 \\
      6   &        &       &      &       \\
      7   &        &       &      &       \\
      8   &        &       &      &       \\
      9   &        &       &      &       \\
     10   & 978528 &       &      &       \\
     11   &     28 &704571 &      &       \\
     12   &      1 &     8 &  1 & \\ 
     13   &        &     1 &  0 & 0 \\                      
      \hline 
    \end{tabular}
    \caption{Number of $8$-divisible $\codeshorteff{n}{k}{2}$ codes with minimum distance $24$ -- part 3.}
    \label{tab_8div_p3}
  \end{center}
\end{table}

\begin{theorem}
  \label{thm_classification_2}
  If $C$ is an $8$-divisible $\codeeff{\le 65}{12}{24}{2}$ code, then $C$ is isomorphic to one of the following ten cases:
  \begin{enumerate}
    \item[(1)]
    $\codeeff{n}{k}{d}{q}=\codeeff{63}{12}{24}{2}$\\
    $\begin{pmatrix} 
    001100001110000001111101000011111110010010100100001100000000000\\
    101001111111000000110111010000100110100011011000000010000000000\\
    000100111011100011110111001000010000110000110110100001000000000\\
    010001111111110011001100001001100100010001101000001000100000000\\
    110001110000010111001111011000011100100011000010100000010000000\\
    000000011000110111100011010011101110010001011110000000001000000\\
    010011110001111101010000110100100011101110111111111000000100000\\
    001000110111101100001111110000000001100110000111100000000010000\\
    000111110001100011000000001100011111100001100001111000000001000\\
    000000001111100000111111111100000000011111100000011000000000100\\
    000000000000011111111111111100000000000000011111111000000000010\\
    000000000000000000000000000011111111111111111111111000000000001\\
    \end{pmatrix}$\\
    $W(z)=1z^{0}+630z^{24}+3087z^{32}+378z^{40}$\\
    $\# \operatorname{Aut}=362880$
    \item[(2)]
    $\codeeff{n}{k}{d}{q}=\codeeff{64}{12}{24}{2}$\\
    $\begin{pmatrix}
1000010101010101100101010101011010010101100101011010100101010111\\
0100010101010101011001100110100110100110100110010101011010011011\\
0010000000000000111100110011000011110011000000111100000000111101\\
0001000000000000111100110011110011001100000011000000111100110010\\
0000110000000000001100000000110011110011001100111111001111000000\\
0000001100000000001100000000111100110011110000110011110000111100\\
0000000011000000110000000000111111110000111111000011000011001100\\
0000000000110000000000000011111111001111001100000011111100001100\\
0000000000001100110000000000110011001100110000111111001100111100\\
0000000000000011000000110000111100001111110011000000001111111100\\
0000000000000000000011110000110011111100001100110000110011111100\\
0000000000000000000000001111000000110011111111001111001100111100\\
\end{pmatrix}$\\
    $W(z)=1z^{0}+496z^{24}+3102z^{32}+496z^{40}+1z^{64}$
\item[(3)]
$\codeeff{n}{k}{d}{q}=\codeeff{64}{12}{24}{2}$\\
    $\begin{pmatrix}
1000101010011011110100110000000000000000010101100000110101001101\\
0001000000011111101000100000110110001111000000001100000110010011\\
0000000011110111110110110000110110001111001101110000110011100001\\
0000111110010100110000100000110110001100001000100010010101010100\\
0001010011010000001111000000110110001101001100000001100101001010\\
0001100001101110000000100000000000000011010111101000001111011110\\
0101101001010101101010100000100101011101101100110000000000000000\\
0000000000111111000000001000110111110011011111111000000000000000\\
0000001111111100110000110100101010111111000000011000000000000000\\
0000110011110000110011000010111100101011111001100000000000000000\\
0000110000111111111100110001111011000100100001100000000000000000\\
0011111111111111111111110000000000000011011111111000000000000000\\
\end{pmatrix}$\\
    $W(z)=1z^{0}+496z^{24}+3102z^{32}+496z^{40}+1z^{64}$
\item[(4)]
$\codeeff{n}{k}{d}{q}=\codeeff{64}{12}{24}{2}$\\
    $\begin{pmatrix}
1000010101010101010101101010100101011010011001101001100110011011\\
0100010101011001010110010110010110100110101001010101011001101011\\
0010000000001100000011001111110011001111110000110000000000001101\\
0001000000001100000011111100000000000000111100110011001111001110\\
0000110000000000000000111111000000001100110011111100000011111100\\
0000001100001100000000000000110011001100111111110011111100000000\\
0000000011001100000000110011110000111111110011001100001100000000\\
0000000000110000000011110011110000000000000011111111111111000000\\
0000000000000011000011000000000011110011110011001111111100110000\\
0000000000000000110000001100110000001111001111001100111111110000\\
0000000000000000001100000011110000110000111111000011001111111100\\
0000000000000000000000000000001111111111111100111100001111001100\\
\end{pmatrix}$\\
    $W(z)=1z^{0}+496z^{24}+3102z^{32}+496z^{40}+1z^{64}$
\item[(5)]
$\codeeff{n}{k}{d}{q}=\codeeff{64}{12}{24}{2}$\\
    $\begin{pmatrix}
1001001100101000010111111010100111001011000000000111101111111000\\
0110001011001000000011110011110011000100000000000111110100010000\\
0000101100110001100110000000000000110110000010101111011000011010\\
0000001010101000110101111111000000010100000010101000110000010011\\
0000000001111000000110110000111101110010000000000111101010000110\\
0000011111111000101000001111111100000101000000000000010100110000\\
0000000000000101101010100101001101100110000010011101101010101000\\
0000000000000000111111011100011000110011100011100101111000000000\\
0000000000000000110011110000111111000000010011001011011111100000\\
0000000000000000111100001111111100000000001011111010100110011000\\
0000000000000000110011111111000011001111000110110110000110000000\\
0000000000000011111111001111111111111111000000000000011111111000\\
\end{pmatrix}$\\
    $W(z)=1z^{0}+528z^{24}+3038z^{32}+528z^{40}+1z^{64}$\\ 
    non-projective
\item[(6)]
$\codeeff{n}{k}{d}{q}=\codeeff{64}{12}{24}{2}$\\
    $\begin{pmatrix}
1000011011010011000001010110011000101001100000100001000001100110\\
0001100111110110000000000100000001100000001100011101001100111001\\
0000101011110100001000010101001000010001111001000011000000110011\\
0000111110100001000010100011001001110001100001000000100101010110\\
0001001111111011011010110011001000010101110111000000011001011001\\
0001001011011011000010010100000000000110011000010000000011111111\\
0101010101011010010001010010001110100101110101101000000000000000\\
0000110011001100100011110111010101101001111001000000000000000000\\
0000110000111100011100000111010010010111111000111000000000000000\\
0000111111111111011011001001001100011000100000100000000000000000\\
0000001111001100011000110000111010011000111111011000000000000000\\
0011111111111111011011110110000001100110011110011000000000000000\\
\end{pmatrix}$\\
    $W(z)=1z^{0}+502z^{24}+3087z^{32}+506z^{40}$
\item[(7)]
$\codeeff{n}{k}{d}{q}=\codeeff{64}{12}{24}{2}$\\
    $\begin{pmatrix}
1000110000011001010001010110000011110010000001001011010001100001\\
0000110000011110101001010001111100110010111110110100101101100001\\
0000101010111100010011000110011011100010000110011011100110110111\\
0000000010001011011110011011101101110000001110101101101101101100\\
0000000010001011011001100011101101110000010001001110010010010000\\
0000000010100100111010001110011011101000001000110101001000100100\\
0000110000000110110010001001111011000101010001001101001010010000\\
0100111010001100010001001100111000110000001001011001010010000100\\
0010100010001100010010001101000011110110000110011001001001001000\\
0001111000000001111011011000000000000110000011011001111011000000\\
0000000110011111100011000110011011000000000011011000011011000000\\
0000000001111000000000011110011011110000001111000001100011001100\\
\end{pmatrix}$\\
    $W(z)=1z^{0}+496z^{24}+3102z^{32}+496z^{40}+1z^{64}$
\item[(8)]
$\codeeff{n}{k}{d}{q}=\codeeff{64}{12}{24}{2}$\\
    $\begin{pmatrix}
1000000100111001010000110001011010010100100100001111010110000000\\
0000111100000101011000110001011010100110111110010101111101001010\\
0000000000000000110001010101011100100100010110101101010011011010\\
0100110100110011000000000000000000001110011001101100101011010101\\
0000001101010011000101010100000101100100011000110100000110010011\\
0000000000000000000001101101100101110010011010011011001110010110\\
0000001101010011000001100000110011110001010111111011100100111111\\
0010010000111010000001100000110011011000100100110001010101001001\\
0001100000110101000001100000110011100100011000110000101101000110\\
0000000011111111000000000000000000000000111111110000000111101111\\
0000000000000000000011110000001111111100000011110000000111101111\\
0000000000000000000000000011111111000000111111110001111000001111\\
\end{pmatrix}$\\
    $W(z)=1z^{0}+496z^{24}+3102z^{32}+496z^{40}+1z^{64}$
\item[(9)]
$\codeeff{n}{k}{d}{q}=\codeeff{64}{12}{24}{2}$\\
    $\begin{pmatrix}
1000000010111010000111000000101100000000111000101001010101000111\\
0010101000001011111110100000101111001101110000110110010110111000\\
0000000000110010101100011000011001100110110000000000011001011111\\
0000100010101001011010000100100010001001010100100011010001110101\\
0010000010011001010001001010100101100010000001111000110110000110\\
0000001010110011011110110001100010111111010010111011000010101100\\
0110000000011000000110011000000000000110011001111000001111110011\\
0001100000011000011000011000000111100111100110011000001100000011\\
0000011000011001100000011000000001100000011110011000111100111100\\
0000000110011001111111100000000110000000011001111000000000111100\\
0000000001111000000001111000000111111110000000000000111111110000\\
0000000000000111111111111000000111111111111000000000000000000000\\
\end{pmatrix}$\\
    $W(z)=1z^{0}+496z^{24}+3102z^{32}+496z^{40}+1z^{64}$
    \item[(10)]
    $\codeeff{n}{k}{d}{q}=\codeeff{65}{12}{24}{2}$\\
    $\begin{pmatrix}  
    10000100000000110110010001110100111101010001011110010100000000000\\
    10100100011000001001000110100110111111001000001100011010000000000\\
    01000010011100011000000100110100110000011111011110001001000000000\\
    11110100001110110100000011010110100001011100000100001000100000000\\
    01101011000001100011010001000011001010001111000010111000010000000\\
    00101001110111101011000001011000000110111001001000100000001000000\\
    00011000111111100000111110001000100010001010101001100000000100000\\
    00000111001011100111110001010100000001111001100000011000000010000\\
    00011111000111100000001111001101111110000111100111111000000001000\\
    00000000111111100000000000111100011110000000011111111000000000100\\
    00000000000000011111111111111100000001111111111111111000000000010\\
    00000000000000000000000000000011111111111111111111111000000000001\\
    \end{pmatrix}$\\
    $W(z)=1z^{0}+390z^{24}+3055z^{32}+650z^{40}$\\
    $\# \operatorname{Aut}=15600$
  \end{enumerate}
  There is a unique $8$-divisible $\codeeff{\le 66}{13}{24}{2}$ code, see the $\codeeff{64}{13}{24}{2}$ code at the beginning of Section~\ref{sec_computer_classification}.   
  No $8$-divisible $\codeeff{\le 67}{\ge 14}{24}{2}$ code exists.
\end{theorem}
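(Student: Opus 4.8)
The plan is to deduce the theorem from the exhaustive enumeration of Section~\ref{sec_computer_classification}. Call a binary linear code \emph{admissible} if it is $8$-divisible with all nonzero weights in $\{24,32,40,48,56,64\}$; note that an $8$-divisible code with minimum distance at least $24$ and effective length at most $67$ is automatically admissible, since $8\mid w$ and $w\le 67$ force $w\le 64$. Two reductions make the problem finite. First, the dimension of a code never exceeds its effective length, so all codes in question have dimension at most $67$. Second, a subcode of an admissible code is again admissible (the set of occurring nonzero weights only shrinks) and has no larger effective length (a coordinate that is identically zero on the ambient code is identically zero on the subcode); hence, once it is shown that no admissible $[\underline{\le 67},14]_2$ code exists, the same holds for every dimension $k\ge 14$, because such a code would contain a $14$-dimensional admissible subcode of effective length at most $67$.

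First I would run the generation procedure of Section~\ref{sec_computer_classification}: the six $1$-dimensional admissible codes have effective lengths $24,32,40,48,56,64$ (first row of Table~\ref{tab_8div_p1}), and one repeatedly applies Lemma~\ref{lemma_ILP} with $\Delta=8$, $a=3$, $b=8$ to pass from admissible $[\underline{n},k]_2$ codes to admissible $[\underline{n'},k+1]_2$ codes with $n'\le 67$, enumerating $\mathcal{S}(G)$ by the LLL lattice point enumeration of \cite{wassermann2002attacking} and rejecting isomorphic copies with \texttt{Q-Extension} \cite{bouyukliev2007q} and \texttt{CodeCan} \cite{feulner2009automorphism}. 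Performed for dimensions $1$ up to $13$, this reproduces Tables~\ref{tab_8div_p1}--\ref{tab_8div_p3}. In particular the admissible $[\underline{\le 65},12]_2$ codes are the $1+8+1=10$ codes of effective lengths $63,64,65$, and one checks that the displayed generator matrices produce pairwise non-isomorphic codes with the stated weight enumerators, automorphism orders, and projectivity -- a genuine isomorphism test is essential here because several of the ten share a weight enumerator. Likewise the only admissible $[\underline{\le 66},13]_2$ code is the $[\underline{64},13]_2$ code of Section~\ref{sec_computer_classification}. That these codes all have minimum distance exactly $24$ rather than $32$ is consistent with the Griesmer bound, which forces any $12$-, $13$-, or $14$-dimensional binary code with minimum distance at least $32$ to have length at least $69$, $70$, or $71$, respectively, hence above the ranges considered.

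It remains to treat dimension $14$. Let $C$ be an admissible $[\underline{n},14]_2$ code with $n\le 67$. Fix a coordinate in the support of $C$; the subcode formed by the codewords vanishing there is $13$-dimensional, admissible, and of effective length at most $n-1\le 66$, hence -- by the previous paragraph -- isomorphic to the $[\underline{64},13]_2$ code. In particular $n\ge 65$, and, after row operations and a column permutation, $C$ has a systematic generator matrix whose first $13$ rows generate the $[\underline{64},13]_2$ code. Running Lemma~\ref{lemma_ILP} once more with this code as the subcode and $n'\in\{65,66,67\}$, the polyhedron $\mathcal{S}(G)$ has no integral point corresponding to an admissible $14$-dimensional code, so no admissible $[\underline{\le 67},14]_2$ code exists. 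By the monotonicity reduction this rules out admissible codes of every dimension $\ge 14$ with effective length at most $67$, which completes the proof.

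The obstacle is computational rather than structural. The intermediate families are enormous -- recall the $1452663$ $4$-divisible $[\underline{28},7]_2$ codes of the cross-check and the hundreds of thousands to millions of $8$-divisible codes arising in dimensions $7$--$10$ at lengths near $62$--$64$ -- so the classification must be heavily parallelised, and one must be sure both that truncating at effective length $67$ loses nothing (which is exactly what the monotonicity reduction together with the above Griesmer bounds guarantees) and that the isomorph rejection is exact. The latter is supported by carrying out the classification with two independent packages and by the consistency checks of Section~\ref{sec_computer_classification}: the $4$-divisible counts of Miller et al.\ \cite{doran2011codes} are recovered, and the enumeration restricted to weights $\{24,32,40\}$ agrees with the one allowing weights $\{24,32,40,48,56,64\}$ wherever the two overlap.
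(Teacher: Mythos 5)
Your proposal is correct and follows essentially the same route as the paper: Theorem~\ref{thm_classification_2} is established there precisely by the exhaustive computer classification of Section~\ref{sec_computer_classification} (Lemma~\ref{lemma_ILP} with lattice point enumeration plus isomorph rejection, yielding Tables~\ref{tab_8div_p1}--\ref{tab_8div_p3}), with the dimension-$\ge 14$ case handled by extending the unique $13$-dimensional code exactly as you describe. Your added remarks (the reduction of $k\ge 14$ to $k=14$ via subcodes, and the Griesmer-bound aside ensuring minimum distance exactly $24$) are consistent with, and only make explicit, what the paper's computational argument already uses.
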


For some parameters $n$ and $k$ there exists a unique code that eventually admits an easy description. We give a few examples. For dimensions $1\le k\le 3$ the $8$-divisible optimal codes 
are more or less trivial. The $\codeeff{45}{4}{24}{2}$ is given by the points of a solid. The $\codeeff{51}{8}{24}{2}$ code is obtained via the concatenation of an ovoid in 
$\operatorname{PG}(3,\mathbb{F}_4)$ with the binary $[3,2]$ simplex code \cite[Lemma 24]{honold2018partial}. Note that this code is a two-weight code with weights $24$ and $32$.

In some cases the $8$-divisible codes attain the maximal possible minimum distance $d=24$ for $\codeshorteff{n}{k}{2}$ codes. In Table~\ref{tab_opt} we list for dimensions 
$k\le 13$ the lengths $n$ and the corresponding counts for which the maximum, using the bounds from \url{www.codetables.de} \cite{Grassl:codetables}, is attained. 
We remark that, according to those tables, for $\codeshorteff{61}{11}{2}$ codes it is unknown whether minimum distance $25$ can be achieved. Similarly, for $\codeshorteff{63}{12}{2}$ it is unknown whether 
the minimum distance $25$ or $26$ can be attained. In Section~\ref{sec_class_optimal} in the appendix we completely list the generator matrices and key parameters of the corresponding codes. 
We remark that if a linear code over $\mathbb{F}_q$ meets the Griesmer bound and the minimum distance is divisible by $q^r$, where $r\in\mathbb{Q}$, then the weight of each 
codeword is divisible by $q^r$, see \cite[Theorem 1]{ward1998divisibility}.  

\begin{proposition}
  \begin{enumerate}
    \item[(1)] Every $\codeeff{\le 62}{k}{\{24,32\}}{2}$ code satisfies $k\le 8$. The counts for dimension $k=8$ are given by $\codeshorteff{51}{8}{2}$: 1, $\codeshorteff{54}{8}{2}$: 1, 
               $\codeshorteff{55}{8}{2}$: 2, $\codeshorteff{56}{8}{2}$: 3,$\codeshorteff{57}{8}{2}$: 11, $\codeshorteff{58}{8}{2}$: 13, $\codeshorteff{59}{8}{2}$: 33,  
               and $\codeshorteff{60}{8}{2}$: 12. 
    \item[(2)] Every $\codeeff{\le 63}{k}{\{24,32,56\}}{2}$ code satisfies $k\le 9$. For dimension $k=9$ there exist only two non-isomorphic $\codeeff{56}{9}{\{24,32,56\}}{2}$ codes, which 
               both contain a unique codeword of weight $56$.
  \end{enumerate}  
\end{proposition}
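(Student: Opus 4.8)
The plan is to obtain both statements from the same building-up classification that produced Tables~\ref{tab_8div_p1}--\ref{tab_8div_p3}, but run with the admissible weight set shrunk to $\{24,32\}$ for part~(1) and to $\{24,32,56\}$ for part~(2). For part~(1) no new machinery is needed: in Lemma~\ref{lemma_ILP} one takes $\Delta=8$, $a=3$, $b=4$, so that \emph{every} feasible point of $\mathcal{S}(G)$ already describes a code whose nonzero weights lie in $\{24,32\}$, and the LLL lattice point enumeration can be followed directly by isomorph rejection with \texttt{Q-Extension} or \texttt{CodeCan}. For part~(2) one runs Lemma~\ref{lemma_ILP} with $a=3$, $b=7$ and, after the enumeration, deletes every code that still contains a codeword of weight $40$ or $48$. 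In both cases the admissible polyhedron, hence the search tree, is vastly smaller than in the unrestricted problem, so the computation stays feasible well beyond the entries present in Tables~\ref{tab_8div_p1}--\ref{tab_8div_p3}; wherever those tables do give a value, the restricted count is at most that value, which is a built-in cross-check.

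For part~(1) the dimension bound is in fact forced already by the first three MacWilliams identities. Writing $A=a_{24}$, $B=a_{32}$ and $Q=\sum_i\mu_i^2$, where the $\mu_i$ are the multiplicities of the points in the $n$-element multiset associated with a code $C$ of effective length $n$, the relations $1+A+B=2^k$, $24A+32B=n\cdot 2^{k-1}$ and $576A+1024B=2^{k-2}(n^2+Q)$ combine to $Q=112n-3072+3072\cdot 2^{-k}-n^2$. Since $Q\in\mathbb{Z}$ this forces $2^k\mid 3072=2^{10}\cdot 3$, hence $k\le 10$; since $Q-n=\sum_i\mu_i(\mu_i-1)$ is even while the displayed formula makes $Q-n$ odd for $k=10$, we even get $k\le 9$; and since $Q\ge n$ always holds, the same formula forces $n\in\{52,\dots,59\}$ whenever $k=9$. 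It therefore suffices to push the restricted enumeration up to dimension $9$ and length $59$: it returns no $\{24,32\}$-code of dimension $9$, which gives $k\le 8$ for every length, and along the way it produces the stated counts for $k=8$.

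For part~(2) the extra ingredient is the effect of a codeword $c$ of weight $56$. Set $S=\operatorname{supp}(c)$ and let $T$ be its complement, so $|T|=n-56\le 7$. For every codeword $v$ one has $\hweight(v)+\hweight(v+c)=56+2\hweight_T(v)$, and since $0\le\hweight_T(v)\le 7$ and both weights lie in $\{0,24,32,56\}$, this sum lies in $[56,70]$ and is therefore $56$ or $64$; hence $\hweight_T(v)\in\{0,4\}$, and $\hweight_T(v)=4$ forces $\hweight(v)=32$. Thus the image of $C$ under restriction to $T$ is a linear code all of whose nonzero words have weight $4$ --- the zero code when $|T|\le 3$, and of dimension at most $3$ (the $[7,3]$ simplex code, up to zero coordinates) in general. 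When $|T|\le 3$ the coordinates of $T$ are identically zero, so $n=56$ and $c=\mathbf{1}_{56}$; choosing a vector-space complement of $\langle\mathbf{1}_{56}\rangle$ in $C$ yields $C=\langle\mathbf{1}_{56}\rangle\oplus C''$, and since $\mathbf{1}_{56}$ is the unique weight-$56$ codeword of $C$ (any weight-$56$ word $w$ satisfies $\hweight(w+\mathbf{1}_{56})=0$), the code $C''$ is a $\codeeff{\le 56}{k-1}{\{24,32\}}{2}$ code; part~(1) then gives $k-1\le 8$, so $k\le 9$. If $C$ has no weight-$56$ codeword it is a $\codeeff{\le 63}{k}{\{24,32\}}{2}$ code, hence $k\le 8$ by the argument of part~(1), which bounds the dimension by $8$ for every length. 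The only remaining codes are those carrying a weight-$56$ word with $n\in\{60,\dots,63\}$; here one splits on $|T|$ and on the $[7,3]$-simplex-type image and finishes with the enumeration, which confirms $k\le 9$ throughout and that dimension $9$ is attained exactly by the two codes of effective length $56$ described above, each containing a unique codeword of weight $56$.

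The hardest point is organisational rather than conceptual: one must ensure that, before any code of effective length $\le 62$ (resp.\ $63$) and dimension $k+1$ is produced, \emph{all} codes of the relevant restricted type of strictly smaller dimension and of effective length at most that bound have genuinely been generated, and that the isomorph rejection is exhaustive; the weight restriction keeps the numbers small, but verifying this and reconciling it with Tables~\ref{tab_8div_p1}--\ref{tab_8div_p3} is where the real work lies. The one range that genuinely resists a clean argument is $n\in\{60,61,62,63\}$ in part~(2) in the presence of a weight-$56$ codeword, where the implication ``a weight-$56$ word exists $\Rightarrow n=56$'' breaks down and one must fall back on the enumeration (or a finer case distinction on $|T|$) to exclude the remaining configurations.
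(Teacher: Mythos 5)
Your proposal is correct and follows essentially the same route as the paper: the proposition is a product of the build-up classification of Section~\ref{sec_computer_classification} (Lemma~\ref{lemma_ILP} run with the admissible weight set restricted to $\{24,32\}$, respectively the generate-then-filter variant for $\{24,32,56\}$) followed by isomorph rejection, and your supplementary reductions --- the first three MacWilliams identities giving $Q=112n-3072+3072\cdot 2^{-k}-n^2$, hence $k\le 9$ and $52\le n\le 59$ for two-weight codes, and the support-complement argument showing that a weight-$56$ codeword forces either $n\ge 60$ or $n=56$ with $C=\langle\mathbf{1}\rangle\oplus C''$ for a $\{24,32\}$-code $C''$ --- check out and parallel the lemmas of Wahl and of Pignatelli--Tonoli that the paper cites for exactly these weight sets. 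The points you leave to the computer (the exact dimension-$8$ counts, the two dimension-$9$ codes, and the range $60\le n\le 63$ in part~(2)) are precisely the points the paper also settles only computationally.
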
 

In \cite[Lemma 2.2]{pignatelli2007wahl} it has been proven that each $\codeeff{\le 67}{k}{\{24,32,56\}}{2}$ code has dimension $k\le 10$, see 
also \cite[Lemma 2.1]{pignatelli2007wahl} and \cite[Lemma 2.6]{wahl1998nodes} for the two-weight code case $W=\{24,32\}$. 

\begin{table}[htp]
  \begin{center}
    \begin{tabular}{rl}
      \hline
      k & n \\
      \hline
      1 & 24:1\\ 
      2 & 36:1\\
      3 & 42:1, 44:1\\ 
      4 & 45:1, 46:1, 47:2, 48:4\\
      5 & 47:1, 48:4, 49:1, 50:6\\
      6 & 48:1, 49:1, 50:2, 51:5\\
      7 & 50:1, 51:1, 52:4, 53:7, 54:58\\
      8 & 51:1, 54:1, 55:4, 56:55\\
      9 & 56:2 \\ 
      \hline 
    \end{tabular}
    \caption{Number of optimal $8$-divisible codes per dimension and length.}
    \label{tab_opt}
  \end{center}
\end{table}

While the possible lengths of $q^r$-divisible linear codes over $\mathbb{F}_q$ have been completely characterized in \cite[Theorem 4]{kiermaier2017improvement}, see also 
Section~\ref{sec_theoretical}, the problem becomes harder if one restricts to projective codes or prescribes the dimension. A few partial results in that direction have 
been obtained in \cite{ubt_eref40887,honold2018partial}. An upper bound on the maximum possible dimension of a $\Delta$-divisible linear code was proven in \cite{ward1992bound}.

\section{Codes of nodal surfaces}
\label{sec_nodal_surfaces}

The codes of nodal surfaces with degree $s$ and the maximum number $m=\mu(s)$ of nodes are more or less trivial for $s\le 5$. For $s=3$ the code is a $\codeeff{4}{1}{4}{2}$ code and spanned 
by a single codeword of weight $4$. For $s=4$ the code is a $\codeeff{16}{5}{8}{2}$ code with weight enumerator $W(z)=1z^{0}+30z^{8}+1z^{16}$, which corresponds to the points of an 
affine solid. For $s=5$ the code is a $\codeeff{31}{5}{16}{2}$ code with weight enumerator $W(z)=1z^{0}+31z^{16}$, which corresponds to the points of $\mathbb{F}_2^5$, i.e., the simplex 
code $\mathcal{S}(5)$. The situation changes for $s=6$. From a general upper bound $m=\mu(6)\le 66$ can be concluded. The dimension argument mentioned in the introduction gives 
$k\ge m-53$, i.e., $k\ge 13$ for $m=66$ and $k\ge 12$ for $m=65$. The codes of sextics, i.e., nodal surfaces of degree $s=6$ have a minimum distance $d\ge 24$ and are $8$-divisible. 
In \cite[Section 7]{jaffe1997sextic} it is shown that there is no codeword of weight $48$. A codeword of weight $64$ can only be contained if the dimension of the code is 
$k=11$, see \cite[Section 9]{jaffe1997sextic}. So, for $m\in\{65,66\}$ there cannot be a codeword of weight $64$. In \cite[Theorem 1.6]{catanese2005even} it is shown that there is no 
codeword of weight $64$ in a code corresponding to a sextic normal surface with only rational double points as singularities. Thus, for $m\ge 65$ the weights are contained in 
$\{24,32,40,56\}$. For every weight $w\in \{24,32,40,56\}$ there is a sextic whose corresponding code contains a codeword of weight $w$, see \cite{catanese2005even}. 
Obviously, each $\codeeff{n}{k}{24}{2}$ code with at least two codewords of weight $56$, i.e., $a_{56}\ge 2$, satisfies $n\ge 56+24/2=68$. Thus, in order to classify the 
$\codeeff{n}{\ge 12}{\{24,32,40,56\}}{2}$ codes, it satisfies to classify the $\codeeff{n}{\ge 11}{\{24,32,40\}}{2}$ codes and to eventually enlarge them with a unique codeword  
of weight $56$. Using the algorithmic approach presented in Section~\ref{sec_computer_classification} we obtain the counts stated in Table~\ref{tab_24_32_40_p1} and Table~\ref{tab_24_32_40_p2}.   

\begin{table}[htp]
  \begin{center}
    \begin{tabular}{rrrrrrrrrrrrrrrrr}
      \hline
      k/n & 24 & 32 & 36 & 40 & 42 & 44 & 45 & 46 & 47 & 48 & 49 & 50 & 51 & 52 & 53 &  54 \\ 
      \hline
      1   &  1 &  1 &  0 &  1 &  0 &  0 &  0 &  0 &  0 &  0 &  0 &  0 &  0 &  0 &  0 &   0 \\ 
      2   &    &    &  1 &  1 &  0 &  2 &  0 &  0 &  0 &  2 &  0 &  0 &  0 &  2 &  0 &   0 \\
      3   &    &    &    &    &  1 &  1 &  0 &  2 &  0 &  3 &  0 &  3 &  0 &  5 &  0 &   6 \\ 
      4   &    &    &    &    &    &    &  1 &  1 &  2 &  3 &  1 &  4 &  5 & 13 &  5 &  20 \\
      5   &    &    &    &    &    &    &    &    &  1 &  3 &  1 &  6 &  5 & 28 & 15 &  85 \\
      6   &    &    &    &    &    &    &    &    &    &  1 &  1 &  2 &  5 & 20 & 29 & 153 \\
      7   &    &    &    &    &    &    &    &    &    &    &    &  1 &  1 &  4 &  7 &  54 \\
      8   &    &    &    &    &    &    &    &    &    &    &    &    &  1 &  0 &  0 &   1 \\                    
      \hline 
    \end{tabular}
    \caption{Number of $\codeshorteff{n}{k}{2}$ codes with weights in $\{24,32,40\}$ -- part 1.}
    \label{tab_24_32_40_p1}
  \end{center}
\end{table}

\begin{table}[htp]
  \begin{center}
    \begin{tabular}{rrrrrrrrrrrrrrrrr}
      \hline
      k/n &  55 &   56 &   57 &    58 &    59 &     60 &      61 &      62 &    63 & 64 \\ 
      \hline
      3   &   0 &    7 &      &       &       &        &         &         &       &    \\ 
      4   &  16 &   43 &   13 &       &       &        &         &         &       &    \\
      5   &  80 &  321 &  180 &   784 &       &        &         &         &       &    \\
      6   & 286 & 1557 & 2026 & 10360 & 14011 &        &         &         &       &    \\
      7   & 130 & 1176 & 3604 & 31470 & 91163 & 650496 &         &         &       &    \\
      8   &   3 &   17 &   61 &  1127 & 10631 & 247845 & 1818544 &         &       &    \\
      9   &     &      &      &     3 &    14 &    400 &   18024 & 1270327 &       &    \\
     10   &     &      &      &       &       &      3 &       7 &     394 & 77954 &    \\
     11   &     &      &      &       &       &        &         &       1 &     9 & 47 \\                      
      \hline 
    \end{tabular}
    \caption{Number of $\codeshorteff{n}{k}{2}$ codes with weights in $\{24,32,40\}$ -- part 2.}
    \label{tab_24_32_40_p2}
  \end{center}
\end{table}
 
We remark that no $11$-dimension binary linear code with weights in $\{24,32,40\}$ can be extended with a codeword of weight $56$. 
Computing the $12$- and $13$-dimensional binary linear code with weights in $\{24,32,40\}$ we can state:  

\begin{theorem}
  \label{thm_classification_1}
  If $C$ is a $\codeeff{\le 65}{12}{\{24,32,40,56\}}{2}$ code, then $C$ is isomorphic to one of the following three cases:
  \begin{enumerate}
    \item[(1)]
    $\codeeff{n}{k}{d}{q}=\codeeff{63}{12}{24}{2}$\\
    $\begin{pmatrix} 
    001100001110000001111101000011111110010010100100001100000000000\\
    101001111111000000110111010000100110100011011000000010000000000\\
    000100111011100011110111001000010000110000110110100001000000000\\
    010001111111110011001100001001100100010001101000001000100000000\\
    110001110000010111001111011000011100100011000010100000010000000\\
    000000011000110111100011010011101110010001011110000000001000000\\
    010011110001111101010000110100100011101110111111111000000100000\\
    001000110111101100001111110000000001100110000111100000000010000\\
    000111110001100011000000001100011111100001100001111000000001000\\
    000000001111100000111111111100000000011111100000011000000000100\\
    000000000000011111111111111100000000000000011111111000000000010\\
    000000000000000000000000000011111111111111111111111000000000001\\
    \end{pmatrix}$\\
    $W(z)=1z^{0}+630z^{24}+3087z^{32}+378z^{40}$\\
    $\# \operatorname{Aut}=362880$
    \item[(2)]
    $\codeeff{n}{k}{d}{q}=\codeeff{64}{12}{24}{2}$\\
    $\begin{pmatrix}
    0000110001101110000100100100100011011000011011011110100000000000\\
    1011110000100110010000001100010000111101001110111000010000000000\\
    1010110001001010110010000000101111110000001100101011001000000000\\
    1111100000001100000010100100111101000011011011101000000100000000\\
    0111000000001010110110001100011000000110111100110011000010000000\\
    0000000100001001111110011010010101001101010101010101000001000000\\
    0101011111010000010001111001110011000100100000101100000000100000\\
    0011010011001000001111111001111111011111010011011100000000010000\\
    0000101111000110000000000111101111000011001111000011000000001000\\
    0000011111000001111111111111100000111111000000111111000000000100\\
    0000000000111111111111111111100000000000111111111111000000000010\\
    0000000000000000000000000000011111111111111111111111000000000001\\
    \end{pmatrix}$\\
    $W(z)=1z^{0}+502z^{24}+3087z^{32}+506z^{40}$\\
    $\# \operatorname{Aut}=5760$
    \item[(3)]
    $\codeeff{n}{k}{d}{q}=\codeeff{65}{12}{24}{2}$\\
    $\begin{pmatrix}  
    10000100000000110110010001110100111101010001011110010100000000000\\
    10100100011000001001000110100110111111001000001100011010000000000\\
    01000010011100011000000100110100110000011111011110001001000000000\\
    11110100001110110100000011010110100001011100000100001000100000000\\
    01101011000001100011010001000011001010001111000010111000010000000\\
    00101001110111101011000001011000000110111001001000100000001000000\\
    00011000111111100000111110001000100010001010101001100000000100000\\
    00000111001011100111110001010100000001111001100000011000000010000\\
    00011111000111100000001111001101111110000111100111111000000001000\\
    00000000111111100000000000111100011110000000011111111000000000100\\
    00000000000000011111111111111100000001111111111111111000000000010\\
    00000000000000000000000000000011111111111111111111111000000000001\\
    \end{pmatrix}$\\
    $W(z)=1z^{0}+390z^{24}+3055z^{32}+650z^{40}$\\
    $\# \operatorname{Aut}=15600$
  \end{enumerate}
  No $\codeeff{\le 66}{\ge 13}{\{24,32,40,56\}}{2}$ code exists.
\end{theorem}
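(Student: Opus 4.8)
The plan is to deduce the stated classification from Theorem~\ref{thm_classification_2} by a short reduction on the minimum distance, followed by a comparison of weight enumerators. The starting point is that every $\codeeff{\le 65}{12}{\{24,32,40,56\}}{2}$ code $C$ automatically has minimum distance exactly $24$: all of its nonzero weights lie in $\{24,32,40,56\}$, so they are divisible by $8$ and at least $24$; in particular $C$ is $8$-divisible and its minimum distance is one of $24,32,40,56$. If that minimum distance were at least $32$, the Griesmer bound would force $\neff\ge 32+16+8+4+2+1+1+1+1+1+1+1=69$, contradicting $\neff\le 65$. Hence $C$ has minimum distance $24$, i.e.\ $C$ is an $8$-divisible $\codeeff{\le 65}{12}{24}{2}$ code, and Theorem~\ref{thm_classification_2} applies: $C$ is isomorphic to one of the ten codes listed there.

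I would then simply check which of those ten codes have all weights in $\{24,32,40,56\}$. Seven of the weight enumerators in Theorem~\ref{thm_classification_2}, namely those of cases (2), (3), (4), (5), (7), (8) and (9), contain the term $1z^{64}$, i.e.\ a codeword of weight $64\notin\{24,32,40,56\}$, and are ruled out. The remaining three, cases (1), (6) and (10) with weight enumerators $1z^{0}+630z^{24}+3087z^{32}+378z^{40}$, $1z^{0}+502z^{24}+3087z^{32}+506z^{40}$ and $1z^{0}+390z^{24}+3055z^{32}+650z^{40}$, have all weights in $\{24,32,40\}\subseteq\{24,32,40,56\}$ and are exactly the three codes in the statement. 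For the non-existence assertion I would argue identically: a $\codeeff{\le 66}{\ge 13}{\{24,32,40,56\}}{2}$ code is $8$-divisible, and since the Griesmer bound for minimum distance $32$ gives $\neff\ge 70$ in dimension $\ge 13$ (minimum distance $40$ being even more restrictive), such a code has minimum distance $24$. In dimension $13$ it would thus be an $8$-divisible $\codeeff{\le 66}{13}{24}{2}$ code, which by Theorem~\ref{thm_classification_2} is the $\codeeff{64}{13}{24}{2}$ code of Section~\ref{sec_computer_classification}; but that code has weight enumerator $1z^{0}+1008z^{24}+6174z^{32}+1008z^{40}+1z^{64}$ and hence a codeword of weight $64$, a contradiction. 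In dimension at least $14$ it would be an $8$-divisible $\codeeff{\le 67}{\ge 14}{24}{2}$ code, and Theorem~\ref{thm_classification_2} states that none exists.

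A self-contained variant, matching the narrative preceding the statement, would instead proceed without Theorem~\ref{thm_classification_2}. One first notes that effective length at most $67$ forces $a_{56}\le 1$, since two distinct weight-$56$ codewords together occupy at least $56+24/2=68$ coordinates. If the unique weight-$56$ codeword $c$ occurs, an $11$-dimensional subcode not containing $c$ has all nonzero weights in $\{24,32,40\}$, and $C$ is obtained from it by appending $c$; the computations show that no $11$-dimensional $\{24,32,40\}$-code of the relevant effective length admits such an extension. Hence every $C$ under consideration has all nonzero weights in $\{24,32,40\}$, and one finishes by classifying the $12$- and $13$-dimensional binary linear codes with weights in $\{24,32,40\}$ and effective length at most $65$, respectively $66$, via one further round of the generate-and-filter procedure of Lemma~\ref{lemma_ILP} applied to the $11$-dimensional codes of Tables~\ref{tab_24_32_40_p1}--\ref{tab_24_32_40_p2}; this yields precisely the three codes above in dimension $12$ and nothing in dimension $13$.

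In both routes the conceptual content is light: the Griesmer obstruction forcing the minimum distance to be $24$, together with the length bound $a_{56}\le 1$. The main obstacle is computational, namely either establishing Theorem~\ref{thm_classification_2} itself, or performing the dimension-$12$ and dimension-$13$ extensions of the $\{24,32,40\}$ enumeration and the attendant isomorphism filtering. Granting Theorem~\ref{thm_classification_2}, the proof reduces to the weight-enumerator bookkeeping sketched above.
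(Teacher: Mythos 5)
Your proposal is correct and essentially coincides with the paper's treatment: the paper proves this theorem by the computational classification of the $\codeeff{n}{\ge 11}{\{24,32,40\}}{2}$ codes combined with the observation that $a_{56}\ge 2$ forces $n\ge 68$ and that no $11$-dimensional $\{24,32,40\}$-code extends by a weight-$56$ word (your second route), and it explicitly remarks that the statement is also implied by Theorem~\ref{thm_classification_2} (your first route). Your Griesmer-bound argument pinning the minimum distance to $24$, and the weight-enumerator bookkeeping identifying the three surviving codes, merely make explicit what the paper's ``of course'' leaves implicit.
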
  
Of course Theorem~\ref{thm_classification_1} is implied by Theorem~\ref{thm_classification_2}. Thus, we can also allow codewords of 
weight $48$ without changing the result of Theorem~\ref{thm_classification_1}.

So, we have computationally reproven $\mu(6)<66$, c.f.\ \cite{jaffe1997sextic}. More precisely, \cite[Theorem 8.1]{jaffe1997sextic} and 
\cite[Theorem A]{pignatelli2007wahl} show that no $\codeeff{\le 66}{13}{\{24,32,40,56\}}{2}$ code exists. For $m=65$ nodes we have extracted an exhaustive list of three possible 
candidates of codes. 
Having our 
classification at hand it is pretty easy to determine the corresponding code since number (3) is the unique code that admits an automorphism of 
order $5$ without a fixed point - a property that also applies to the Barth sextic. It would be nice to have a short tailored argument 
to show that codes number (1) and (2) cannot correspond to a nodal surface. A computer verification of that fact is presented in appendix~\ref{sec_extended_code}. 
As a consequence, the code of each nodal sextic with $65$ nodes is given by (3). Indeed, the Barth sextic is a  member of a $3$-parameter family 
of nodal sextics with $65$ nodes, see \cite[Theorem 5.5.9]{PhdPettersen}.

Up to isomorphism there exists a unique $\codeshorteff{64}{11}{2}$ subcode that can be obtained via shortening:

\noindent
$\codeeff{n}{k}{d}{q}=\codeeff{64}{11}{24}{2}$\\
$\begin{pmatrix}  
    1000001010000011100001101101100100000011011111100011010000000000\\
    1101001010001111101000010100010010011000001101000011101000000000\\
    0010001001001110010000110001000011010110011111100010100100000000\\
    1111011010000100010101101101000011001001000011000010100010000000\\
    0111100101100001001011000000110111000111000000100101100001000000\\
    0001000100111000111111010011110010000011000110001100000000100000\\
    0000011100011100011111000000001110111100100100011100000000010000\\
    0000000011100100000111001111111110000011100010000011100000001000\\
    0000111111100011111111000000000001111111100001111111100000000100\\
    0000000000011111111111000000000000000000011111111111100000000010\\
    0000000000000000000000111111111111111111111111111111100000000001\\ 
\end{pmatrix}$\\
$W(z)=1z^{0}+246z^{24}+1551z^{32}+250z^{40}$\\
$\# \operatorname{Aut}=240$\\

\section{Theoretical arguments}
\label{sec_theoretical}

The classification results from Section~\ref{sec_computer_classification} and Section~\ref{sec_nodal_surfaces} have been obtained 
by extensive computer calculations, so that it would be nice to have short theoretical arguments for some of these findings.
First we note that the MacWilliams identities of a $\codeshorteff{n}{k}{2}$ code, see Equation~(\ref{eq_mac_williams}), for the coefficients of 
$y^0$, $y^1$, $y^2$, and $y^3$ can be rewritten to (see also \cite{wahl1998nodes}):
\begin{eqnarray}
  \sum_{i>0} a_i &=& 2^k-1,\\
  \sum_{i\ge 0} ia_i &=& 2^{k-1}n,\\
  \sum_{i\ge 0} i^2a_i &=& 2^{k-1}(a_2^*+n(n+1)/2),\\ 
  \sum_{i\ge 0} i^3a_i &=& 2^{k-2}(3(a_2^*n-a_3^*)+n^2(n+3)/2).
\end{eqnarray} 
We also speak of the first four MacWilliams identities. In this special form, those equations are also known as the 
first four (Pless) power moments \cite{pless1963power}.

\begin{lemma}
  \label{lemma_n_ge_63}
  Let $C$ be a binary $8$-divisible linear code with minimum distance $d\ge 24$, dimension $k=12$ and effective length $n\le 65$, then
  $a_{40}\ge 1$ and $n\ge 63$.
\end{lemma}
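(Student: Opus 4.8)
The plan is to use the first four MacWilliams identities (power moments) together with the divisibility and minimum-distance constraints to pin down the weight enumerator tightly enough. For a binary $8$-divisible $\codeshorteff{n}{12}{2}$ code with $d\ge 24$, the only possible nonzero weights below $n$ are among $24,32,40,48,56,64$, and since $n\le 65$ a codeword of weight $w$ forces $n\ge w$; moreover $a_{64}\le 1$ and $a_{56}\le 1$ for $n\le 65$ (two codewords of weight $\ge 56$ would force $n\ge 56+24/2=68$, and similarly one of weight $64$ leaves a residual $\codeshorteff{\le 1}{11}{2}$ code which is impossible). First I would write $a_{24}+a_{32}+a_{40}+a_{48}+a_{56}+a_{64}=2^{12}-1$ and $\sum i a_i = 2^{11}n$, and note $a_2^*\ge 0$, $a_3^*\ge 0$ so the third and fourth identities give inequalities.

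The core step is to suppose $a_{40}=0$ and derive a contradiction, or force $n\ge 63$. With $a_{40}=0$ the possible weights are $24,32,48,56,64$. I would treat $a_{56},a_{64}\in\{0,1\}$ as a small finite case split (at most four cases), and in each case solve the linear system coming from the first two (or three) power moments for $a_{24},a_{32},a_{48}$ in terms of $n$ and $a_2^*$. Nonnegativity of all $a_i$ then yields lower and upper bounds on $n$; I expect that in every case either the system has no nonnegative solution for $n\le 62$, or it forces $a_{48}>0$ which is then itself excluded — recall that in the sextic application $a_{48}=0$ is known, but here one must argue purely combinatorially. The cleanest route is probably: from identity~(2), $2^{11}n=\sum i a_i \ge 24(2^{12}-1) = 24\cdot 4095$, giving $n\ge 48$ immediately but not enough; sharpen using identity~(3) with $a_2^*\ge 0$, i.e. $\sum i^2 a_i \ge 2^{11}\cdot n(n+1)/2$, combined with the fact that without weight $40$ the weights cluster at $24,32$ (the heavy weights $48,56,64$ can contribute at most $\binom{?}{}$ many — bounded by $1+1+a_{48}$), to push $n$ up.

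The main obstacle I anticipate is controlling $a_{48}$: it is not a priori bounded by a small constant the way $a_{56}$ and $a_{64}$ are, so the pure MacWilliams argument must either eliminate weight $48$ as well or absorb it. I would handle this by observing that a codeword of weight $48$ gives a residual $4$-divisible $\codeshorteff{\le n-24}{11}{2}$ code with minimum distance $\ge 12$; for $n\le 62$ this residual has effective length $\le 38$, and one can invoke length restrictions for $4$-divisible codes (cf.\ \cite[Theorem 4]{kiermaier2017improvement}) or a direct Griesmer-type bound to limit how many such weight-$48$ words can coexist, or simply carry $a_{48}$ as a free variable $0\le a_{48}\le 2^{12}-1$ and check that the resulting polytope in $(n,a_{24},a_{32},a_{48},a_2^*,a_3^*)$ is empty for $n\le 62$. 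Once $n\ge 63$ is established, the claim $a_{40}\ge 1$ follows: plugging $n\in\{63,64,65\}$ back into the first two power moments with $a_{40}=0$ and the residual-code bound on $a_{48}$ leaves no nonnegative integer solution, so $a_{40}\ge 1$ as well. Finally I would double-check the boundary arithmetic — e.g.\ that $n=62$ with all weight at $24$ and $32$ genuinely overshoots or undershoots $2^{11}\cdot 62$ — since the whole argument hinges on these tight integer inequalities.
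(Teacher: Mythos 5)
Your overall strategy --- feeding the power moments plus nonnegativity into a feasibility argument and hoping the resulting polytope is empty for all $n\le 62$ --- cannot work, and this is the decisive gap. The first four power moments for $k=12$, weights restricted to $\{24,32,40,48,56,64\}$ and $a_2^*,a_3^*\ge 0$ admit nonnegative solutions at $n=61$ and $n=62$: for instance at $n=62$ the exact solution with $a_2^*=a_3^*=a_{56}=a_{64}=0$ is $(a_{24},a_{32},a_{40},a_{48})=(742,3151,170,32)$, which satisfies all four identities. So no amount of carrying $a_{48}$ as a free variable, bounding it by residuals, or splitting on $a_{56},a_{64}\in\{0,1\}$ makes the system infeasible for $n\le 62$; the moment identities alone cannot see why dimension $12$ forces $n\ge 63$. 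The paper's proof supplies the missing ingredient: it solves the four moments for $a_{24},a_{32},a_{40},a_{48}$, obtaining $a_{40}\ge -\tfrac{1}{2}n^3+\tfrac{205}{2}n^2-6808n+147420$ (positive for $n\le 53$ and $61\le n\le 65$, with a companion expression for $a_{40}+a_{48}$ used for the window $54\le n\le 60$), concludes $a_{40}\ge 1$, and only then gets $n\ge 63$ by a structural step you never invoke: the residual code of a codeword of weight $40$ is a doubly-even $\codeshorteff{n-40}{11}{2}$ code, and an $11$-dimensional doubly-even binary code has effective length at least $23$ (self-orthogonality gives $\ge 22$, and $22$ is excluded since doubly-even self-dual codes need length divisible by $8$), whence $n\ge 63$. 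In your write-up residual codes appear only as a device to control $a_{48}$, and your logical order ("once $n\ge 63$ is established, $a_{40}\ge 1$ follows") is the reverse of what is actually provable; there is no substitute in your plan for the divisibility/dimension bound on the residual of a weight-$40$ word.

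Two smaller problems. First, the parenthetical claim that a codeword of weight $64$ "leaves a residual $\codeshorteff{\le 1}{11}{2}$ code, which is impossible" is wrong: for $w\ge 2d$ the residual need not have dimension $k-1$ (several of the classified $\codeshorteff{64}{12}{2}$ codes in Theorem~\ref{thm_classification_2} do contain a weight-$64$ codeword), and the same caveat undermines your proposed residual bound for weight-$48$ words, since $48=2d$ here. Second, the first two (even three) power moments do not force $a_{40}\ge 1$ at $n\in\{63,64,65\}$: at $n=63$, for example, $a_{24}=252$, $a_{32}=3843$, all other $a_i=0$ satisfies the first two identities with $a_{40}=0$; one needs the third and fourth identities together with $a_2^*,a_3^*\ge 0$, as in the paper, which give $a_{40}\ge 315$ at $n=63$.
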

\begin{proof}
  Solving the first four MacWilliams identities for $a_{24}$, $a_{32}$, $a_{40}$, and $a_{48}$ gives
  $$
    a_{40}=\frac{205}{2}n^2-6808n-\frac{1}{2}n^3+(208-3n)a_2^*+3a_3^*+6a_{56}+20a_{64}+147420 
  $$
  and
  $$
    a_{40}+a_{48}=71n^2-\frac{14504}{3}n-\frac{1}{3}n^3+(144-2n)a_2^*+2a_3^*+2a_{56}+10a_{64}+106470.
  $$    
  Since $a_2^*,a_3^*,a_{56},a_{64}\ge 0$, $208-3n\ge 0$, $144-2n\ge 0$ we have 
  $$
    a_{40}\ge \frac{205}{2}n^2-6808n-\frac{1}{2}n^3+147420  
  $$
  and   
  $$
    a_{40}+a_{48}\ge 71n^2-\frac{14504}{3}n-\frac{1}{3}n^3+106470.  
  $$  
  For $54\le n\le 60$ we have $a_{40}+a_{48}<0$, which is impossible. If either $n\le 53$ or $61\le n\le 65$, 
  then $a_{40}\ge 1$. Thus, $a_{40}\ge 1$. Consider the residual code $C'$ of a codeword of weight $40$. $C'$ has dimension $11$ 
  and is doubly-even, i.e., its length is at least $23$.  
\end{proof}
We remark that all lengths $63\le n\le 65$ can be attained by suitable codes, see Theorem~\ref{thm_classification_1}. Next we look at the 
restrictions that are implied solely by $q^r$-divisibility of a code.

\begin{lemma}
  \label{lemma_divisibility}
  (\cite[Lemma 7]{honold2018partial})\\ 
  Let $C$ be a $q^r$ divisible $\codeshorteff{n}{k}{q}$ code and $\mathcal{P}$ be the corresponding multiset of points in $\mathbb{F}_q^k$. Then for $0\le l\le \min(k-1,r)$ 
  let $\mathcal{P}'$ be the set of points that is contained in an arbitrary $(k-l)$-dimensional subspace of $\mathbb{F}_q^k$ and $C'$ be the corresponding 
  linear code. With this, the code $C'$ is $q^{r-l}$-divisible.
\end{lemma}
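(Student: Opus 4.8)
The natural approach is to pass to the geometric side and use the dictionary between $q^r$-divisibility and hyperplane intersection numbers. Fix a generator matrix $G$ whose columns represent $\mathcal{P}$ with multiplicities; then every nonzero $x\in\mathbb{F}_q^k$ gives a codeword $xG$ of weight $|\mathcal{P}|-|\mathcal{P}\cap H|$, where $H=\ker x$ and all intersections are counted with multiplicity. Hence $C$ is $q^r$-divisible exactly when $|\mathcal{P}\cap H|\equiv|\mathcal{P}|\pmod{q^r}$ for every hyperplane $H$ of $\mathbb{F}_q^k$. The same description applies to $C'$: writing $V$ for the given $(k-l)$-dimensional subspace, a hyperplane $H'$ of $V$ is a subspace of $\mathbb{F}_q^k$ of codimension $l+1$, and since $H'\subseteq V$ we have $\mathcal{P}'\cap H'=\mathcal{P}\cap H'$, so the corresponding codeword of $C'$ has weight $|\mathcal{P}\cap V|-|\mathcal{P}\cap H'|$. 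Thus the whole task is to show $|\mathcal{P}\cap V|-|\mathcal{P}\cap H'|\equiv 0\pmod{q^{r-l}}$ for every hyperplane $H'$ of $V$.

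Next I would reduce to a single hyperplane restriction. Choose a flag $V=V_l\subset V_{l-1}\subset\cdots\subset V_0=\mathbb{F}_q^k$ with $\dim V_i=k-i$; then $\mathcal{P}\cap V$ is obtained from $\mathcal{P}$ by $l$ successive restrictions to hyperplanes, and the associated codes $C=C_0,C_1,\ldots,C_l=C'$ form the corresponding chain. So it suffices to prove the case $l=1$: if $C$ is $q^r$-divisible with $r\ge 1$ and $V$ is a hyperplane of $\mathbb{F}_q^k$, then the code of $\mathcal{P}\cap V$ is $q^{r-1}$-divisible. Iterating then gives that $C_i$ is $q^{r-i}$-divisible for $0\le i\le l$, and the hypothesis $l\le\min(k-1,r)$ is precisely what keeps $r-i\ge 0$ and every $V_i$ of positive dimension throughout (a code of effective length $0$ being vacuously divisible should any remaining edge case arise).

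For the hyperplane step, fix a hyperplane $H'$ of $V$, i.e.\ a codimension-$2$ subspace of $\mathbb{F}_q^k$ contained in $V$, and let $V=H_1,H_2,\ldots,H_{q+1}$ be the $q+1$ hyperplanes of $\mathbb{F}_q^k$ through $H'$. Counting the incident pairs $(P,H_j)$ with $P\in\mathcal{P}$ and $P\in H_j$, and using that a point of $\mathcal{P}$ lying in $H'$ is on all $q+1$ of the $H_j$ while a point outside $H'$ is on exactly one of them, gives
\[
  \sum_{j=1}^{q+1}|\mathcal{P}\cap H_j|=(q+1)\,|\mathcal{P}\cap H'|+\bigl(|\mathcal{P}|-|\mathcal{P}\cap H'|\bigr)=|\mathcal{P}|+q\,|\mathcal{P}\cap H'|.
\]
By $q^r$-divisibility of $C$ each summand on the left is congruent to $|\mathcal{P}|$ modulo $q^r$, and so is $|\mathcal{P}\cap V|=|\mathcal{P}\cap H_1|$; isolating the $H_1$ term and reducing modulo $q^r$ yields $q\,|\mathcal{P}\cap H'|\equiv q\,|\mathcal{P}|\pmod{q^r}$, hence $|\mathcal{P}\cap H'|\equiv|\mathcal{P}|\pmod{q^{r-1}}$. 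Combined with $|\mathcal{P}\cap V|\equiv|\mathcal{P}|\pmod{q^{r-1}}$ this gives $|\mathcal{P}\cap V|-|\mathcal{P}\cap H'|\equiv 0\pmod{q^{r-1}}$, which is exactly the $q^{r-1}$-divisibility of the code of $\mathcal{P}\cap V$.

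I do not expect a genuine obstacle; the work is essentially bookkeeping. The one place to be careful is the step from $q\,x\equiv q\,y\pmod{q^r}$ to $x\equiv y\pmod{q^{r-1}}$, together with counting point multiplicities consistently everywhere — in particular the identity $\mathcal{P}'\cap H'=\mathcal{P}\cap H'$, which is what lets the divisibility of the restricted code be read off from intersection numbers taken in the original ambient space. Alternatively one could prove in one stroke, by the same double counting and induction on the codimension $j$, that $|\mathcal{P}\cap W|\equiv|\mathcal{P}|\pmod{q^{r-j+1}}$ for every subspace $W$ of codimension $j\le r$, from which the lemma is immediate; I would probably present it in the iterated hyperplane form above since it keeps the induction hypothesis simplest.
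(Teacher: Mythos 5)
Your proof is correct: the dictionary between $q^r$-divisibility and hyperplane intersection numbers, the reduction to the hyperplane case via a flag, and the double count over the $q+1$ hyperplanes through a codimension-$2$ subspace are all sound, and the passage from $q\,x\equiv q\,y\pmod{q^r}$ to $x\equiv y\pmod{q^{r-1}}$ is handled properly. The paper itself gives no proof but only cites \cite[Lemma 7]{honold2018partial}, and your argument is essentially the standard averaging argument used there, so this is the same approach.
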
  
As a consequence the effective length of $C'$ is divisible by $q^{k-l}$, which is perfectly reflected by the first three rows of tables 
(\ref{tab_8div_p1})-(\ref{tab_8div_p3}) and (\ref{tab_24_32_40_p1})-(\ref{tab_24_32_40_p2}).

\begin{lemma}
	\label{lemma:snumb_card}
	(\cite[Lemma 6]{kiermaier2017improvement})\\
	For $r\in\mathbb{N}_0$ and $i\in\{0,\ldots,r\}$, there is a $q^r$-divisible $\codeshorteff{n}{k}{q}$ code with suitable dimension $k$ and effective length 
	\[
		n=\snumb{r}{i}{q}
		:=\frac{q^{r+1}-q^i}{q-1}
		=\sum_{j=i}^r q^{j}
		=q^i + q^{i+1} + \ldots + q^r\text{.}
	\]
\end{lemma}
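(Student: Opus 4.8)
The plan is to exhibit such a code explicitly as a replicated simplex code. First I would put $k:=r-i+1$, which is a positive integer since $i\le r$, and recall the $k$-dimensional $q$-ary simplex code $\mathcal{S}(k)$: its associated multiset of points consists of all $\tfrac{q^{k}-1}{q-1}$ points of $\SPG{k-1}{\F_q}$, each with multiplicity one; it is a one-weight code in which every nonzero codeword has weight $q^{k-1}$; and hence it is $q^{k-1}$-divisible with effective length $\tfrac{q^{k}-1}{q-1}$ (no coordinate is identically zero, so the effective length equals the full length).

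Next I would pass to the $q^{i}$-fold replication $C$ of $\mathcal{S}(k)$, i.e.\ repeat every column of a generator matrix of $\mathcal{S}(k)$ exactly $q^{i}$ times, or equivalently inflate every point multiplicity by the factor $q^{i}$. Replicating all coordinates with a common multiplicity $t\ge 1$ multiplies every codeword weight by $t$, multiplies the effective length by $t$, and leaves the dimension unchanged. Therefore $C$ has dimension $k$, all of its nonzero weights equal $q^{i}\cdot q^{k-1}=q^{r}$, and its effective length is
\[
  q^{i}\cdot\frac{q^{k}-1}{q-1}=\frac{q^{k+i}-q^{i}}{q-1}=\frac{q^{r+1}-q^{i}}{q-1}=\snumb{r}{i}{q}.
\]
In particular $C$ is $q^{r}$-divisible, which completes the argument.

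I do not expect a genuine obstacle here: the only point to spot is the factorisation $\snumb{r}{i}{q}=q^{i}\cdot\tfrac{q^{r-i+1}-1}{q-1}$, which simultaneously dictates the dimension $k=r-i+1$ of the simplex code to use and the replication factor $q^{i}$. The degenerate cases are equally harmless: for $i=r$ one has $k=1$ and $C$ is merely the $q^{r}$-fold repetition code, whose nonzero words all have weight $q^{r}$; for $r=0$ the statement is vacuous, since $1$-divisibility imposes no condition at all.
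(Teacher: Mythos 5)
Your construction is correct and is exactly the argument behind the cited result: the paper itself only quotes \cite[Lemma 6]{kiermaier2017improvement}, whose proof is precisely the $q^i$-fold replication of the $(r-i+1)$-dimensional simplex code, turning the single weight $q^{r-i}$ into $q^r$ and the length into $\snumb{r}{i}{q}$. Nothing to add.
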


The numbers $\snumb{r}{i}{q}$ have the property that they are divisible by $q^i$, but not by $q^{i+1}$.
This allows us to create kind of a positional system upon the sequence of base numbers
\[
	S_q(r) = (\snumb{r}{0}{q}, \snumb{r}{1}{q},\ldots, \snumb{r}{r}{q})\text{.}
\]

\begin{lemma}
\label{lem:unique_representation}
(\cite[Lemma 7]{kiermaier2017improvement})\\
Let $n\in\mathbb{Z}$ and $r\in\mathbb{N}_0$.
There exist $a_0,\dots,a_{r-1}\in\{0,1,\dots,q-1\}$ and $a_r\in\mathbb{Z}$ with $n=\sum_{i=0}^r a_i \snumb{r}{i}{q}$.
Moreover this representation is unique.
\end{lemma}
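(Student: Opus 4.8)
The plan is to prove existence and uniqueness simultaneously by induction on $r$, exploiting that the base numbers $\snumb{r}{i}{q}$ are completely transparent modulo $q$. Everything rests on two elementary identities, both immediate from the closed form $\snumb{r}{i}{q} = q^i + q^{i+1} + \dots + q^r$: first, $\snumb{r}{0}{q} \equiv 1 \pmod q$ while $\snumb{r}{i}{q} \equiv 0 \pmod q$ for every $i \ge 1$; and second, $\snumb{r}{i}{q} = q\cdot\snumb{r-1}{i-1}{q}$ for $1 \le i \le r$.

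For the base case $r = 0$ there are no digits $a_i$ with $i < r$, and the claimed identity reads $n = a_0\cdot\snumb{0}{0}{q} = a_0$, so $a_0 = n$ is the unique representation with $a_0 \in \mathbb{Z}$. For the inductive step I would assume the statement for $r-1$, fix $n \in \mathbb{Z}$, and reduce the desired identity $n = \sum_{i=0}^{r} a_i\snumb{r}{i}{q}$ modulo $q$: by the first identity it collapses to $n \equiv a_0 \pmod q$, so the constraint $a_0 \in \{0,\dots,q-1\}$ forces $a_0$ to equal the least nonnegative residue of $n$ modulo $q$; this is uniquely determined and is exactly the value for which $n - a_0\snumb{r}{0}{q}$ is divisible by $q$. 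Dividing by $q$ and applying the second identity,
\[
  \frac{n - a_0\snumb{r}{0}{q}}{q}
  \;=\; \sum_{i=1}^{r} a_i\,\snumb{r-1}{i-1}{q}
  \;=\; \sum_{j=0}^{r-1} a_{j+1}\,\snumb{r-1}{j}{q},
\]
and the induction hypothesis, applied to the integer on the left-hand side, yields a unique admissible tuple $(a_1,\dots,a_{r-1},a_r)$. This simultaneously delivers existence and uniqueness for $r$.

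I expect no genuine obstacle; the only point that needs care is the index bookkeeping in the recursion $\snumb{r}{i}{q} = q\,\snumb{r-1}{i-1}{q}$, namely that the digit $a_0$ is consumed at level $r$ while the remaining digits $a_1,\dots,a_r$ are handed to the level-$(r-1)$ instance, with $a_r$ there playing the role of the unbounded top digit. Because $a_r$ is allowed to range over all of $\mathbb{Z}$, the recursion never encounters a range restriction, so beyond the single divisibility check $q \mid n - a_0\snumb{r}{0}{q}$ there is nothing delicate to verify.
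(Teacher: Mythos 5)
Your argument is correct. Note that the paper itself offers no proof of this lemma -- it is quoted verbatim from \cite[Lemma 7]{kiermaier2017improvement} -- so there is nothing internal to compare against; your induction on $r$ supplies a complete, self-contained argument of the kind one would expect in the cited source. The two pillars are sound: $\snumb{r}{0}{q}\equiv 1\pmod q$ while $\snumb{r}{i}{q}\equiv 0\pmod q$ for $i\ge 1$, and $\snumb{r}{i}{q}=q\,\snumb{r-1}{i-1}{q}$ for $1\le i\le r$, so reducing modulo $q$ pins down $a_0$ (for $r\ge 1$ it is a constrained digit, hence the least nonnegative residue of $n$), and dividing out $q$ hands the remaining digits to the level-$(r-1)$ instance with $a_r$ correctly playing the unbounded leading role there; the base case $r=0$, where $a_0=a_r$ ranges over all of $\mathbb{Z}$, is also handled properly. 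This simultaneously yields existence and uniqueness, so the proposal stands as a valid proof of the stated lemma.
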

 
The unique representation $n = \sum_{i=0}^r a_i \snumb{r}{i}{q}$ of Lemma~\ref{lem:unique_representation} will be called the \emph{$S_q(r)$-adic expansion} of $n$. 
The number $a_r$ will be called the \emph{leading coefficient} 
of the $S_q(r)$-adic expansion.  

\begin{theorem}
  \label{thm:characterization}
  (\cite[Theorem 4]{kiermaier2017improvement})\\
  Let $n\in\mathbb{Z}$ and $r\in\mathbb{N}_0$.
  The following are equivalent:
  \begin{enumerate}[(i)]
  \item\label{thm:characterization:card} There exists a $q^r$-divisible $\codeshorteff{n}{k}{q}$ for a suitable dimension $k$.
  \item\label{thm:characterization:n_strong} The leading coefficient of the $S_q(r)$-adic expansion of $n$ is non-negative.
  \end{enumerate}
\end{theorem}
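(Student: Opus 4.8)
The plan is to establish the two implications of the equivalence separately; (\ref{thm:characterization:n_strong})$\Rightarrow$(\ref{thm:characterization:card}) is easy, whereas (\ref{thm:characterization:card})$\Rightarrow$(\ref{thm:characterization:n_strong}) is the substance. It helps to record first the purely arithmetic fact that the integers with non-negative leading $S_q(r)$-adic coefficient form exactly the numerical semigroup $\Lambda_r:=\langle\snumb{r}{0}{q},\dots,\snumb{r}{r}{q}\rangle$: from the carrying identity $q\cdot\snumb{r}{i}{q}=\snumb{r}{i+1}{q}+q\cdot\snumb{r}{r}{q}$ (valid for $0\le i\le r-1$) any representation $n=\sum_{i=0}^r c_i\,\snumb{r}{i}{q}$ with all $c_i\in\mathbb N_0$ can be brought into the normal form of Lemma~\ref{lem:unique_representation}, each carry leaving the top coefficient non-negative, and the reverse inclusion is trivial. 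Thus (\ref{thm:characterization:n_strong}) just says $n\in\Lambda_r$, and (\ref{thm:characterization:n_strong})$\Rightarrow$(\ref{thm:characterization:card}) follows: for a representation $n=\sum_i c_i\,\snumb{r}{i}{q}$ with $c_i\in\mathbb N_0$ take the direct sum of $c_i$ copies of the $q^r$-divisible code of effective length $\snumb{r}{i}{q}$ furnished by Lemma~\ref{lemma:snumb_card} (the zero code if $n=0$); a direct sum of $q^r$-divisible codes is $q^r$-divisible, since each codeword is a concatenation of codewords of the summands, and effective lengths are additive under direct sums.

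For (\ref{thm:characterization:card})$\Rightarrow$(\ref{thm:characterization:n_strong}) I would induct on $r$, proving that the effective length $n$ of every $q^r$-divisible code lies in $\Lambda_r$. The case $r=0$ is trivial, $\Lambda_0=\mathbb N_0$. For $r\ge 1$ let $C$ be $q^r$-divisible with point multiset $\mathcal P$ in $\mathbb F_q^k$; since a generator matrix of $C$ has full row rank, $\mathcal P$ spans $\mathbb F_q^k$. If $k=1$, every nonzero codeword has weight $n$, so $q^r\mid n$ and $n=(n/q^r)\,\snumb{r}{r}{q}\in\Lambda_r$. If $2\le k\le r$, apply Lemma~\ref{lemma_divisibility} with $l=k-1$: intersecting $\mathcal P$ with an arbitrary $1$-dimensional subspace yields a $q^{r-k+1}$-divisible code, so $q^{r-k+1}$ divides every point multiplicity; dividing all multiplicities by $q^{r-k+1}$ gives a $q^{k-1}$-divisible code of effective length $n/q^{r-k+1}$, which by the induction hypothesis equals $\sum_j c_j\,\snumb{k-1}{j}{q}$ with $c_j\in\mathbb N_0$, whence $n=\sum_j c_j\,\snumb{r}{j+r-k+1}{q}\in\Lambda_r$ using $q^{r-k+1}\cdot\snumb{k-1}{j}{q}=\snumb{r}{j+r-k+1}{q}$ and $0\le j+r-k+1\le r$.

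The essential case is $k\ge r+1$. For each of the $(q^k-1)/(q-1)$ hyperplanes $H$, Lemma~\ref{lemma_divisibility} with $l=1$ gives a $q^{r-1}$-divisible code of effective length $n_H:=\lvert\mathcal P\cap H\rvert$, so $n_H\in\Lambda_{r-1}$ by induction; moreover $n-n_H$, the number of points of $\mathcal P$ outside $H$, is the weight of a nonzero codeword of $C$, hence a positive multiple of $q^r$, say $n-n_H=m_H q^r$ with $m_H\ge 1$, and $\sum_H m_H q^r=q^{k-1}n$. From this one must conclude $n\in\Lambda_r$, and this is where I expect the real difficulty to lie. Concretely, using $\snumb{r}{i}{q}=q\cdot\snumb{r-1}{i-1}{q}$ for $1\le i\le r$ and $\snumb{r}{0}{q}=1+q\cdot\snumb{r-1}{0}{q}$ one checks that the $S_q(r-1)$-adic expansion of $n$ has the same lower digits as its $S_q(r)$-adic expansion and leading digit $\widetilde a_{r-1}=q\,a_r+(q+1)\,a_{r-1}+q\sum_{i=0}^{r-2}a_i$, where $a_0,\dots,a_r$ are the $S_q(r)$-adic digits of $n$; since $n_H=n-(m_H q)\,\snumb{r-1}{r-1}{q}$, the condition $n_H\in\Lambda_{r-1}$ is equivalent to $\widetilde a_{r-1}\ge q\,m_H$, so $\widetilde a_{r-1}\ge q\cdot\max_H m_H=q^{1-r}\cdot(\text{maximum weight of }C)$. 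Converting this — together with the global count $\sum_H m_H q^r=q^{k-1}n$ — into the inequality $a_r\ge 0$ requires a sufficiently strong lower bound on the maximum weight of a $q^r$-divisible code spanning $\mathbb F_q^k$ with $k\ge r+1$, and, in the borderline configurations where the digit inequalities are only just met, an additional appeal to the first few power-moment (MacWilliams) identities recalled in Section~\ref{sec_theoretical} to rule out the near-constant-weight codes that the linear constraints alone would permit — the same flavour of reasoning used there for small lengths. I expect this arithmetic–combinatorial bookkeeping, rather than any single idea, to be the technical heart of the proof.
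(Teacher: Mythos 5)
You never actually prove the implication (i)$\Rightarrow$(ii), which is the entire content of the theorem; note also that the paper itself offers no proof to compare against (it quotes the result from \cite{kiermaier2017improvement}, where the argument runs, like your setup, by induction on $r$ via hyperplane residuals and counting). Your direction (ii)$\Rightarrow$(i), the semigroup reformulation, the cases $k\le r$, and the digit bookkeeping are all correct: the identity $\tilde a_{r-1}=qa_r+(q+1)a_{r-1}+q\sum_{i\le r-2}a_i$ holds, and $n_H\in\Lambda_{r-1}$ is indeed equivalent to $qm_H\le\tilde a_{r-1}$, since $n_H=n-m_Hq^r$ changes only the leading $S_q(r-1)$-digit. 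But at the decisive point you stop and speculate that one still needs ``a sufficiently strong lower bound on the maximum weight'' plus second/third power moments to handle borderline near-constant-weight configurations. As submitted, the deduction $a_r\ge 0$ is simply absent, so the hard half of the theorem is not established; that is a genuine gap, and the diagnosis of what is missing points in the wrong direction.

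In fact the constraints you already collected close the argument, with no maximum-weight estimate and no moments beyond the first (which is exactly your count $\sum_H m_Hq^r=q^{k-1}n$). Set $A:=\lfloor\tilde a_{r-1}/q\rfloor$; if $\tilde a_{r-1}<q$ there is no admissible $m_H\ge 1$ and you are done, so assume $A\ge 1$ and write $\tilde a_{r-1}=qA+a_{r-1}$ (the residue of $\tilde a_{r-1}$ modulo $q$ is $a_{r-1}$). There are $N=(q^k-1)/(q-1)$ hyperplanes, each with $1\le m_H\le A$, so $nq^{k-1-r}=\sum_H m_H\le NA$, i.e.\ $(q-1)n\le A\bigl(q^{r+1}-q^{r+1-k}\bigr)$. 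Combining this with the exact identity $q^r\tilde a_{r-1}-(q-1)n=a_rq^r+a_{r-1}(q^r+q^{r-1})+\sum_{i\le r-2}a_iq^i$ yields, after substituting $q^r\tilde a_{r-1}=Aq^{r+1}+a_{r-1}q^r$, the inequality $a_rq^r+a_{r-1}q^{r-1}+\sum_{i\le r-2}a_iq^i\ge Aq^{r+1-k}>0$. If $a_r\le -1$ the left-hand side is at most $-q^r+(q-1)q^{r-1}+(q^{r-1}-1)=-1$, a contradiction; hence $a_r\ge 0$. So the ``borderline'' cases you feared do not arise: the per-hyperplane digit bound from the induction hypothesis together with the first-moment count already forces the conclusion, and carrying out this short computation would have made your argument complete and essentially the same as the cited one.
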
 

\begin{lemma}
  \label{lemma_no_4_div}
  There is no binary $4$-divisible linear code with an effective length $n\in\{1,2,3,5,9\}$. 
\end{lemma}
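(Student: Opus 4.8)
The plan is to read off the statement directly from the characterization of the attainable effective lengths of $q^r$-divisible codes, Theorem~\ref{thm:characterization}, applied with $q=2$ and $r=2$, so that $q^r=4$. First I would record the relevant base sequence: by Lemma~\ref{lemma:snumb_card} one has $\snumb{2}{0}{2}=\tfrac{2^{3}-2^{0}}{2-1}=7$, $\snumb{2}{1}{2}=\tfrac{2^{3}-2^{1}}{2-1}=6$ and $\snumb{2}{2}{2}=\tfrac{2^{3}-2^{2}}{2-1}=4$, hence $S_2(2)=(7,6,4)$. By Lemma~\ref{lem:unique_representation} every integer $n$ has a unique $S_2(2)$-adic expansion $n=a_0\cdot 7+a_1\cdot 6+a_2\cdot 4$ with $a_0,a_1\in\{0,1\}$ and leading coefficient $a_2\in\mathbb{Z}$, and by Theorem~\ref{thm:characterization} a $4$-divisible binary linear code of effective length $n$ exists if and only if $a_2\ge 0$.

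It then remains to compute $a_2$ for the five values in question. For each $n$ there is exactly one pair $(a_0,a_1)\in\{0,1\}^2$ with $n-7a_0-6a_1\equiv 0\pmod 4$, and then $a_2=(n-7a_0-6a_1)/4$; this yields the $S_2(2)$-adic expansions
\begin{gather*}
  1=1\cdot 7+1\cdot 6-3\cdot 4,\qquad 2=0\cdot 7+1\cdot 6-1\cdot 4,\qquad 3=1\cdot 7+0\cdot 6-1\cdot 4,\\
  5=1\cdot 7+1\cdot 6-2\cdot 4,\qquad 9=1\cdot 7+1\cdot 6-1\cdot 4.
\end{gather*}
In every case the leading coefficient is strictly negative ($a_2\in\{-3,-1,-1,-2,-1\}$), so by the equivalence of Theorem~\ref{thm:characterization} no $4$-divisible binary linear code of effective length $n\in\{1,2,3,5,9\}$ exists, which is the claim.

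Since the argument reduces to five elementary arithmetic checks, there is essentially no obstacle; the only point requiring a little care is the digit constraint in the $S_2(r)$-adic expansion, namely that the non-leading digits $a_0,a_1$ must lie in $\{0,q-1\}=\{0,1\}$ while only the leading digit $a_2$ is permitted to be negative, so that the displayed lines really are the $S_2(2)$-adic expansions. For orientation I would remark that $n\in\{1,2,3\}$ are trivial even without Theorem~\ref{thm:characterization}, since a non-zero $4$-divisible code has a codeword of weight at least $4$, and that $n=5$ is already excluded by the first two MacWilliams identities (all nonzero weights would have to equal $4$, forcing $4(2^k-1)=2^{k-1}\cdot 5$, which has no integer solution); the substantive case is $n=9$, which Theorem~\ref{thm:characterization} handles uniformly together with the rest.
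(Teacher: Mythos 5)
Your proof is correct and follows essentially the same route as the paper: compute the $S_2(2)$-adic expansions of $n\in\{1,2,3,5,9\}$ with respect to the base sequence $(7,6,4)$ and apply Theorem~\ref{thm:characterization}, noting that the leading coefficient is negative in each case. Your expansions check out (indeed, for $n=2$ the correct expansion is $0\cdot 7+1\cdot 6-1\cdot 4$, as you have it, whereas the paper's displayed coefficient $-2$ of $4$ is a typo), and the added remarks on the trivial cases are fine but not needed.
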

\begin{proof}
  We have $\snumb{2}{0}{2}=7$, $\snumb{2}{1}{2}=6$, and $\snumb{2}{2}{2}=4$, so that we have the following 
  $S_2(2)$-adic expansions of $n\in\{1,2,3,5,9\}$:
  \begin{itemize}
    \item $1=-3\cdot 4+1\cdot 6+1\cdot 7$,
    \item $2=-2\cdot 4+1\cdot 6+0\cdot 7$,
    \item $3=-1\cdot 4+0\cdot 6+1\cdot 7$,
    \item $5=-2\cdot 4+1\cdot 6+1\cdot 7$,
    \item $9=-1\cdot 4+1\cdot 6+1\cdot 7$.
  \end{itemize}
  Note that the leading coefficient is negative in all cases and apply Theorem~\ref{thm:characterization}.  
\end{proof}

Restrictions on the dimension can be incorporated via \emph{residual codes}.
\begin{lemma}
  \label{lemma_residual}
  Let $C$ be an $\codeshorteff{n}{k}{q}$ code and $u\in C$ be a codeword of weight $w$. Let $C_1$ be the code generated by the codewords of $C$ 
  restricted to those coordinates that are not contained in the support $\operatorname{supp}(w)$ and $C_2$ be the code generated by the codewords of $C$ 
  restricted to those coordinates that are contained in $\operatorname{supp}(w)$. Then, we have $\dim(C_1)+\dim(C_2)=k$ and the effective lengths 
  are given by $n-w$ and $w$.  
\end{lemma}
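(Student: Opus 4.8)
The plan is to read $C_1$ as the code obtained from $C$ by puncturing at the coordinates outside $S:=\operatorname{supp}(u)$ — this is the residual code in the usual sense — and $C_2$ as the code obtained by restricting to $S$ those codewords of $C$ whose support lies in $S$, i.e.\ the shortened code at $S$. Put $\overline{S}:=\{1,\dots,n\}\setminus S$, so that $|S|=w$ and $|\overline{S}|=n-w$. The only global fact about $C$ I will use is that, since $C$ has effective length $n$, a generator matrix of $C$ has no zero column; equivalently, every coordinate $j\in\{1,\dots,n\}$ lies in the support of at least one codeword of $C$.

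Next I would prove $\dim(C_1)+\dim(C_2)=k$ by a rank--nullity argument. Consider the linear restriction map $\rho\colon C\to\mathbb{F}_q^{n-w}$, $v\mapsto v|_{\overline{S}}$. Its image is $C_1$ and its kernel is $K:=\{v\in C:\operatorname{supp}(v)\subseteq S\}$, so $\dim(C_1)=k-\dim(K)$. On the other hand $C_2=\{v|_S:v\in K\}$, and the map $K\to\mathbb{F}_q^{w}$, $v\mapsto v|_S$, is injective (a member of $K$ already vanishes on $\overline{S}$, hence is determined by its restriction to $S$), so $\dim(C_2)=\dim(K)$. Adding the two identities yields $\dim(C_1)+\dim(C_2)=k$.

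It then remains to pin down the effective lengths. For $C_1$: given $j\in\overline{S}$, by the observation above there is a codeword $v\in C$ with $v_j\neq 0$, so $(v|_{\overline{S}})_j\neq 0$ and coordinate $j$ is active in $C_1$; hence the effective length of $C_1$ is $|\overline{S}|=n-w$. For $C_2$: the codeword $u$ lies in $K$ because $\operatorname{supp}(u)=S$, and $u|_S$ has all $w$ of its coordinates nonzero, so every coordinate of $C_2$ is active and its effective length is $|S|=w$.

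I do not expect a genuine obstacle: the whole argument is rank--nullity plus bookkeeping of supports. The one point that must be handled carefully is the asymmetry in the definitions of $C_1$ and $C_2$ — $C_1$ is obtained by puncturing at $\overline{S}$ while $C_2$ is obtained by shortening at $S$ — since it is precisely this that forces the two dimensions to sum to exactly $k$ (rather than only to something $\ge k$), and the two effective-length claims genuinely use both that $C$ has no zero column and that $u$ is supported on all of $S$.
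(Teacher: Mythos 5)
Your proof is correct. The paper itself states Lemma~\ref{lemma_residual} without proof (it is a standard fact about residual and shortened codes, referenced there as the inverse of construction~X), so there is no in-paper argument to compare against; your rank--nullity argument for the restriction map $\rho\colon C\to\mathbb{F}_q^{n-w}$, with $\operatorname{im}\rho=C_1$ and $C_2$ isomorphic to $\ker\rho$, is exactly the standard proof one would supply. You were also right to insist on the asymmetric reading of the two definitions: if $C_2$ were taken to be the projection of all of $C$ onto $\operatorname{supp}(u)$ (the literal symmetric reading of the statement), the dimension formula would fail --- e.g.\ for $C=\langle 110,\,011\rangle$ over $\mathbb{F}_2$ and $u=110$ one gets $\dim(C_1)+\dim(C_2)=1+2=3>k=2$ --- whereas your shortened-code reading of $C_2$ is the one consistent with the paper's subsequent remark that $\dim(C_2)=1$ whenever $w<2d$. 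The two effective-length claims are also handled correctly, using precisely the two facts you isolate: that $C$ has no identically zero coordinate (effective length $n$) and that $u$ restricted to its support has full weight $w$.
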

The code $C_1$ is called the \emph{residual} code of $C$ with respect to $u$. Note that if $w$ is smaller than twice the minimum distance of $C$, then 
$\dim(C_2)=1$ and $\dim(C_1)=k-1$. If $w=2d$, e.g., $w=48$ in our application, then a complete classification of the $\codeeff{w}{k'}{\{d,2d\}}{q}$ codes is known, see 
\cite{jungnickel2018classification}. If $C$ is $q^r$-divisible, then $C_1$ and $C_2$ are $q^{r-1}$-divisible. The decomposition of $C$ into codes $C_1$ and 
$C_2$ is the inverse of the so-called \emph{construction X}, see e.g.~\cite[Ch. 18, Theorem 9]{macwilliams1977theory}.

\begin{proposition}
  Let $C$ be a binary $8$-divisible linear code with minimum distance $d\ge 24$, dimension $k=12$ and effective length $n\le 65$, then:
  \begin{enumerate}
    \item[(1)] If $C$ contains a word $c_{64}$ of weight $64$, then $n=64$ and the other codewords have weights in $\{24,32,40\}$.
    \item[(2)] If $C$ contains a word $c_{56}$ of weight $56$, then $a_{56}=1$, $a_{64}=0$, and $n\in\{63,64\}$. 
  \end{enumerate}
\end{proposition}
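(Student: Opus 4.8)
The plan is to derive both parts from three ingredients already available: the residual-code construction (Lemma~\ref{lemma_residual}), the non-existence of short $4$-divisible binary codes (Lemma~\ref{lemma_no_4_div}), and the length bound of Lemma~\ref{lemma_n_ge_63}, supplemented by elementary counting with supports. The observation reused throughout is that, since $C$ is $8$-divisible, the residual code of $C$ with respect to any nonzero codeword $c$ is $4$-divisible and has effective length $n-w(c)$, where $w(c)$ denotes the weight of $c$.

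For part~(1) I would first note that a codeword $c_{64}$ of weight $64$ forces $n\ge 64$, hence $n\in\{64,65\}$. Passing to the residual code $C_1$ of $C$ with respect to $c_{64}$, it is $4$-divisible with effective length $n-64\in\{0,1\}$; since Lemma~\ref{lemma_no_4_div} rules out effective length $1$, we get $n=64$. Because $n=64$ is the \emph{effective} length every coordinate is used, so a weight-$64$ codeword must be the all-ones vector $\mathbf{1}$. Then for any codeword $c\notin\{\mathbf{0},c_{64}\}$ both $c$ and $c+c_{64}$ are nonzero, so $w(c)\ge 24$ and $64-w(c)=w(c+c_{64})\ge 24$; combined with $8$-divisibility this gives $w(c)\in\{24,32,40\}$.

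For part~(2) I would start from Lemma~\ref{lemma_n_ge_63}, which already yields $n\ge 63$, so $n\in\{63,64,65\}$. The residual of $c_{56}$ is $4$-divisible of effective length $n-56$, and for $n=65$ this equals $9$, impossible by Lemma~\ref{lemma_no_4_div}; hence $n\in\{63,64\}$. To show $a_{56}=1$, I would assume two distinct weight-$56$ codewords $u,v$, use $w(u+v)\ge d\ge 24$, and plug into $|\operatorname{supp}(u)\cup\operatorname{supp}(v)|=\tfrac12\bigl(w(u)+w(v)+w(u+v)\bigr)$ to obtain $n\ge 68$, contradicting $n\le 65$. For $a_{64}=0$, if a weight-$64$ codeword existed then part~(1) would force every other nonzero codeword to have weight in $\{24,32,40\}$, excluding $c_{56}$.

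I do not expect a genuine obstacle: the proposition is essentially bookkeeping built on Lemma~\ref{lemma_n_ge_63} and the short-code non-existence results. The one step requiring a (tiny) idea is the all-ones observation in part~(1), which is what lets complementation convert hypothetical weight-$48$ or weight-$56$ codewords into codewords of weight $16$ or $8$ and thereby contradict $d\ge 24$; everything else is a direct application of the cited lemmas. (One could alternatively invoke Theorem~\ref{thm_classification_2} directly, but the self-contained route above is cleaner.)
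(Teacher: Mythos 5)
Your proof is correct and follows essentially the same route as the paper's: Lemma~\ref{lemma_n_ge_63} plus residual codes and Lemma~\ref{lemma_no_4_div} to pin down the possible lengths, the complement-of-the-all-ones argument in $\mathbb{F}_2^{64}$ for part (1), and the support-intersection count together with part (1) to obtain $a_{56}=1$ and $a_{64}=0$ in part (2).
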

\begin{proof}
  Due to Lemma~\ref{lemma_n_ge_63} we can assume $n\ge 63$.
  \begin{enumerate}
    \item[(1)] Clearly $n\ge 64$. By considering the residual code of $c_{64}$, Lemma~\ref{lemma_no_4_div} shows that $n=65$ is impossible.
               In $\mathbb{F}_2^{64}$ the sum of $c_{64}$ and a codeword of weight $48$ or $56$ is $16$ or $8$, respectively. Clearly the codeword 
               of weight $64$ is unique.
    \item[(2)] By considering the residual code of $c_{56}$, Lemma~\ref{lemma_no_4_div} shows that $n=65$ is impossible. As shown in (1), there is no 
               codeword of weight $64$. Due to $d\ge 24$ two codewords of weight $56$ have to intersect in at least $44$ positions, which would imply $n\ge 68$. 
               Thus, there is a unique codeword of weight $56$. If there is a codeword $c_{48}$ of weight $48$, then $n=64$ and the supports of $c_{56}$ and 
               $c_{48}$ intersect in a set of cardinality $40$.
  \end{enumerate}                
\end{proof}

\section*{Acknowledgments}
The author likes to thank Iliya Bouyukliev for discussions on the usage of his software \texttt{Q-Extension} and a personalized version that is capable to deal 
with larger dimensions. He also benefited from many discussion with Alfred Wassermann and Michael Kiermaier. Further thanks go to Fabrizio Catanese for bringing 
the problem of nodal surfaces with many nodes to our attention and to point out that symmetries of a nodal surface carry over to the associated binary code. 
The author would also like to thank the \textit{High Performance Computing group} of 
the University of Bayreuth for providing the excellent computing cluster and especially Bernhard Winkler for his support.  


\appendix

\section{The extended code of a nodal sextic}
\label{sec_extended_code}
Actually, there are two codes associated with a nodal surface. Some authors, see e.g.\ \cite{endrass1997minimal}, speak of even sets of nodes in the 
geometric context, which can be distinguished into strictly even nodes and weakly even nodes. The corresponding codes are called the (associated) code 
$\mathcal{K}$ of the nodal surface and the extended code $\mathcal{K}'$. For nodal sextics with 65 ordinary double points $\mathcal{K}$ can 
only be one of the three possibilities in Theorem~\ref{thm_classification_1}. The extended code $\mathcal{K}'$ contains $\mathcal{K}$ as a subcode and 
the lower bound for the dimension of $\mathcal{K}'$ is one larger than for $\mathcal{K}$. For sextics one we additionally 
know that the weights of $\mathcal{K}'$ are $4$-divisible and have minimum distance at least $16$, see e.g.\ \cite[Theorem 1.10]{endrass1997minimal}. 
Moreover, $\mathcal{K}'\backslash \mathcal{K}$ does not contain codewords of weight $20$ or $24$, see \cite[Corollary 1.11]{endrass1997minimal}. 
This motivates the following coding theoretic statement:
\begin{proposition}
  \label{prop_max_weight_extended_44}
  Let $\mathcal{K}$ be one of the codes of Theorem~\ref{thm_classification_1} and $\mathcal{K}'$ be a $(\dim(\mathcal{K})+1)$-dimensional binary code containing $\mathcal{K}$ 
  as a subcode such that the weights of the  codewords in $\mathcal{K}'\backslash\mathcal{K}$ are $4$-divisible, at least $16$ and not equal to $20$ or $24$. 
  If the effective length $\neff(\mathcal{K}')$ of $\mathcal{K}'$ satisfies $\neff(\mathcal{K})<\neff(\mathcal{K}')\le 66$, then $\mathcal{K}$ is the code of effective 
  length $65$ in Theorem~\ref{thm_classification_1} and the maximum weight in $\mathcal{K}'$ is exactly $44$.
\end{proposition}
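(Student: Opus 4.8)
The plan is to reduce the statement, by a short divisibility argument, to a finite search among cosets of $\mathcal K$ inside its dual code, and then to carry out that search (this is the computer verification announced here). First I fix coordinates. By Theorem~\ref{thm_classification_1} we have $\neff(\mathcal K)\in\{63,64,65\}$, so $t:=\neff(\mathcal K')-\neff(\mathcal K)$ lies in $\{1,2,3\}$ with $\neff(\mathcal K)+t\le 66$. Choosing a generator matrix of $\mathcal K'$ with no zero column and a basis whose first $12$ rows generate $\mathcal K$, the last row vanishes on every coordinate effective for $\mathcal K$ except on exactly $t$ further coordinates; hence $\mathcal K'$ has $t$ columns equal to the last unit vector $e_{13}$, and on the remaining $\neff(\mathcal K)$ coordinates it is described by a generator matrix of $\mathcal K$ together with one additional row $r'\in\F_2^{\,\neff(\mathcal K)}$. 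The codewords of $\mathcal K'\setminus\mathcal K$ are then exactly the vectors $(v+r',\mathbf 1_t)$ with $v\in\mathcal K$, of weight $\hweight(v+r')+t$.

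The key step is the following. By hypothesis $\mathcal K'$ is a $4$-divisible binary $13$-dimensional code of minimum distance at least $16$ (the weights of $\mathcal K$ are $8$-divisible, those of $\mathcal K'\setminus\mathcal K$ are $4$-divisible). Writing $\hweight(v+r')=\hweight(v)+\hweight(r')-2\ell_v$ with $\ell_v:=\lvert\operatorname{supp}(v)\cap\operatorname{supp}(r')\rvert\in\mathbb Z_{\ge 0}$ and using $\hweight(v)\equiv 0\pmod 8$, we get $\hweight(v+r')\equiv\hweight(r')-2\ell_v\pmod 8$; the condition $\hweight(v+r')+t\equiv 0\pmod 4$ at $v=\mathbf 0$ forces $\hweight(r')+t\equiv 0\pmod 4$, and then for general $v$ it forces $\ell_v$ to be even, i.e.\ $v\cdot r'=0$, for every $v\in\mathcal K$. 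Hence $r'\in\mathcal K^{*}$. Also $r'\notin\mathcal K$, since otherwise $(\mathbf 0,\mathbf 1_t)\in\mathcal K'\setminus\mathcal K$ would have weight $t\le 3<16$. As $\mathcal K$ is doubly even and therefore self-orthogonal, $\mathcal K\subseteq\mathcal K^{*}$, and the problem reduces to: for each of the three codes $\mathcal K$ and each admissible $t$, determine the cosets $r'+\mathcal K\subseteq\mathcal K^{*}\setminus\mathcal K$ all of whose weights lie in $W_t:=\{\,w:16-t\le w\le 64-t,\ w\equiv -t\pmod 4,\ w\notin\{20-t,24-t\}\,\}$, and to read off the largest such weight, increased by $t$.

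The remainder is a finite but genuinely computational enumeration, best organised exactly as in Section~\ref{sec_computer_classification}: one runs the extension/lattice-point machinery of Lemma~\ref{lemma_ILP} with $\Delta=4$, minimum distance $16$, forbidden weights $20$ and $24$, and effective length at most $66$, starting from each of the three codes of Theorem~\ref{thm_classification_1}, and filters for isomorphism. Beforehand the search space is pruned: the first four Pless power moments of the $13$-dimensional code $\mathcal K'$ — with $a_2^{*}(\mathcal K')=\binom{t}{2}$ from the repeated columns $e_{13}$ and $a_3^{*}(\mathcal K')$ bounded in terms of $a_3^{*}(\mathcal K)$ — turn the unknown weight distribution of the coset into the integral solutions of a small linear system with nonnegativity constraints; the admissible effective lengths $\neff(\mathcal K)+t$ are cut down by Theorem~\ref{thm:characterization}; and residual codes of low-weight codewords give additional divisibility restrictions. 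The outcome is that codes (1) and (2) admit no such coset for any $t$, whereas for code (3) necessarily $t=1$, the admissible cosets have maximal weight $43$ in $\F_2^{65}$, so the largest weight in $\mathcal K'$ is exactly $44$.

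The divisibility reduction $r'\in\mathcal K^{*}\setminus\mathcal K$ is clean and essentially halves the work, but I do not expect the power-moment identities alone to finish the job: a direct check shows the linear system they impose still has nonnegative integer solutions with coset weights outside $W_1$ (for code (3)) and does not by itself exclude codes (1) and (2). The genuine content — and the step where a computer is needed — is the exhaustive enumeration of the relevant cosets of $\mathcal K$ inside $\mathcal K^{*}$ (of index $2^{\,\neff(\mathcal K)-24}$); the main obstacle is making that enumeration feasible, for which the reduction to $\mathcal K^{*}$, the very narrow weight window $W_t$, and the action of $\operatorname{Aut}(\mathcal K)$ are the essential levers.
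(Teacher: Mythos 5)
Your reduction is sound and your overall plan matches the nature of the paper's argument, which is also a computer verification; but the organization differs, so a comparison is in order. The paper does not pass through $\mathcal{K}^{*}$ at all: it introduces binary variables $x_i$ for the coordinates of the extending codeword $c'$ and one indicator $y_c$ per nonzero $c\in\mathcal{K}$ to encode the gap condition ``weight $16$ or weight in $[28,\Lambda]$'', and then checks feasibility of a family of integer linear programs $\operatorname{ILP}_{\gamma,\gamma,\delta,\mathcal{K}}$ parameterized by the putative maximum weight $\gamma\in\{16,28,32,\dots,64\}$ and the length increase $\delta$; infeasibility for all $(\gamma,\delta,\mathcal{K})$ except $\gamma=44$, $\delta=1$, $\neff(\mathcal{K})=65$ gives the proposition without ever classifying the admissible extensions. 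Your route instead proves, by the (correct) congruence argument $\hweight(v+r')+t\equiv\hweight(r')+t-2\ell_v\pmod 4$, that $r'$ must be orthogonal to $\mathcal{K}$, i.e.\ $r'\in\mathcal{K}^{*}\setminus\mathcal{K}$ with $\hweight(r')\equiv -t\pmod 4$, and then enumerates the admissible cosets $r'+\mathcal{K}$ with the machinery of Lemma~\ref{lemma_ILP}. This orthogonality reduction is a genuine addition: it does not appear in the paper's proof of Proposition~\ref{prop_max_weight_extended_44} (only a related, and in fact stronger, orthogonality to the weight-$4$ codewords of $\mathcal{K}^{\perp}$ is obtained later, and there only computationally, in the proof of Theorem~\ref{thm_unique_extended_code}), and it would indeed shrink the search. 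Two cautions: first, when you invoke Lemma~\ref{lemma_ILP} with $\Delta=4$, the lower bound $16$ and the forbidden weights $20,24$ may only be imposed on the coset $\mathcal{K}'\setminus\mathcal{K}$ (weight $24$ is abundant in $\mathcal{K}$ itself), so the window has to be applied cosetwise exactly as in your $W_t$, not globally as in the lemma's constraints; second, as in the paper, the decisive facts -- that codes (1) and (2) admit no admissible coset and that for code (3) the maximum coset weight is $43$, hence maximum weight $44$ in $\mathcal{K}'$ -- are the output of the search and are asserted rather than derived, so your argument is complete only once that enumeration (or the paper's ILP battery) is actually executed.
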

\begin{proof}
  We proof the statement computationally using integer linear programming. To that end let $n$ be the effective length of $\mathcal{K}$ and $c'$ be 
  a codeword with $\langle \mathcal{K},c'\rangle=\mathcal{K}'$, such that $\neff(\mathcal{K}')=n+\delta$, where $1\le \delta\le 66-n$. By assumption the entries 
  of $c'$ at position $n+i$ are equal to $1$ for $1\le i\le\delta$. We model $c'$ 
  by the binary variables $x_i$ for $1\le i\le n$, i.e., the $i$th component of $c'$ equals $x_i$. If $c'$ has weight $\gamma$, $c\in\mathcal{K}$ has 
  weight $\beta$, and the number of common ones of $c$ and $c'$ is $\alpha$, then $c'+c\in\mathcal{K}'\backslash \mathcal{K}$ has weight 
  $\gamma+\beta-2\alpha$. If $\Lambda$ is an upper bound for the weight of a codeword in $\mathcal{K}'\backslash{K}$, then
  $$
    \frac{\gamma+\beta}{2}-\frac{\Lambda}{2} \le \alpha\le \frac{\gamma+\beta}{2}-8
  $$ 
  due to the minimum distance of $\mathcal{K}'$, where $\alpha=\sum_{1\le i\le n\,:\, c_i=1} x_i$ and $\beta=\operatorname{wt}(c)$. 
  In order to model the gap in the weight spectrum, i.e., if $c'+c$ does not has weight $16$ then the weight is at least $28$, we introduce 
  the binary variable $y_c$ and require
  \begin{equation}
    \label{ie_codeword}
    \frac{\gamma+\operatorname{wt}(c)}{2}-8-\left(\tfrac{\Lambda}{2}-8\right)\cdot y_c \,\le\,\sum_{1\le i\le n\,:\, c_i=1} x_i \,\le\, \frac{\gamma+\operatorname{wt}(c)}{2}-8-6y_c,\\
  \end{equation}  
  for all $c\in\mathcal{K}$ with $\operatorname{wt}(c)\neq 0$. If $y_c=0$ then these conditions are equivalent to $\operatorname{wt}(c'+c)=16$ and 
  to $28\le \operatorname{wt}(c'+c)\le\Lambda$ otherwise. Additionally we use the constraint 
  $\sum_{i=1}^{n}x_i=\gamma-\delta$, the target function $\sum_{i=1}^{n}ix_i$, and denote the corresponding integer linear program by 
  $\operatorname{ILP}_{\gamma,\Lambda,\delta,\mathcal{K}}$. 

  If for a given $\mathcal{K}$ a code $\mathcal{K}'$, satisfying the mentioned restrictions, exists, then $\operatorname{ILP}_{\gamma,\gamma,\delta,\mathcal{K}}$ 
  has a solution, where $\gamma$ is the maximum weight in $\mathcal{K}'\backslash\mathcal{K}$. Computationally we check that 
  for $\gamma\in\{16,28,32,\dots,64\}$ $\operatorname{ILP}_{\gamma,\gamma,\delta,\mathcal{K}}$ is feasible if and only if $\gamma=44$, $\delta=1$, and $\mathcal{K}$  
  has effective length $65$.    
\end{proof}
We remark that our ILP formulation is only a relaxation of the original problem for $\mathcal{K}'$, e.g., $\operatorname{wt}(c+c')=30\not\equiv 0\pmod 4$ 
is not excluded by inequality~(\ref{ie_codeword}). As a relaxation, we may ignore those constraints for some codewords $c\in\mathcal{K}$ or use the symmetry group of $\mathcal{K}$ (cf.\ 
the proof of Theorem~\ref{thm_unique_extended_code}). Since all ILPs can be solved in a few hours, which is negligible to the running times required in 
Section~\ref{sec_computer_classification}, we do not go into details here. 

As an example we spell out the details of $\operatorname{ILP}_{44,44,1,\mathcal{K}}$, where $\mathcal{K}$ has effective length $65$:
\begin{eqnarray}
\sum_{i=1}^{65}ix_i && \text{subject to}\label{ilp_ex}\\
\sum_{i=1}^{65}x_i=43 &&\nonumber\\
6y_c+\sum_{1\le i\le 65\,:\, c_i=1} x_i \le 34 && \forall c\in\mathcal{K}:\operatorname{wt}(c)=40,\nonumber\\
14y_c+\sum_{1\le i\le 65\,:\, c_i=1} x_i \ge 34 && \forall c\in\mathcal{K}:\operatorname{wt}(c)=40,\nonumber\\
6y_c+\sum_{1\le i\le 65\,:\, c_i=1} x_i \le 30 && \forall c\in\mathcal{K}:\operatorname{wt}(c)=32,\nonumber\\
14y_c+\sum_{1\le i\le 65\,:\, c_i=1} x_i \ge 30 && \forall c\in\mathcal{K}:\operatorname{wt}(c)=32,\nonumber\\
6y_c+\sum_{1\le i\le 65\,:\, c_i=1} x_i \le 26 && \forall c\in\mathcal{K}:\operatorname{wt}(c)=24,\nonumber\\
14y_c+\sum_{1\le i\le 65\,:\, c_i=1} x_i \ge 26 && \forall c\in\mathcal{K}:\operatorname{wt}(c)=24,\nonumber\\
x_i\in\{0,1\}&&\forall 1\le i\le 65,\nonumber\\ 
y_c\in\{0,1\}&&\forall c\in\mathcal{K}:\operatorname{wt}(c)\in\{24,32,40\}.\nonumber
\end{eqnarray}

We remark that in the general geometric context $\mathcal{K}'=\mathcal{K}$ is possible, which is excluded by $\dim(\mathcal{K})<13$ in our situation. Thus, 
Proposition~\ref{prop_max_weight_extended_44} applies in the case of a nodal sextic with $65$ ordinary double points, i.e., $\mathcal{K}$ has 
effective length $65$ and is uniquely characterized in Theorem~\ref{thm_classification_1}. We can even uniquely classify $\mathcal{K}'$:
\begin{theorem}
  \label{thm_unique_extended_code}
  Let $\mathcal{K}$ and $\mathcal{K}'$ be as in Proposition~\ref{prop_max_weight_extended_44}, then $\mathcal{K}'$ is given by
  $$
    \begin{pmatrix}  
    100001000000001101100100011101001111010100010111100101000000000000\\
    101001000110000010010001101001101111110010000011000110100000000000\\
    010000100111000110000001001101001100000111110111100010010000000000\\
    111101000011101101000000110101101000010111000001000010001000000000\\
    011010110000011000110100010000110010100011110000101110000100000000\\
    001010011101111010110000010110000001101110010010001000000010000000\\
    000110001111111000001111100010001000100010101010011000000001000000\\
    000001110010111001111100010101000000011110011000000110000000100000\\
    000111110001111000000011110011011111100001111001111110000000010000\\
    000000001111111000000000001111000111100000000111111110000000001000\\
    000000000000000111111111111111000000011111111111111110000000000100\\
    000000000000000000000000000000111111111111111111111110000000000010\\
    011100000011000001001001000011010000000010000010000010010000000001\\
    \end{pmatrix}.
  $$
\end{theorem}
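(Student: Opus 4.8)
The plan is to build on Proposition~\ref{prop_max_weight_extended_44}, which already forces $\mathcal{K}$ to be the effective-length-$65$ code (case (3)) of Theorem~\ref{thm_classification_1}, forces $\neff(\mathcal{K}')=66$, and shows that the maximum weight occurring in $\mathcal{K}'$ equals $44$. Since every nonzero codeword of $\mathcal{K}$ has weight in $\{24,32,40\}$, a codeword of weight $44$ must lie in the coset $\mathcal{K}'\backslash\mathcal{K}$; I fix such a codeword $c'$, so that $\mathcal{K}'=\langle\mathcal{K},c'\rangle$. As $\mathcal{K}$ is identically zero on exactly one of the $66$ coordinates whereas $\mathcal{K}'$ is not, the value of any codeword of $\mathcal{K}'\backslash\mathcal{K}$ on that coordinate is constant and equal to $1$; after permuting coordinates we may take it to be the last coordinate and write $c'=(x,1)$ with $x\in\mathbb{F}_2^{65}$ and $\operatorname{wt}(x)=43$. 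Conversely $\mathcal{K}'$ is recovered from $\mathcal{K}$ together with $x$ alone, and replacing $c'$ by another weight-$44$ codeword of the coset — which differs from $c'$ by an element of $\mathcal{K}$ — does not change $\mathcal{K}'$.

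Next I would turn the determination of $x$ into a finite search. For every nonzero $c\in\mathcal{K}$ the codeword $c'+c=(x+c,1)$ lies in $\mathcal{K}'\backslash\mathcal{K}$, so $\operatorname{wt}(c'+c)=44+\operatorname{wt}(c)-2\bigl|\operatorname{supp}(x)\cap\operatorname{supp}(c)\bigr|$ must be $4$-divisible, at least $16$, at most $44$, and different from $20$ and $24$, i.e.\ it must lie in $\{16,28,32,36,40,44\}$. For each fixed $c$ this confines $\sum_{i\in\operatorname{supp}(c)}x_i$ to an explicit small set of admissible values, so $x$ is a $0/1$ point satisfying these linear membership constraints — essentially $\operatorname{ILP}_{44,44,1,\mathcal{K}}$ from Proposition~\ref{prop_max_weight_extended_44}, but with the relaxed ``gap'' inequalities~(\ref{ie_codeword}) strengthened to the exact membership $\operatorname{wt}(c'+c)\in\{16,28,32,36,40,44\}$, which is easily modelled by branching on the weight value or by a few extra binary indicators per codeword. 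Unlike the proof of Proposition~\ref{prop_max_weight_extended_44}, where only feasibility was needed, here I would enumerate \emph{all} solutions; to keep this tractable I would quotient the search by the action of $\operatorname{Aut}(\mathcal{K})$ (order $15600$) on $\mathbb{F}_2^{65}$, which preserves the solution set, enumerating only orbit representatives. For each surviving $x$ I would form the $13$-dimensional code $\mathcal{K}'_x=\langle\mathcal{K},(x,1)\rangle$ and compute its canonical form with \texttt{CodeCan} or \texttt{Q-Extension}; the assertion is that every canonical form agrees with that of the displayed matrix, which one additionally verifies directly to contain a copy of $\mathcal{K}$ (case (3)), to have effective length $66$ and maximum weight $44$, and to have a coset $\mathcal{K}'\backslash\mathcal{K}$ with precisely the prescribed weights.

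The main obstacle I expect is simply the size of this complete enumeration: the ambient search space has $\binom{65}{43}$ candidate supports, so the computation only becomes feasible by aggressively exploiting both the symmetry group of $\mathcal{K}$ and the many strong linear constraints $\sum_{i\in\operatorname{supp}(c)}x_i\in(\text{small set})$ for pruning, much as the lattice-point enumeration of Section~\ref{sec_computer_classification} is used elsewhere. A smaller but genuine point of care is to model the $4$-divisibility gap exactly rather than in the relaxed form of~(\ref{ie_codeword}), so that no spurious $x$ survives, and to make sure that distinct coset representatives producing isomorphic $\mathcal{K}'_x$ are correctly identified before declaring uniqueness.
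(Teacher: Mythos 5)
Your strategy is sound, but it is a genuinely different computational route from the one in the paper. You parametrize $\mathcal{K}'$ by a maximum-weight codeword $c'=(x,1)$ with $\operatorname{wt}(x)=43$ and propose a \emph{complete enumeration} of all admissible $x$ (exact membership $\operatorname{wt}(c'+c)\in\{16,28,32,36,40,44\}$ for all $c\in\mathcal{K}$), reduced by $\operatorname{Aut}(\mathcal{K})$ and finished by canonical-form comparison. The paper never enumerates solutions: it first shows, by adding $y_c=1$ for all $c$ of weight $24$ or $40$ to $\operatorname{ILP}_{44,44,1,\mathcal{K}}$ and obtaining infeasibility, that the coset must contain a codeword of weight $16$; it then switches to the much smaller weight-$16$ parametrization $\operatorname{ILP}_{16,44,\mathcal{K}}$, derives the parity constraints~(\ref{eq_even}) from the $325$ weight-$4$ codewords of $\mathcal{K}^\perp$ (checked on one representative and transported by the transitive action of $\operatorname{Aut}(\mathcal{K})$ on $\mathcal{T}$), and concludes by excluding the $26$ known weight-$16$ supports via $\sum_{i\in E}x_i\le 14$ and showing the resulting ILP is infeasible — so only feasibility tests are needed, the relaxation~(\ref{ie_codeword}) suffices, and the whole computation runs in minutes, moreover yielding literal uniqueness of $\mathcal{K}'$ over the fixed coordinatization of $\mathcal{K}$. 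What your route buys is conceptual simplicity (no pivot to weight $16$, no derivation of valid inequalities) and it directly outputs the isomorphism classification; what it costs is that its correctness as a finished proof hinges on actually completing an all-solutions enumeration over supports of size $43$ in $65$ coordinates (there are already $130$ weight-$44$ coset words for the target code alone), which is far heavier than the paper's infeasibility checks, and on one small extra step you should make explicit: quotienting by $\operatorname{Aut}(\mathcal{K})$ and canonicalizing only gives uniqueness up to isomorphism, so to recover the statement as phrased (the specific code extending the fixed $\mathcal{K}$) you must additionally observe that all enumerated $x$ generate the same subspace $\langle\mathcal{K},(x,1)\rangle$, or argue via $\operatorname{Aut}(\mathcal{K})\subseteq\operatorname{Aut}(\mathcal{K}')$ for the displayed code.
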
   
\begin{proof}
  First we note that the weight enumerator of $\mathcal{K}'\backslash\mathcal{K}$ is given by
  $W(z)=26z^{16}+650z^{28}+1690z^{32}+1300z^{36}+300z^{40}+130z^{44}$ and $\mathcal{K'}$ is a $\codeeff{13}{66}{16}{2}$ code, i.e., 
  all conditions for $\mathcal{K}'$ are satisfied.

  From Proposition~\ref{prop_max_weight_extended_44} we conclude that $\mathcal{K}$ is the code of effective length $65$ in Theorem~\ref{thm_classification_1} 
  and that $\mathcal{K}'$ has maximum weight $44$, which is indeed attained. Now we add the constraints $y_c=1$ to the ILP formulation~(\ref{ilp_ex}) 
  for all $c\in\mathcal{K}:\operatorname{wt}(c)\in\{24,40\}$, i.e., we require $\operatorname{wt}(c+c')\neq 16$. Since this ILP does not have a solution, 
  we can conclude that $\mathcal{K}'\backslash\mathcal{K}$ contains a codeword of weight $16$.
  
  Next we consider the $325$ codewords of the dual code $\mathcal{K}^\perp$ of weight $4$, which is the minimum dual weight. An example 
  is given by the codeword in $\mathbb{F}_2^{65}$ that has its four ones in coordinates $\{1,6,21,23\}$, i.e., the corresponding columns of $\mathcal{K}$ 
  sum up to the all zero vector. Let $\mathcal{T}$ be the set of $4$-subsets of $\{1,\dots,65\}$ that correspond to the $325$ codewords of the dual 
  code $\mathcal{K}^\perp$ of weight $4$. Using $\operatorname{ILP}_{16,44,\mathcal{K}}$ we can check (by prescribing) that no solution can satisfy 
  $\left(x_{1},x_{6},x_{21},x_{23}\right)\in\Big\{(0,0,0,1),(0,1,1,1),(1,1,1,1)\Big\}$. It can be computationally checked that the automorphism group 
  of $\mathcal{K}$ (of order $15600$) acts transitively on the set of $4$-tuples $\left(i_1,i_2,i_3,i_4\right)$ with $\left\{i_1,i_2,i_3,i_4\right\}\in\mathcal{T}$. 
  Thus, the conditions
  \begin{equation}
    \label{eq_even}
    \sum_{i\in T} x_i +2z_T= 2\quad\forall T\in\mathcal{T},
  \end{equation} 
  where $z_T\in\{0,1\}$ for all $T\in\mathcal{T}$, are satisfied for all integral solutions of $\operatorname{ILP}_{16,44,\mathcal{K}}$. We can check that the 
  code from the statement contains exactly $26$ codewords of weight $16$. Let $\mathcal{E}$ be the corresponding set of $15$-subsets of $\{1,\dots,65\}$ where 
  the codewords have a one. If $\sum_{i\in E} x_i=15$ for an $x\in\mathbb{F}_2^{65}$ with $\sum_{i=1}^{65}x_i=15$ and $E\in\mathcal{E}$, then $x$ is a solution of 
  $\operatorname{ILP}_{16,44,\mathcal{K}}$ that corresponds to the code $\mathcal{K}'$ from the statement. Thus we consider $\operatorname{ILP}_{16,44,\mathcal{K}}$ 
  with the additional constraints (\ref{eq_even}) and 
  \begin{equation}
    \sum_{i\in E} x_i\le 14
  \end{equation} 
  for all $E\in\mathcal{E}$. It turns out that no solution of that ILP exists so that we can conclude the statement. 
\end{proof}
Note that we do not impose that the automorphism group of $\mathcal{K}'$ contains the automorphism group $\operatorname{Aut}(\mathcal{K})$ of $\mathcal{K}$, when 
restricted to the first $65$ coordinates. However, the final solution has this property. In general, for $\pi\in \operatorname{Aut}(\mathcal{K})$ 
and $x$ a solution of $\operatorname{ILP}_{16,44,\mathcal{K}}$ we have that $\pi(x)$ is also a solution of $\operatorname{ILP}_{16,44,\mathcal{K}}$, 
which might correspond to either the same or a different code $\mathcal{K}'$. We remark that all ILP computations took just a few minutes. 

The unique possibility for $\mathcal{K}'$ can also be constructed as follows. Let $\mathcal{K}$ be the code of effective length $65$ in Theorem~\ref{thm_classification_1} 
and $\mathcal{D}$ be the code generated by the codewords of weight $4$ in $K^\perp$. It can be checked that $\mathcal{K}\le \mathcal{D}^\perp$ and $\dim(\mathcal{D}^\perp)=14$. 
Moreover $\mathcal{D}^\perp$ is partitioned by the cosets of $\mathcal{K}$ into sets of codewords of $\mathcal{D}^\perp$ whose weights are equivalent to either $0$, $1$, $2$, or 
$3$ modulo $4$. Taking the unique code $\overline{\mathcal{K}}$ of dimension $13$ with  $\mathcal{K}\le\overline{\mathcal{K}} \le\mathcal{D}^\perp$ whose codewords 
have weights that are either congruent to $0$ or $3$ modulo $4$ and adding a parity bit gives $\mathcal{K}'$.  
  
We remark that some parts of the computations in the proofs of Proposition~\ref{prop_max_weight_extended_44} and Theorem~\ref{thm_unique_extended_code} 
can be replaced by theoretical reasoning's. For example, if $\mathcal{K}'\backslash\mathcal{K}$ contains a codeword of weight $64$, then the corresponding residual 
code $\mathcal{R}$ in $\mathcal{K}'$ is a $2$-divisible linear code of effective length $n+1-64$, where $n$ is the effective length of $\mathcal{K}$. Since 
$\mathcal{K}$ and $\mathcal{K}'$ are projective, also $\mathcal{R}$ is projective. However, the smallest $2$-divisible projective binary linear code has 
length $3$, so that we obtain a contradiction. If we already know that $\sum_{i\in T} x_i\equiv 0\pmod 2$ for all $T\in\mathcal{T}$, see constraint~(\ref{eq_even}), 
then we can conclude that $\mathcal{K}'$ has to arise by adding a parity bit to $\overline{K}$ where $\mathcal{K}\le\overline{\mathcal{K}} \le\mathcal{D}^\perp$ 
and $\dim(\overline{\mathcal{K}})=13$. For nodal sextics it is of some interest that $\mathcal{K}'\backslash \mathcal{K}$ contains a codeword of weight $32$. Of course 
this directly follows from Theorem~\ref{thm_unique_extended_code}. However, we can also apply the first four MacWilliams identities 
together with $n=66$, $k=13$, $a_0=1$, $\sum_{i=1}^{15}a_i+\sum_{i>44}a_i+\sum_{i\,:\,i\not\equiv 0\pmod 4}a_i=0$, $a_{20}=0$, $a_{24}=390$, $a_{32}\ge 3055$, $a_{40}\ge 650$, $a_0^*=1$, 
$a_1^*=0$, $a_2^*=0$, and $a_3^*=0$ gives $a_{32}\ge 3535$, i.e., $\mathcal{K}'\backslash \mathcal{K}$ contains at least $480$ codewords of weight $32$.

Of course we can also apply the computational techniques of the proof of Proposition~\ref{prop_max_weight_extended_44} to $\codeeff{n}{11}{\{24,32,40\}}{2}$ or similar codes. It turns 
out that the unique $\codeeff{62}{11}{\{24,32,40\}}{2}$ code does not allow a code $\mathcal{K}'$ as specified in Proposition~\ref{prop_max_weight_extended_44}. The nine 
$\codeeff{63}{11}{\{24,32,40\}}{}2$ codes do not allow a code $\mathcal{K}'$ as specified in Proposition~\ref{prop_max_weight_extended_44} with maximum weight strictly larger than 
$44$ in $\mathcal{K}'\backslash\mathcal{K}$. However, for the $\codeeff{63}{11}{\{24,32,40\}}{2}$ code
$$
\mathcal{K}=\begin{pmatrix}
100001100110000011111100101110000001101100110011000010000000000\\
111100001110001011101011100000000001011100001101001001000000000\\
011100100010001010001001000110100101001000111011110100100000000\\
010100100110000000011000111010001100110100011101011100010000000\\
001101101000000100110011010000101100011100001111110000001000000\\
001011111001000001110111001011100100000010000100101100000100000\\
001000011000110100001110000110011011110011111100000000000010000\\
000111111000010011111110111110000111110001110100011100000001000\\
000000000111110000000001111110000000001111110011111100000000100\\
000000000000001111111111111110000000000000001111111100000000010\\
000000000000000000000000000001111111111111111111111100000000001\\
\end{pmatrix}
$$ 
with weight enumerator $W(z)=1z^0+310z^{24}+1551z^{32}+186z^{40}$ 
we can add the codeword $x\in\mathbb{F}_2^{64}$ of weight $36$ with
\begin{eqnarray*}
  \{i\,:\, x_i=1\}&=&\{
  2,6,9,10,17,19,22,27,28,30,33,34,35,36,39,41,42,44,\\
  &&45,46,48,49,50,51,53,54,55,56,57,58,59,60,61,62,63,64  
  \}
\end{eqnarray*}
to obtain $\mathcal{K}'$ such that $\mathcal{K}'\backslash\mathcal{K}$ has weight enumerator 
$W(z)=896z^{28}+1152^{36}$ or
the codeword $x\in\mathbb{F}_2^{64}$ of weight $40$ with
\begin{eqnarray*}
  \{i\,:\, x_i=1\}&=&\{
  4,6,8,9,12,14,15,21,23,25,26,28,29,30,31,33,37,38,39,41,\\
  &&43,44,45,46,47,48,49,50,51,52,53,55,56,57,58,59,60,62,63  
  \}
\end{eqnarray*}
to obtain $\mathcal{K}'$ such that $\mathcal{K}'\backslash\mathcal{K}$ has weight enumerator 
$W(z)=768z^{28}+384z^{32}+768^{36}+128z^{40}$. The other eight $\codeeff{63}{11}{\{24,32,40\}}{2}$ codes 
do not allow a code $\mathcal{K}'$, as specified in Proposition~\ref{prop_max_weight_extended_44}, 
at all. 
For the stated code $\mathcal{K}$ the possible maximum weights of $\mathcal{K}'\backslash\mathcal{K}$ are 
either $36$ or $40$. In both cases the minimum distance of $\mathcal{K}'\backslash\mathcal{K}$ is $28$, 
i.e., no codewords of weight $16$ can occur. (Computationally checked by minimizing 
$\sum_{c\in\mathcal{C}\backslash 0} y_c$ in the corresponding ILP, i.e., the minimum is attained at 
target value $2047$ in both cases.)

From the $47$ $\codeeff{64}{11}{\{24,32,40\}}{2}$ codes only the following two do allow a code $\mathcal{K}'$, as specified in Proposition~\ref{prop_max_weight_extended_44}: 
$$
\mathcal{K}=\begin{pmatrix}
0001010110111111100001010001100001010011010000011010010000000000\\
0100100000011101100100010001100001001110011011101001101000000000\\
1001100100010000101001010110000100010111011010111001000100000000\\
0101000000001110001100101110000100001110101000111011100010000000\\
1100011000100110011000000110000011011100000101010111100001000000\\
0001001001100010000100101101010000110101001101001111100000100000\\
0011000100011110000011101110111110001000100011000000100000010000\\
0011000011111110000000011101111110111011100000111111100000001000\\
0000111111111110000000000011110001111000011111111111100000000100\\
0000000000000001111111111111110000000111111111111111100000000010\\
0000000000000000000000000000001111111111111111111111100000000001\\
\end{pmatrix}
$$
and
$$
\mathcal{K}=\begin{pmatrix}
0000011010000001000110111110001110100101111000110010010000000000\\
0111001100100010000011101000001011100100101101101010001000000000\\
0000101110101011000110100111100101110100100101000000000100000000\\
0111101110001010001100101011001010000101010011010000000010000000\\
1001110100000110100000011010000101011011000100111001100001000000\\
1101010010011110010001000110111100000011001100001000100000100000\\
0101010001111110001011000001110100001000000011000111100000010000\\
0011001111111110000111000000010010111000111110111111100000001000\\
0000111111111110000000111111110001111000000001111111100000000100\\
0000000000000001111111111111110000000111111111111111100000000010\\
0000000000000000000000000000001111111111111111111111100000000001\\
\end{pmatrix}.
$$
Both codes have weight enumerator $W(z)=1z^0+246z^{24}+1551z^{32}+250z^{40}$, while the first group has an automorphism group of 
order $240$ and the second code has an automorphism group of order $5760$.
Again, the minimum weight in $\mathcal{K}'\backslash\mathcal{K}$ has to be $28$. In both cases we found an 
example with weight enumerator $W(z)=672z^{28}+352z^{32}+864z^{36}+160z^{40}$ of $\mathcal{K}'\backslash\mathcal{K}$ by adding an 
additional codeword of weight $40$:   
\begin{eqnarray*}
  \{i\,:\, x_i=1\}&=&\{
  2,7,11,12,14,16,18,21,22,27,29,30,31,33,34,35,37,39,40,41,\\
  &&42,43,45,46,47,48,50,51,52,55,56,57,58,59,60,61,62,63,64,65  
  \}
\end{eqnarray*}
and
\begin{eqnarray*}
  \{i\,:\, x_i=1\}&=&\{
  2,6,8,11,12,15,16,19,24,25,26,30,31,32,33,35,37,38,40,41,\\
  &&42,43,44,46,47,48,50,52,53,54,55,56,57,58,59,50,61,62,66,65  
  \}.
\end{eqnarray*}
In $\mathcal{K}'\backslash\mathcal{K}$ no maximum weight smaller than $40$ or larger than $44$ is possible. We remark that there do indeed 
exist $[\le 64,11,\{24,32,40,56\}]_2$ codes that contain a codeword of weight $56$. There counts per effective length are given by 
$62^1 63^{18} 64^{281}$.

\section{Classification of the optimal $\codeeff{n}{k}{24}{2}$ codes that are $8$-divisible}
\label{sec_class_optimal} 

In this appendix we list all $\codeeff{n}{k}{24}{2}$ codes that achieve the optimal minimal Hamming distance and are $8$-divisible. 
For each case we give a $n\times k$-generator matrix, the weight polynomial $W(z)$, and the order of the automorphism group of the corresponding 
multiset of points.  

\medskip

\noindent
$\codeeff{n}{k}{d}{q}=\codeeff{24}{1}{24}{2}$\\
111111111111111111111111\\
$W(z)=1z^{0}+1z^{24}$\\ 
$\# \operatorname{Aut}=1$

\medskip

\noindent
$\codeeff{n}{k}{d}{q}=\codeeff{36}{2}{24}{2}$\\
111111111111111111111110000000000010\\
111111111111000000000001111111111101\\
$W(z)=1z^{0}+3z^{24}$\\ 
$\# \operatorname{Aut}=6$

\medskip

\noindent
$\codeeff{n}{k}{d}{q}=\codeeff{42}{3}{24}{2}$\\ 
111111111111111111111110000000000000000100\\
111111111111000000000001111111111100000010\\
111111000000111111000001111110000011111001\\
$1W(z)=z^{0}+7z^{24}$\\
$\# \operatorname{Aut}=168$

\medskip

\noindent
$\codeeff{n}{k}{d}{q}=\codeeff{44}{3}{24}{2}$\\ 
11111111111111111111111000000000000000000100\\
11111111111100000000000111111111110000000010\\
11110000000011110000000111111110001111111001\\
$W(z)=1z^{0}+6z^{24}+1z^{32}$\\
$\# \operatorname{Aut}=24$

\medskip

\noindent
$\codeeff{n}{k}{d}{q}=\codeeff{45}{4}{24}{2}$\\
111111111111111111111110000000000000000001000\\
111111111111000000000001111111111100000000100\\
111111000000111111000001111110000011111000010\\
111000111000111000111001110001110011100110001\\
$W(z)=1z^{0}+15z^{24}$\\
$\# \operatorname{Aut}=20160$\\

\medskip

\noindent
$\codeeff{n}{k}{d}{q}=\codeeff{46}{4}{24}{2}$\\
1111111111111111111111100000000000000000001000\\
1111111111110000000000011111111111000000000100\\
1111110000001111110000011111100000111110000010\\
1100001100001100001100011110011110111101110001\\
$W(z)=1z^{0}+14z^{24}+1z^{32}$\\
$\# \operatorname{Aut}=1344$

\medskip

\noindent
$\codeeff{n}{k}{d}{q}=\codeeff{47}{4}{24}{2}$\\
11111111111111111111111000000000000000000001000\\
11111111111100000000000111111111110000000000100\\
11111100000011111100000111111000001111100000010\\
10000010000010000010000111110111111111111110001\\
$W(z)=1z^{0}+14z^{24}+1z^{40}$\\
$\# \operatorname{Aut}=1344$

\medskip

\noindent
$\codeeff{n}{k}{d}{q}=\codeeff{47}{4}{24}{2}$\\
11111111111111111111111000000000000000000001000\\
11111111111100000000000111111111110000000000100\\
11111100000011111100000111111000001111100000010\\
10000010000011100011100111000111001111111110001\\
$W(z)=1z^{0}+13z^{24}+2z^{32}$\\
$\# \operatorname{Aut}=192$

\medskip

\noindent
$\codeeff{n}{k}{d}{q}=\codeeff{48}{4}{24}{2}$\\
111111111111111111111110000000000000000000001000\\
111111111111000000000001111111111100000000000100\\
111111000000111111000001111110000011111000000010\\
000000110000110000000001111111111011110111110001\\
$W(z)=1z^{0}+13z^{24}+1z^{32}+1z^{40}$\\
$\# \operatorname{Aut}=96$

\medskip

\noindent
$\codeeff{n}{k}{d}{q}=\codeeff{48}{4}{24}{2}$\\
111111111111111111111110000000000000000000001000\\
111111111111000000000001111111111100000000000100\\
111111000000111111000001111110000011111000000010\\
000000110000110000111101100001111011110111110001\\
$W(z)=1z^{0}+12z^{24}+3z^{32}$\\
$\# \operatorname{Aut}=48$

\medskip

\noindent
$\codeeff{n}{k}{d}{q}=\codeeff{48}{4}{24}{2}$\\
111111111111111111111110000000000000000000001000\\
111111111111000000000001111111111100000000000100\\
111111000000111111000001111110000011111000000010\\
000000111111111111000001111110000000000111110001\\
$W(z)=1z^{0}+14z^{24}+1z^{48}$\\
$\# \operatorname{Aut}=1344$

\medskip

\noindent
$\codeeff{n}{k}{d}{q}=\codeeff{48}{4}{24}{2}$\\
111111111111111111111110000000000000000000001000\\
111111111111000000000001111111111100000000000100\\
111100000000111100000001111111100011111110000010\\
000011110000111111110001111000000011110001110001\\
$W(z)=1z^{0}+12z^{24}+3z^{32}$\\
$\# \operatorname{Aut}=576$

\medskip

\noindent
$\codeeff{n}{k}{d}{q}=\codeeff{47}{5}{24}{2}$\\
11111111111111111111111000000000000000000010000\\
11111111111100000000000111111111110000000001000\\
11111100000011111100000111111000001111100000100\\
11100011100011100011100111000111001110011000010\\
10010010010010010010010110110110111101111100001\\
$W(z)=1z^{0}+30z^{24}+1z^{32}$\\
$\# \operatorname{Aut}=322560$

\medskip

\noindent
$\codeeff{n}{k}{d}{q}=\codeeff{48}{5}{24}{2}$\\
111111111111111111111110000000000000000000010000\\
111111111111000000000001111111111100000000001000\\
111111000000111111000001111110000011111000000100\\
111000111000111000111001110001110011100110000010\\
000100100000100000000101101111111111111111100001\\
$W(z)=1z^{0}+29z^{24}+1z^{32}+1z^{40}$\\
$\# \operatorname{Aut}=10752$

\medskip

\noindent
$\codeeff{n}{k}{d}{q}=\codeeff{48}{5}{24}{2}$\\
111111111111111111111110000000000000000000010000\\
111111111111000000000001111111111100000000001000\\
111111000000111111000001111110000011111000000100\\
111000111000111000111001110001110011100110000010\\
000100100000100110110101101001001111111111100001\\
$W(z)=1z^{0}+28z^{24}+3z^{32}$\\
$\# \operatorname{Aut}=2304$

\medskip

\noindent
$\codeeff{n}{k}{d}{q}=\codeeff{48}{5}{24}{2}$\\
111111111111111111111110000000000000000000010000\\
111111111111000000000001111111111100000000001000\\
111111000000111111000001111110000011111000000100\\
111000111000111000111001110001110011100110000010\\
111000000111000111111000001111110011100001100001\\
$W(z)=1z^{0}+30z^{24}+1z^{48}$\\
$\# \operatorname{Aut}=322560$

\medskip

\noindent
$\codeeff{n}{k}{d}{q}=\codeeff{48}{5}{24}{2}$\\
111111111111111111111110000000000000000000010000\\
111111111111000000000001111111111100000000001000\\
111111000000111111000001111110000011111000000100\\
110000110000110000110001111001111011110111000010\\
001100111100111100001101100111100011000110100001\\
$W(z)=1z^{0}+28z^{24}+3z^{32}$\\
$\# \operatorname{Aut}=64512$

\medskip

\noindent
$\codeeff{n}{k}{d}{q}=\codeeff{49}{5}{24}{2}$\\
1111111111111111111111100000000000000000000010000\\
1111111111110000000000011111111111000000000001000\\
1111110000001111110000011111100000111110000000100\\
0000001100001100001111011000011110111101111100010\\
1100000011000011001100010111011101111011110000001\\
$W(z)=1z^{0}+26z^{24}+5z^{32}$\\
$\# \operatorname{Aut}=120$

\medskip

\noindent
$\codeeff{n}{k}{d}{q}=\codeeff{50}{5}{24}{2}$\\
11111111111111111111111000000000000000000000010000\\
11111111111100000000000111111111110000000000001000\\
11111100000011111100000111111000001111100000000100\\
00000011000011000000000111111111101111011111000010\\
10000000111011111011100111000100001110011100100001\\
$W(z)=1z^{0}+25z^{24}+5z^{32}+1z^{40}$\\
$\# \operatorname{Aut}=120$

\medskip

\noindent
$\codeeff{n}{k}{d}{q}=\codeeff{50}{5}{24}{2}$\\
11111111111111111111111000000000000000000000010000\\
11111111111100000000000111111111110000000000001000\\
11111100000011111100000111111000001111100000000100\\
00000011000011000000000111111111101111011111000010\\
11000000110011110011110110000110001111011000100001\\
$W(z)=1z^{0}+25z^{24}+5z^{32}+1z^{40}$\\
$\# \operatorname{Aut}=120$

\medskip

\noindent
$\codeeff{n}{k}{d}{q}=\codeeff{50}{5}{24}{2}$\\
11111111111111111111111000000000000000000000010000\\
11111111111100000000000111111111110000000000001000\\
11111100000011111100000111111000001111100000000100\\
00000011000011000011110110000111101111011111000010\\
10000000111010110010000111110110011110011100100001\\
$W(z)=1z^{0}+24z^{24}+7z^{32}$\\
$\# \operatorname{Aut}=72$

\medskip

\noindent
$\codeeff{n}{k}{d}{q}=\codeeff{50}{5}{24}{2}$\\
11111111111111111111111000000000000000000000010000\\
11111111111100000000000111111111110000000000001000\\
11111100000011111100000111111000001111100000000100\\
00000011000011000011110110000111101111011111000010\\
11000000110010100010001101110111011111011000100001\\
$W(z)=1z^{0}+24z^{24}+7z^{32}$\\
$\# \operatorname{Aut}=48$

\medskip

\noindent
$\codeeff{n}{k}{d}{q}=\codeeff{50}{5}{24}{2}$\\
11111111111111111111111000000000000000000000010000\\
11111111111100000000000111111111110000000000001000\\
11111100000011111100000111111000001111100000000100\\
00000011000011000011110110000111101111011111000010\\
11000000110000110011000111111110001100011110100001\\
$W(z)=1z^{0}+24z^{24}+7z^{32}$\\
$\# \operatorname{Aut}=144$

\medskip

\noindent
$\codeeff{n}{k}{d}{q}=\codeeff{50}{5}{24}{2}$\\
11111111111111111111111000000000000000000000010000\\
11111111111100000000000111111111110000000000001000\\
11111100000011111100000111111000001111100000000100\\
00000011000011000011110110000111101111011111000010\\
11100000100000100011001101111111001110011100100001\\
$W(z)=1z^{0}+24z^{24}+7z^{32}$\\
$\# \operatorname{Aut}=48$

\medskip

\noindent
$\codeeff{n}{k}{d}{q}=\codeeff{48}{6}{24}{2}$\\
111111111111111111111110000000000000000000100000\\
111111111111000000000001111111111100000000010000\\
111111000000111111000001111110000011111000001000\\
111000111000111000111001110001110011100110000100\\
100100100100100100100101101101101111011111000010\\
010110110010110010010111001011011010110101000001\\
$W(z)=1z^{0}+60z^{24}+3z^{32}$\\
$\# \operatorname{Aut}=30965760$

\medskip

\noindent
$\codeeff{n}{k}{d}{q}=\codeeff{49}{6}{24}{2}$\\
1111111111111111111111100000000000000000000100000\\
1111111111110000000000011111111111000000000010000\\
1111110000001111110000011111100000111110000001000\\
1110001110001110001110011100011100111001100000100\\
0001001000001001101101011010010011111111111000010\\
1000101101100101001011100011011110110001011000001\\
$W(z)=1z^{0}+56z^{24}+7z^{32}$\\
$\# \operatorname{Aut}=5040$

\medskip

\noindent
$\codeeff{n}{k}{d}{q}=\codeeff{50}{6}{24}{2}$\\
11111111111111111111111000000000000000000000100000\\
11111111111100000000000111111111110000000000010000\\
11111100000011111100000111111000001111100000001000\\
11000011000011000011000111100111101111011100000100\\
00110011110011110000110110011110001100011010000010\\
00000000101000101011110001010111011110111111000001\\
$W(z)=1z^{0}+52z^{24}+11z^{32}$\\
$\# \operatorname{Aut}=192$

\medskip

\noindent
$\codeeff{n}{k}{d}{q}=\codeeff{50}{6}{24}{2}$\\
11111111111111111111111000000000000000000000100000\\
11111111111100000000000111111111110000000000010000\\
11111100000011111100000111111000001111100000001000\\
00000011000011000011110110000111101111011111000100\\
11000000110000110011000101110111011110111100000010\\
00110000101010101111101001001110001110110011000001\\
$W(z)=1z^{0}+52z^{24}+11z^{32}$\\
$\# \operatorname{Aut}=120$

\medskip

\noindent
$\codeeff{n}{k}{d}{q}=\codeeff{51}{6}{24}{2}$\\
111111111111111111111110000000000000000000000100000\\
111111111111000000000001111111111100000000000010000\\
111111000000111111000001111110000011111000000001000\\
000000110000110000111101100001111011110111110000100\\
100000001110101100100001111101100111100111001000010\\
010000001101010011010001111011010111010100111000001\\
$W(z)=1z^{0}+48z^{24}+15z^{32}$\\
$\# \operatorname{Aut}=96$

\medskip

\noindent
$\codeeff{n}{k}{d}{q}=\codeeff{51}{6}{24}{2}$\\
111111111111111111111110000000000000000000000100000\\
111111111111000000000001111111111100000000000010000\\
111111000000111111000001111110000011111000000001000\\
000000110000110000111101100001111011110111110000100\\
100000001110101100100001111101100111100111001000010\\
011000001001000011110001011011011111101110100000001\\
$W(z)=1z^{0}+48z^{24}+15z^{32}$\\
$\# \operatorname{Aut}=12$

\medskip

\noindent
$\codeeff{n}{k}{d}{q}=\codeeff{51}{6}{24}{2}$\\
111111111111111111111110000000000000000000000100000\\
111111111111000000000001111111111100000000000010000\\
111111000000111111000001111110000011111000000001000\\
000000110000110000111101100001111011110111110000100\\
100000001110101100100001111101100111100111001000010\\
011000001100010010010011011011110111110000111000001\\
$W(z)=1z^{0}+48z^{24}+15z^{32}$\\
$\# \operatorname{Aut}=12$

\medskip

\noindent
$\codeeff{n}{k}{d}{q}=\codeeff{51}{6}{24}{2}$\\
111111111111111111111110000000000000000000000100000\\
111111111111000000000001111111111100000000000010000\\
111111000000111111000001111110000011111000000001000\\
000000110000110000111101100001111011110111110000100\\
110000001100101000100011011101110111110110001000010\\
101000100010011110000001111001110110001101111000001\\
$W(z)=1z^{0}+48z^{24}+15z^{32}$\\
$\# \operatorname{Aut}=720$

\medskip

\noindent
$\codeeff{n}{k}{d}{q}=\codeeff{51}{6}{24}{2}$\\
111111111111111111111110000000000000000000000100000\\
111111111111000000000001111111111100000000000010000\\
111111000000111111000001111110000011111000000001000\\
000000110000110000111101100001111011110111110000100\\
110000001100001100110001111111100011000111101000010\\
001100000011101110111010011001100011101110010000001\\
$W(z)=1z^{0}+48z^{24}+15z^{32}$\\
$\# \operatorname{Aut}=360$

\medskip

\noindent
$\codeeff{n}{k}{d}{q}=\codeeff{50}{7}{24}{2}$\\
11111111111111111111111000000000000000000001000000\\
11111111111100000000000111111111110000000000100000\\
11111100000011111100000111111000001111100000010000\\
11100011100011100011100111000111001110011000001000\\
00010010000010011011010110100100111111111110000100\\
10001011011001010010111000110111101100010110000010\\
01001010110100101001111100001011111001011100000001\\
$W(z)=1z^{0}+108z^{24}+19z^{32}$\\
$\# \operatorname{Aut}=5760$

\medskip

\noindent
$\codeeff{n}{k}{d}{q}=\codeeff{51}{7}{24}{2}$\\
111111111111111111111110000000000000000000001000000\\
111111111111000000000001111111111100000000000100000\\
111111000000111111000001111110000011111000000010000\\
110000110000110000110001111001111011110111000001000\\
001100111100111100001101100111100011000110100000100\\
000000001010001010111100010101110111101111110000010\\
000011100100100100001101001111011110111101010000001\\
$W(z)=1z^{0}+100z^{24}+27z^{32}$\\
$\# \operatorname{Aut}=240$

\medskip

\noindent
$\codeeff{n}{k}{d}{q}=\codeeff{52}{7}{24}{2}$\\
1111111111111111111111100000000000000000000001000000\\
1111111111110000000000011111111111000000000000100000\\
1111110000001111110000011111100000111110000000010000\\
0000001100001100001111011000011110111101111100001000\\
1000000011101011001000011111011001111001110010000100\\
0100000011010100110100011110110101110101001110000010\\
0011001000011111001011110111000000001010111110000001\\
$W(z)=1z^{0}+92z^{24}+35z^{32}$\\
$\# \operatorname{Aut}=16$

\medskip

\noindent
$\codeeff{n}{k}{d}{q}=\codeeff{52}{7}{24}{2}$\\
1111111111111111111111100000000000000000000001000000\\
1111111111110000000000011111111111000000000000100000\\
1111110000001111110000011111100000111110000000010000\\
0000001100001100001111011000011110111101111100001000\\
1000000011101011001000011111011001111001110010000100\\
0110000010010000111100010110110111111011101000000010\\
1111000000000110000010101110101111100101111010000001\\
$W(z)=1z^{0}+92z^{24}+35z^{32}$\\
$\# \operatorname{Aut}=32$

\medskip

\noindent
$\codeeff{n}{k}{d}{q}=\codeeff{52}{7}{24}{2}$\\
1111111111111111111111100000000000000000000001000000\\
1111111111110000000000011111111111000000000000100000\\
1111110000001111110000011111100000111110000000010000\\
0000001100001100001111011000011110111101111100001000\\
1000000011101011001000011111011001111001110010000100\\
0110000010010000111100010110110111111011101000000010\\
0001100011001011101110100101000110111010011100000001\\
$W(z)=1z^{0}+92z^{24}+35z^{32}$\\
$\# \operatorname{Aut}=168$

\medskip

\noindent
$\codeeff{n}{k}{d}{q}=\codeeff{52}{7}{24}{2}$\\
1111111111111111111111100000000000000000000001000000\\
1111111111110000000000011111111111000000000000100000\\
1111110000001111110000011111100000111110000000010000\\
0000001100001100001111011000011110111101111100001000\\
1000000011101011001000011111011001111001110010000100\\
0110000010010000111100010110110111111011101000000010\\
0101001010000011111110101000111000111010010110000001\\
$W(z)=1z^{0}+92z^{24}+35z^{32}$\\
$\# \operatorname{Aut}=42$

\medskip

\noindent
$\codeeff{n}{k}{d}{q}=\codeeff{53}{7}{24}{2}$\\
11111111111111111111111000000000000000000000001000000\\
11111111111100000000000111111111110000000000000100000\\
11111100000011111100000111111000001111100000000010000\\
00000011000011000011110110000111101111011111000001000\\
11000000110000110011000111111110001100011110100000100\\
00000000001110101110001111100110001011111001110000010\\
00100000101110011000100111110001111110000111100000001\\
$W(z)=1z^{0}+84z^{24}+43z^{32}$\\
$\# \operatorname{Aut}=3$

\medskip

\noindent
$\codeeff{n}{k}{d}{q}=\codeeff{53}{7}{24}{2}$\\
11111111111111111111111000000000000000000000001000000\\
11111111111100000000000111111111110000000000000100000\\
11111100000011111100000111111000001111100000000010000\\
00000011000011000011110110000111101111011111000001000\\
11000000110000110011000111111110001100011110100000100\\
00000000001110101110001111100110001011111001110000010\\
11000010100001101111110001010100011011110100010000001\\
$W(z)=1z^{0}+84z^{24}+43z^{32}$\\
$\# \operatorname{Aut}=168$

\medskip

\noindent
$\codeeff{n}{k}{d}{q}=\codeeff{53}{7}{24}{2}$\\
11111111111111111111111000000000000000000000001000000\\
11111111111100000000000111111111110000000000000100000\\
11111100000011111100000111111000001111100000000010000\\
00000011000011000011110110000111101111011111000001000\\
11000000110000110011000111111110001100011110100000100\\
00000000101011101000110101110100011011111100110000010\\
00110000001100010110100011110111010111111010000000001\\
$W(z)=1z^{0}+84z^{24}+43z^{32}$\\
$\# \operatorname{Aut}=4$

\medskip

\noindent
$\codeeff{n}{k}{d}{q}=\codeeff{53}{7}{24}{2}$\\
11111111111111111111111000000000000000000000001000000\\
11111111111100000000000111111111110000000000000100000\\
11111100000011111100000111111000001111100000000010000\\
00000011000011000011110110000111101111011111000001000\\
11000000110000110011000111111110001100011110100000100\\
00000000101011101000110101110100011011111100110000010\\
10110000001000000110101111110001111100111101000000001\\
$W(z)=1z^{0}+84z^{24}+43z^{32}$\\
$\# \operatorname{Aut}=12$

\medskip

\noindent
$\codeeff{n}{k}{d}{q}=\codeeff{53}{7}{24}{2}$\\
11111111111111111111111000000000000000000000001000000\\
11111111111100000000000111111111110000000000000100000\\
11111100000011111100000111111000001111100000000010000\\
00000011000011000011110110000111101111011111000001000\\
11000000110000110011000111111110001100011110100000100\\
00000000101011101000110101110100011011111100110000010\\
00111010010100100000100001110111001111110011100000001\\
$W(z)=1z^{0}+84z^{24}+43z^{32}$\\
$\# \operatorname{Aut}=3$

\medskip

\noindent
$\codeeff{n}{k}{d}{q}=\codeeff{53}{7}{24}{2}$\\
11111111111111111111111000000000000000000000001000000\\
11111111111100000000000111111111110000000000000100000\\
11111100000011111100000111111000001111100000000010000\\
00000011000011000011110110000111101111011111000001000\\
11000000110000110011000111111110001100011110100000100\\
00000000101011101000110101110100011011111100110000010\\
11110010001000000000101011110001101110111101100000001\\
$W(z)=1z^{0}+84z^{24}+43z^{32}$\\
$\# \operatorname{Aut}=6$

\medskip

\noindent
$\codeeff{n}{k}{d}{q}=\codeeff{53}{7}{24}{2}$\\
11111111111111111111111000000000000000000000001000000\\
11111111111100000000000111111111110000000000000100000\\
11111100000011111100000111111000001111100000000010000\\
00000011000011000011110110000111101111011111000001000\\
11000000110000110011000111111110001100011110100000100\\
00000000101011101000110101110100011011111100110000010\\
11000000010110101111101001001101001011111001000000001\\
$W(z)=1z^{0}+84z^{24}+43z^{32}$\\
$\# \operatorname{Aut}=42$

\medskip

\noindent
$\codeeff{n}{k}{d}{q}=\codeeff{54}{7}{24}{2}$\\
111111111111111111111110000000000000000000000001000000\\
111111111111000000000001111111111100000000000000100000\\
111111000000111111000001111110000011111000000000010000\\
000000110000110000000001111111111011110111110000001000\\
110000001100111100111101100001100011110110001000000100\\
001100000011110011111101111000000011000111100100000010\\
000000100010101000111101000101011111101001101110000001\\
$W(z)=1z^{0}+78z^{24}+47z^{32}+2z^{40}$\\
$\# \operatorname{Aut}=12$

\medskip

\noindent
$\codeeff{n}{k}{d}{q}=\codeeff{54}{7}{24}{2}$\\
111111111111111111111110000000000000000000000001000000\\
111111111111000000000001111111111100000000000000100000\\
111111000000111111000001111110000011111000000000010000\\
000000110000110000000001111111111011110111110000001000\\
110000001100111100111101100001100011110110001000000100\\
001100000011110011111101111000000011000111100100000010\\
000011001100111111110000011000011000110111100010000001\\
$W(z)=1z^{0}+82z^{24}+39z^{32}+6z^{40}$\\
$\# \operatorname{Aut}=240$

\medskip

\noindent
$\codeeff{n}{k}{d}{q}=\codeeff{54}{7}{24}{2}$\\
111111111111111111111110000000000000000000000001000000\\
111111111111000000000001111111111100000000000000100000\\
111111000000111111000001111110000011111000000000010000\\
000000110000110000000001111111111011110111110000001000\\
110000001100111100111101100001100011110110001000000100\\
001100000011110011111101111000000011000111100100000010\\
001110001000111011110010000101011011100110010010000001\\
$W(z)=1z^{0}+78z^{24}+47z^{32}+2z^{40}$\\

$\# \operatorname{Aut}=16$

\medskip

\noindent
$\codeeff{n}{k}{d}{q}=\codeeff{54}{7}{24}{2}$\\
111111111111111111111110000000000000000000000001000000\\
111111111111000000000001111111111100000000000000100000\\
111111000000111111000001111110000011111000000000010000\\
000000110000110000000001111111111011110111110000001000\\
110000001100111100111101100001100011110110001000000100\\
001100000011111111110000011001100011000111100100000010\\
101000000011110011101111010000011011110100010010000001\\
$W(z)=1z^{0}+78z^{24}+47z^{32}+2z^{40}$\\
$\# \operatorname{Aut}=4$

\medskip

\noindent
$\codeeff{n}{k}{d}{q}=\codeeff{54}{7}{24}{2}$\\
111111111111111111111110000000000000000000000001000000\\
111111111111000000000001111111111100000000000000100000\\
111100000000111100000001111111100011111110000000010000\\
000011110000111111110001111000000011110001110000001000\\
000000000000000011001101111110011011111101101110000100\\
000010001110110000100011110111010110001110001110000010\\
100001101000101100010001111000111101001011001110000001\\
$W(z)=1z^{0}+77z^{24}+49z^{32}+1z^{40}$\\
$\# \operatorname{Aut}=2$

\medskip

\noindent
$\codeeff{n}{k}{d}{q}=\codeeff{54}{7}{24}{2}$\\
111111111111111111111110000000000000000000000001000000\\
111111111111000000000001111111111100000000000000100000\\
111100000000111100000001111111100011111110000000010000\\
000011110000111111110001111000000011110001110000001000\\
000000000000000011001101111110011011111101101110000100\\
000010001110110000100011110111010110001110001110000010\\
110001001101000000100011000110101101111111101000000001\\
$W(z)=1z^{0}+77z^{24}+49z^{32}+1z^{40}$\\
$\# \operatorname{Aut}=6$

\medskip

\noindent
$\codeeff{n}{k}{d}{q}=\codeeff{54}{7}{24}{2}$\\
111111111111111111111110000000000000000000000001000000\\
111111111111000000000001111111111100000000000000100000\\
111100000000111100000001111111100011111110000000010000\\
000011110000111111110001111000000011110001110000001000\\
000000000000000011001101111110011011111101101110000100\\
000010001110110000100011110111010110001110001110000010\\
110010001101000000010010001110101111101111101000000001\\
$W(z)=1z^{0}+78z^{24}+47z^{32}+2z^{40}$\\
$\# \operatorname{Aut}=2$

\medskip

\noindent
$\codeeff{n}{k}{d}{q}=\codeeff{54}{7}{24}{2}$\\
111111111111111111111110000000000000000000000001000000\\
111111111111000000000001111111111100000000000000100000\\
111100000000111100000001111111100011111110000000010000\\
000011110000111111110001111000000011110001110000001000\\
000000000000000011001101111110011011111101101110000100\\
000010001110110000100011110111010110001110001110000010\\
110011110000101000000001100100110101101011101110000001\\
$W(z)=1z^{0}+77z^{24}+49z^{32}+1z^{40}$\\
$\# \operatorname{Aut}=2$

\medskip

\noindent
$\codeeff{n}{k}{d}{q}=\codeeff{54}{7}{24}{2}$\\
111111111111111111111110000000000000000000000001000000\\
111111111111000000000001111111111100000000000000100000\\
111100000000111100000001111111100011111110000000010000\\
000011110000111111110001111000000011110001110000001000\\
000000000000000011001101111110011011111101101110000100\\
000010001110110000100011110111010110001110001110000010\\
100000001110101110111011100100001001000001111110000001\\
$W(z)=1z^{0}+77z^{24}+49z^{32}+1z^{40}$\\
$\# \operatorname{Aut}=4$

\medskip

\noindent
$\codeeff{n}{k}{d}{q}=\codeeff{54}{7}{24}{2}$\\
111111111111111111111110000000000000000000000001000000\\
111111111111000000000001111111111100000000000000100000\\
111100000000111100000001111111100011111110000000010000\\
000011110000111111110001111000000011110001110000001000\\
000000000000000011001101111110011011111101101110000100\\
000010001110110000100011110111010110001110001110000010\\
100000001110101110111011100100001001111100011000000001\\
$W(z)=1z^{0}+77z^{24}+49z^{32}+1z^{40}$\\
$\# \operatorname{Aut}=6$

\medskip

\noindent
$\codeeff{n}{k}{d}{q}=\codeeff{54}{7}{24}{2}$\\
111111111111111111111110000000000000000000000001000000\\
111111111111000000000001111111111100000000000000100000\\
111100000000111100000001111111100011111110000000010000\\
000011110000111111110001111000000011110001110000001000\\
000000000000000011001101111110011011111101101110000100\\
000010001110110000100011110111010110001110001110000010\\
000001101111110010011011100110011001100001011000000001\\
$W(z)=1z^{0}+77z^{24}+49z^{32}+1z^{40}$\\
$\# \operatorname{Aut}=4$

\medskip

\noindent
$\codeeff{n}{k}{d}{q}=\codeeff{54}{7}{24}{2}$\\
111111111111111111111110000000000000000000000001000000\\
111111111111000000000001111111111100000000000000100000\\
111100000000111100000001111111100011111110000000010000\\
000011110000111111110001111000000011110001110000001000\\
000000000000000011001101111110011011111101101110000100\\
000010001110110000100011110111010110001110001110000010\\
111001101110101111011101111000111101111011001000000001\\
$W(z)=1z^{0}+78z^{24}+47z^{32}+2z^{40}$\\
$\# \operatorname{Aut}=16$

\medskip

\noindent
$\codeeff{n}{k}{d}{q}=\codeeff{54}{7}{24}{2}$\\
111111111111111111111110000000000000000000000001000000\\
111111111111000000000001111111111100000000000000100000\\
111100000000111100000001111111100011111110000000010000\\
000011110000111111110001111000000011110001110000001000\\
000000000000000011001101111110011011111101101110000100\\
000011001100110000110001111101010111001010001110000010\\
110000001111001100000001100100101111000111101110000001\\
$W(z)=1z^{0}+79z^{24}+45z^{32}+3z^{40}$\\
$\# \operatorname{Aut}=4$

\medskip

\noindent
$\codeeff{n}{k}{d}{q}=\codeeff{54}{7}{24}{2}$\\
111111111111111111111110000000000000000000000001000000\\
111111111111000000000001111111111100000000000000100000\\
111100000000111100000001111111100011111110000000010000\\
000011110000111111110001111000000011110001110000001000\\
100010000000100000001001110000011111111111111110000100\\
000010001110111111100101000100011010001001101110000010\\
010011001110100011010110000010000101111101111000000001\\
$W(z)=1z^{0}+77z^{24}+49z^{32}+1z^{40}$\\
$\# \operatorname{Aut}=2$

\medskip

\noindent
$\codeeff{n}{k}{d}{q}=\codeeff{54}{7}{24}{2}$\\
111111111111111111111110000000000000000000000001000000\\
111111111111000000000001111111111100000000000000100000\\
111100000000111100000001111111100011111110000000010000\\
000011110000111111110001111000000011110001110000001000\\
100010000000100000001001110000011111111111111110000100\\
000010001110111111100101000100011010001001101110000010\\
111000001101010011011100000010000110000111111110000001\\
$W(z)=1z^{0}+78z^{24}+47z^{32}+2z^{40}$\\
$\# \operatorname{Aut}=6$

\medskip

\noindent
$\codeeff{n}{k}{d}{q}=\codeeff{54}{7}{24}{2}$\\
111111111111111111111110000000000000000000000001000000\\
111111111111000000000001111111111100000000000000100000\\
111100000000111100000001111111100011111110000000010000\\
000011110000111111110001111000000011110001110000001000\\
000010001000000011101111000111011010001111101110000100\\
000000000110110011001101111100100001101101011110000010\\
000001000111001110001001110111011011011000111000000001\\
$W(z)=1z^{0}+76z^{24}+51z^{32}$\\
$\# \operatorname{Aut}=4$

\medskip

\noindent
$\codeeff{n}{k}{d}{q}=\codeeff{54}{7}{24}{2}$\\
111111111111111111111110000000000000000000000001000000\\
111111111111000000000001111111111100000000000000100000\\
111100000000111100000001111111100011111110000000010000\\
000011110000111111110001111000000011110001110000001000\\
000010001000000011101111000111011010001111101110000100\\
000000000110110011001101111100100001101101011110000010\\
100011100000001011000010111010111101101111001000000001\\
$W(z)=1z^{0}+76z^{24}+51z^{32}$\\
$\# \operatorname{Aut}=1$

\medskip

\noindent
$\codeeff{n}{k}{d}{q}=\codeeff{54}{7}{24}{2}$\\
111111111111111111111110000000000000000000000001000000\\
111111111111000000000001111111111100000000000000100000\\
111100000000111100000001111111100011111110000000010000\\
000011110000111111110001111000000011110001110000001000\\
000010001000000011101111000111011010001111101110000100\\
000000000110110011001101111100100001101101011110000010\\
100001101000100000101011111110100111010000101110000001\\
$W(z)=1z^{0}+76z^{24}+51z^{32}$\\
$\# \operatorname{Aut}=2$

\medskip

\noindent
$\codeeff{n}{k}{d}{q}=\codeeff{54}{7}{24}{2}$\\
111111111111111111111110000000000000000000000001000000\\
111111111111000000000001111111111100000000000000100000\\
111100000000111100000001111111100011111110000000010000\\
000011110000111111110001111000000011110001110000001000\\
000010001000000011101111000111011010001111101110000100\\
000000000110110011001101111100100001101101011110000010\\
100001100100001010110001111011110010001010011110000001\\
$W(z)=1z^{0}+76z^{24}+51z^{32}$\\
$\# \operatorname{Aut}=2$

\medskip

\noindent
$\codeeff{n}{k}{d}{q}=\codeeff{54}{7}{24}{2}$\\
111111111111111111111110000000000000000000000001000000\\
111111111111000000000001111111111100000000000000100000\\
111100000000111100000001111111100011111110000000010000\\
000011110000111111110001111000000011110001110000001000\\
000010001000000011101111000111011010001111101110000100\\
000000000110110011001101111100100001101101011110000010\\
100011100101100000000010100111100101101110101110000001\\
$W(z)=1z^{0}+76z^{24}+51z^{32}$\\
$\# \operatorname{Aut}=4$

\medskip

\noindent
$\codeeff{n}{k}{d}{q}=\codeeff{54}{7}{24}{2}$\\
111111111111111111111110000000000000000000000001000000\\
111111111111000000000001111111111100000000000000100000\\
111100000000111100000001111111100011111110000000010000\\
000011110000111111110001111000000011110001110000001000\\
000010001000000011101111000111011010001111101110000100\\
000000000110110011001101111100100001101101011110000010\\
100001111100100000000011110100110010011111111000000001\\
$W(z)=1z^{0}+76z^{24}+51z^{32}$\\
$\# \operatorname{Aut}=6$

\medskip

\noindent
$\codeeff{n}{k}{d}{q}=\codeeff{54}{7}{24}{2}$\\
111111111111111111111110000000000000000000000001000000\\
111111111111000000000001111111111100000000000000100000\\
111100000000111100000001111111100011111110000000010000\\
000011110000111111110001111000000011110001110000001000\\
000010001000000011101111000111011010001111101110000100\\
000000000110110011001101111100100001101101011110000010\\
110001001110000000010011110011100000011110111110000001\\
$W(z)=1z^{0}+77z^{24}+49z^{32}+1z^{40}$\\
$\# \operatorname{Aut}=2$

\medskip

\noindent
$\codeeff{n}{k}{d}{q}=\codeeff{54}{7}{24}{2}$\\
111111111111111111111110000000000000000000000001000000\\
111111111111000000000001111111111100000000000000100000\\
111100000000111100000001111111100011111110000000010000\\
000011110000111111110001111000000011110001110000001000\\
000010001000000011101111000111011010001111101110000100\\
000000000110110011001101111100100001101101011110000010\\
110001001000110011011111000011100001110010111000000001\\
$W(z)=1z^{0}+77z^{24}+49z^{32}+1z^{40}$\\
$\# \operatorname{Aut}=6$

\medskip

\noindent
$\codeeff{n}{k}{d}{q}=\codeeff{54}{7}{24}{2}$\\
111111111111111111111110000000000000000000000001000000\\
111111111111000000000001111111111100000000000000100000\\
111100000000111100000001111111100011111110000000010000\\
000011110000111111110001111000000011110001110000001000\\
000010001000000011101111000111011010001111101110000100\\
000000000110110011001101111100100001101101011110000010\\
110001001000001111011111110000100000011110111000000001\\
$W(z)=1z^{0}+76z^{24}+51z^{32}$\\
$\# \operatorname{Aut}=16$

\medskip

\noindent
$\codeeff{n}{k}{d}{q}=\codeeff{54}{7}{24}{2}$\\
111111111111111111111110000000000000000000000001000000\\
111111111111000000000001111111111100000000000000100000\\
111100000000111100000001111111100011111110000000010000\\
000011110000111111110001111000000011110001110000001000\\
000010001000000011101111000111011010001111101110000100\\
000000000110110011001101111100100001101101011110000010\\
110010000100001110111110111010000001001110111000000001\\
$W(z)=1z^{0}+76z^{24}+51z^{32}$\\
$\# \operatorname{Aut}=2$

\medskip

\noindent
$\codeeff{n}{k}{d}{q}=\codeeff{54}{7}{24}{2}$\\
111111111111111111111110000000000000000000000001000000\\
111111111111000000000001111111111100000000000000100000\\
111100000000111100000001111111100011111110000000010000\\
000011110000111111110001111000000011110001110000001000\\
000010001000000011101111000111011010001111101110000100\\
000000000110110011001101111100100001101101011110000010\\
110000001100111100111011100110000011000000111110000001\\
$W(z)=1z^{0}+76z^{24}+51z^{32}$\\
$\# \operatorname{Aut}=2$

\medskip

\noindent
$\codeeff{n}{k}{d}{q}=\codeeff{54}{7}{24}{2}$\\
111111111111111111111110000000000000000000000001000000\\
111111111111000000000001111111111100000000000000100000\\
111100000000111100000001111111100011111110000000010000\\
000011110000111111110001111000000011110001110000001000\\
000010001000000011101111000111011010001111101110000100\\
000000000110110011001101111100100001101101011110000010\\
100001111001111011000011110110000100001111001000000001\\
$W(z)=1z^{0}+76z^{24}+51z^{32}$\\
$\# \operatorname{Aut}=2$

\medskip

\noindent
$\codeeff{n}{k}{d}{q}=\codeeff{54}{7}{24}{2}$\\
111111111111111111111110000000000000000000000001000000\\
111111111111000000000001111111111100000000000000100000\\
111100000000111100000001111111100011111110000000010000\\
000011110000111111110001111000000011110001110000001000\\
000010001000000011101111000111011010001111101110000100\\
000000000110110011001101111100100001101101011110000010\\
111010001100111000111000100111110000001000011110000001\\
$W(z)=1z^{0}+76z^{24}+51z^{32}$\\
$\# \operatorname{Aut}=6$

\medskip

\noindent
$\codeeff{n}{k}{d}{q}=\codeeff{54}{7}{24}{2}$\\
111111111111111111111110000000000000000000000001000000\\
111111111111000000000001111111111100000000000000100000\\
111100000000111100000001111111100011111110000000010000\\
000011110000111111110001111000000011110001110000001000\\
000010001000000011101111000111011010001111101110000100\\
000000000110110011001101111100100001101101011110000010\\
111101000100110011101001110010011010001000111000000001\\
$W(z)=1z^{0}+76z^{24}+51z^{32}$\\
$\# \operatorname{Aut}=16$

\medskip

\noindent
$\codeeff{n}{k}{d}{q}=\codeeff{54}{7}{24}{2}$\\
111111111111111111111110000000000000000000000001000000\\
111111111111000000000001111111111100000000000000100000\\
111100000000111100000001111111100011111110000000010000\\
000011110000111111110001111000000011110001110000001000\\
000010001000000011101111000111011010001111101110000100\\
000000000110110011001101111100100001101101011110000010\\
110011100111001100010010100111000001110010111000000001\\
$W(z)=1z^{0}+76z^{24}+51z^{32}$\\
$\# \operatorname{Aut}=16$

\medskip

\noindent
$\codeeff{n}{k}{d}{q}=\codeeff{54}{7}{24}{2}$\\
111111111111111111111110000000000000000000000001000000\\
111111111111000000000001111111111100000000000000100000\\
111100000000111100000001111111100011111110000000010000\\
000011110000111111110001111000000011110001110000001000\\
000010001000000011101111000111011010001111101110000100\\
000000000110110011001101111100100001101101011110000010\\
111101001101101011011111110110111000011111011000000001\\
$W(z)=1z^{0}+76z^{24}+51z^{32}$\\
$\# \operatorname{Aut}=4$

\medskip

\noindent
$\codeeff{n}{k}{d}{q}=\codeeff{54}{7}{24}{2}$\\
111111111111111111111110000000000000000000000001000000\\
111111111111000000000001111111111100000000000000100000\\
111100000000111100000001111111100011111110000000010000\\
000011110000111111110001111000000011110001110000001000\\
000010001000000011101111000111011010001111101110000100\\
000000000110110011001101111100100001101101011110000010\\
111101001101110010111111110110111001001110111000000001\\
$W(z)=1z^{0}+76z^{24}+51z^{32}$\\
$\# \operatorname{Aut}=4$

\medskip

\noindent
$\codeeff{n}{k}{d}{q}=\codeeff{54}{7}{24}{2}$\\
111111111111111111111110000000000000000000000001000000\\
111111111111000000000001111111111100000000000000100000\\
111100000000111100000001111111100011111110000000010000\\
000011110000111111110001111000000011110001110000001000\\
000010001000000011101111000111011010001111101110000100\\
100000000100100011011101111000110011101100011110000010\\
010001001100111000000011000100110011110011111110000001\\
$W(z)=1z^{0}+77z^{24}+49z^{32}+1z^{40}$\\
$\# \operatorname{Aut}=4$

\medskip

\noindent
$\codeeff{n}{k}{d}{q}=\codeeff{54}{7}{24}{2}$\\
111111111111111111111110000000000000000000000001000000\\
111111111111000000000001111111111100000000000000100000\\
111100000000111100000001111111100011111110000000010000\\
000011110000111111110001111000000011110001110000001000\\
000010001000000011101111000111011010001111101110000100\\
100000000100100011011101111000110011101100011110000010\\
100001101000010000101011111110100111010001001110000001\\
$W(z)=1z^{0}+76z^{24}+51z^{32}$\\
$\# \operatorname{Aut}=6$

\medskip

\noindent
$\codeeff{n}{k}{d}{q}=\codeeff{54}{7}{24}{2}$\\
111111111111111111111110000000000000000000000001000000\\
111111111111000000000001111111111100000000000000100000\\
111100000000111100000001111111100011111110000000010000\\
000011110000111111110001111000000011110001110000001000\\
000010001000000011101111000111011010001111101110000100\\
100000000100111011010001100100011101111101001110000010\\
111010000011111011001000100111100100000010101110000001\\
$W(z)=1z^{0}+78z^{24}+48z^{32}+1z^{48}$\\
$\# \operatorname{Aut}=48$

\medskip

\noindent
$\codeeff{n}{k}{d}{q}=\codeeff{54}{7}{24}{2}$\\
111111111111111111111110000000000000000000000001000000\\
111111111111000000000001111111111100000000000000100000\\
111100000000111100000001111111100011111110000000010000\\
000011110000111111110001111000000011110001110000001000\\
000010001000000011101111000111011010001111101110000100\\
100000000100111011010001100100011101111101001110000010\\
111010000110111010101000010111100100001000101110000001\\
$W(z)=1z^{0}+77z^{24}+49z^{32}+1z^{40}$\\
$\# \operatorname{Aut}=12$

\medskip

\noindent
$\codeeff{n}{k}{d}{q}=\codeeff{54}{7}{24}{2}$\\
111111111111111111111110000000000000000000000001000000\\
111111111111000000000001111111111100000000000000100000\\
111100000000111100000001111111100011111110000000010000\\
000011110000111111110001111000000011110001110000001000\\
000010001000000011101111000111011010001111101110000100\\
000001000111110010001001110100000011101111011110000010\\
110011000000110010010000110111111010010000111110000001\\
$W(z)=1z^{0}+78z^{24}+48z^{32}+1z^{48}$\\
$\# \operatorname{Aut}=18$

\medskip

\noindent
$\codeeff{n}{k}{d}{q}=\codeeff{54}{7}{24}{2}$\\
111111111111111111111110000000000000000000000001000000\\
111111111111000000000001111111111100000000000000100000\\
111100000000111100000001111111100011111110000000010000\\
000011110000111111110001111000000011110001110000001000\\
000010001000000011101111000111011010001111101110000100\\
000001000111110010001001110100000011101111011110000010\\
000001111110101001110101110010010110001000001110000001\\
$W(z)=1z^{0}+76z^{24}+51z^{32}$\\
$\# \operatorname{Aut}=6$

\medskip

\noindent
$\codeeff{n}{k}{d}{q}=\codeeff{54}{7}{24}{2}$\\
111111111111111111111110000000000000000000000001000000\\
111111111111000000000001111111111100000000000000100000\\
111100000000111100000001111111100011111110000000010000\\
000011110000111111110001111000000011110001110000001000\\
000010001000000011101111000111011010001111101110000100\\
000001000111110010001001110100000011101111011110000010\\
111100000110110011001101100010111000001100111000000001\\
$W(z)=1z^{0}+76z^{24}+51z^{32}$\\
$\# \operatorname{Aut}=8$

\medskip

\noindent
$\codeeff{n}{k}{d}{q}=\codeeff{54}{7}{24}{2}$\\
111111111111111111111110000000000000000000000001000000\\
111111111111000000000001111111111100000000000000100000\\
111100000000111100000001111111100011111110000000010000\\
000011110000111111110001111000000011110001110000001000\\
000010001000000011101111000111011010001111101110000100\\
110000000110110011000001100000010111111101101110000010\\
001101000111110011101001000110101101001110001000000001\\
$W(z)=1z^{0}+76z^{24}+51z^{32}$\\
$\# \operatorname{Aut}=96$

\medskip

\noindent
$\codeeff{n}{k}{d}{q}=\codeeff{54}{7}{24}{2}$\\
111111111111111111111110000000000000000000000001000000\\
111111111111000000000001111111111100000000000000100000\\
111100000000111100000001111111100011111110000000010000\\
000011110000111111110001111000000011110001110000001000\\
000010001000000011101111000111011010001111101110000100\\
110001100000110000001101100000010111111101101110000010\\
001101110100110010001111000110101101001110001000000001\\
$W(z)=1z^{0}+76z^{24}+51z^{32}$\\
$\# \operatorname{Aut}=480$

\medskip

\noindent
$\codeeff{n}{k}{d}{q}=\codeeff{54}{7}{24}{2}$\\
111111111111111111111110000000000000000000000001000000\\
111111111111000000000001111111111100000000000000100000\\
111100000000111100000001111111100011111110000000010000\\
000011110000111111110001111000000011110001110000001000\\
000010001000000011101111000111011010001111101110000100\\
110011000000110010010000000100111011111101011110000010\\
110010001110001111101000100000100011101110111000000001\\
$W(z)=1z^{0}+76z^{24}+51z^{32}$\\
$\# \operatorname{Aut}=144$

\medskip

\noindent
$\codeeff{n}{k}{d}{q}=\codeeff{54}{7}{24}{2}$\\
111111111111111111111110000000000000000000000001000000\\
111111111111000000000001111111111100000000000000100000\\
111100000000111100000001111111100011111110000000010000\\
000011110000111111110001111000000011110001110000001000\\
000010001000000011101111000111011010001111101110000100\\
110011000000110010010000000100111011111101011110000010\\
101001101100110011110001100010110100000001101110000001\\
$W(z)=1z^{0}+76z^{24}+51z^{32}$\\
$\# \operatorname{Aut}=36$

\medskip

\noindent
$\codeeff{n}{k}{d}{q}=\codeeff{54}{7}{24}{2}$\\
111111111111111111111110000000000000000000000001000000\\
111111111111000000000001111111111100000000000000100000\\
111100000000111100000001111111100011111110000000010000\\
000011110000111111110001111000000011110001110000001000\\
000011000000000011111101100110011011001101101110000100\\
000000001100110011001100011101010111111011101000000010\\
000010100011101000001011111100111011001011011000000001\\
$W(z)=1z^{0}+77z^{24}+49z^{32}+1z^{40}$\\
$\# \operatorname{Aut}=4$

\medskip

\noindent
$\codeeff{n}{k}{d}{q}=\codeeff{54}{7}{24}{2}$\\
111111111111111111111110000000000000000000000001000000\\
111111111111000000000001111111111100000000000000100000\\
111100000000111100000001111111100011111110000000010000\\
000011110000111111110001111000000011110001110000001000\\
000011000000000011111101100110011011001101101110000100\\
000000001100110011001100011101010111111011101000000010\\
110000110000101010100001111101010100000111101110000001\\
$W(z)=1z^{0}+76z^{24}+51z^{32}$\\
$\# \operatorname{Aut}=48$

\medskip

\noindent
$\codeeff{n}{k}{d}{q}=\codeeff{54}{7}{24}{2}$\\
111111111111111111111110000000000000000000000001000000\\
111111111111000000000001111111111100000000000000100000\\
111100000000111100000001111111100011111110000000010000\\
000011110000111111110001111000000011110001110000001000\\
000011000000000011111101100110011011001101101110000100\\
000000001100110011001100011101010111111011101000000010\\
110000111100000000110001010101011011111101010100000001\\
$W(z)=1z^{0}+76z^{24}+51z^{32}$\\
$\# \operatorname{Aut}=48$

\medskip

\noindent
$\codeeff{n}{k}{d}{q}=\codeeff{54}{7}{24}{2}$\\
111111111111111111111110000000000000000000000001000000\\
111111111111000000000001111111111100000000000000100000\\
111100000000111100000001111111100011111110000000010000\\
000011110000111111110001111000000011110001110000001000\\
000011000000000011111101100110011011001101101110000100\\
000000001100110011001100011101010111111011101000000010\\
110000110011001111110001100100101111000110001000000001\\
$W(z)=1z^{0}+79z^{24}+45z^{32}+3z^{40}$\\
$\# \operatorname{Aut}=60$

\medskip

\noindent
$\codeeff{n}{k}{d}{q}=\codeeff{54}{7}{24}{2}$\\
111111111111111111111110000000000000000000000001000000\\
111111111111000000000001111111111100000000000000100000\\
111100000000111100000001111111100011111110000000010000\\
000011110000111111110001111000000011110001110000001000\\
000011000000000011111101100110011011001101101110000100\\
000000001100110011001100011101010111111011101000000010\\
110011000011110011110000011100101111000110001000000001\\
$W(z)=1z^{0}+78z^{24}+48z^{32}+1z^{48}$\\
$\# \operatorname{Aut}=60$

\medskip

\noindent
$\codeeff{n}{k}{d}{q}=\codeeff{54}{7}{24}{2}$\\
111111111111111111111110000000000000000000000001000000\\
111111111111000000000001111111111100000000000000100000\\
111100000000111100000001111111100011111110000000010000\\
000011110000111111110001111000000011110001110000001000\\
000011000000000011111101100110011011001101101110000100\\
000010101100110000001011100101000011111011011110000010\\
000010101111101011000111010011010100001100001110000001\\
$W(z)=1z^{0}+76z^{24}+51z^{32}$\\
$\# \operatorname{Aut}=24$

\medskip

\noindent
$\codeeff{n}{k}{d}{q}=\codeeff{54}{7}{24}{2}$\\
111111111111111111111110000000000000000000000001000000\\
111111111111000000000001111111111100000000000000100000\\
000000000000111100000001111111100011111111111000010000\\
111100000000000011110001111110011011111100000100001000\\
100011100000110011101111110001000010000011111010000100\\
011010011100000010010001001100110101110011111100000010\\
000101111010101011011000001000100001001111111010000001\\
$W(z)=1z^{0}+77z^{24}+49z^{32}+1z^{40}$\\
$\# \operatorname{Aut}=42$

\medskip

\noindent
$\codeeff{n}{k}{d}{q}=\codeeff{54}{7}{24}{2}$\\
111111111111111111111110000000000000000000000001000000\\
111111111111000000000001111111111100000000000000100000\\
111100000000111100000001111000000011111111111000010000\\
000011000000110011110001100111100011111111000100001000\\
100010000000100000001001011111111111110000111110000100\\
000001111000101011111100000111010011001000111010000010\\
000001000111111110001111010000011010100100111010000001\\
$W(z)=1z^{0}+78z^{24}+47z^{32}+2z^{40}$\\
$\# \operatorname{Aut}=240$

\medskip

\noindent
$\codeeff{n}{k}{d}{q}=\codeeff{54}{7}{24}{2}$\\
111111111111111111111110000000000000000000000001000000\\
111111111111000000000001111111111100000000000000100000\\
111100000000111100000001111000000011111111111000010000\\
000011000000110011110001100111100011111111000100001000\\
100010000000100000001001011111111111110000111110000100\\
000001111000101011111100000111010011001000111010000010\\
010000100110111011111001011100000000110100110110000001\\
$W(z)=1z^{0}+77z^{24}+49z^{32}+1z^{40}$\\
$\# \operatorname{Aut}=120$

\medskip

\noindent
$\codeeff{n}{k}{d}{q}=\codeeff{54}{7}{24}{2}$\\
111111111111111111111110000000000000000000000001000000\\
111111111111000000000001111111111100000000000000100000\\
111100000000111100000001111000000011111111111000010000\\
000011000000110011110001100111100011111111000100001000\\
100010000000100000001001011111111111110000111110000100\\
011001100000110011101110011100010011101000110100000010\\
000001011111101011001100011010011111000100100010000001\\
$W(z)=1z^{0}+77z^{24}+49z^{32}+1z^{40}$\\
$\# \operatorname{Aut}=48$

\medskip

\noindent
$\codeeff{n}{k}{d}{q}=\codeeff{54}{7}{24}{2}$\\
111111111111111111111110000000000000000000000001000000\\
111111111111000000000001111111111100000000000000100000\\
111100000000111100000001111000000011111111111000010000\\
000011000000110011110001100111100011111111000100001000\\
000000110000101010001111010111010011111100110010000100\\
000000101000001101101100101111001011100011111100000010\\
111100101110000010011101001100010011010000111100000001\\
$W(z)=1z^{0}+77z^{24}+49z^{32}+1z^{40}$\\
$\# \operatorname{Aut}=24$

\medskip

\noindent
$\codeeff{n}{k}{d}{q}=\codeeff{54}{7}{24}{2}$\\
111111111111111111111110000000000000000000000001000000\\
111111111111000000000001111111111100000000000000100000\\
111100000000111100000001111000000011111111111000010000\\
000011000000110011110001100111100011111111000100001000\\
000000110000101010001111010111010011111100110010000100\\
100000001110000111001001000100001111111011110110000010\\
010010101101000111101101110111000000000110100110000001\\
$W(z)=1z^{0}+76z^{24}+51z^{32}$\\
$\# \operatorname{Aut}=40$

\medskip

\noindent
$\codeeff{n}{k}{d}{q}=\codeeff{54}{7}{24}{2}$\\
111111111111111111111110000000000000000000000001000000\\
111111111111000000000001111111111100000000000000100000\\
111100000000111100000001111000000011111111111000010000\\
000011000000110011110001100111100011111111000100001000\\
000000110000101010001111010111010011111100110010000100\\
100010101110000110000001110100001111111010100110000010\\
010010101101011111001001000111110000000110100110000001\\
$W(z)=1z^{0}+77z^{24}+49z^{32}+1z^{40}$\\
$\# \operatorname{Aut}=5040$

\medskip

\noindent
$\codeeff{n}{k}{d}{q}=\codeeff{54}{7}{24}{2}$\\
111111111111111111111110000000000000000000000001000000\\
111111111111000000000001111111111100000000000000100000\\
111100000000111100000001111000000011111111111000010000\\
000011000000110011110001100111100011111111000100001000\\
000000110000101010001111010111010011111100110010000100\\
110010101100000001001000011110101011111100101100000010\\
101010011010010110101100000100000111111110001010000001\\
$W(z)=1z^{0}+76z^{24}+51z^{32}$\\
$\# \operatorname{Aut}=5760$

\medskip

\noindent
$\codeeff{n}{k}{d}{q}=\codeeff{54}{7}{24}{2}$\\
111111111111111111111110000000000000000000000001000000\\
111111111111000000000001111111111100000000000000100000\\
111100000000111100000001111000000011111111111000010000\\
000011000000110011110001100111100011111111000100001000\\
000000110000101010001111010111010011111100110010000100\\
111010101110001001110001101000100011100010100110000010\\
110110101101010000001111011000010000011110100110000001\\
$W(z)=1z^{0}+78z^{24}+48z^{32}+1z^{48}$\\
$\# \operatorname{Aut}=4320$

\medskip

\noindent
$\codeeff{n}{k}{d}{q}=\codeeff{51}{8}{24}{2}$\\
111111111111111111111110000000000000000000010000000\\
111111111111000000000001111111111100000000001000000\\
111111000000111111000001111110000011111000000100000\\
111000111000111000111001110001110011100110000010000\\
000100100000100110110101101001001111111111100001000\\
100010110110010100101110001101111011000101100000100\\
010010101101001010011111000010111110010111000000010\\
001010101011110011100011100110101001001110100000001\\
$W(z)=1z^{0}+204z^{24}+51z^{32}$\\
$\# \operatorname{Aut}=48960$

\medskip

\noindent
$\codeeff{n}{k}{d}{q}=\codeeff{54}{8}{24}{2}$\\
111111111111111111111110000000000000000000000010000000\\
111111111111000000000001111111111100000000000001000000\\
111111000000111111000001111110000011111000000000100000\\
000000110000110000111101100001111011110111110000010000\\
110000001100001100110001111111100011000111101000001000\\
000000001010111010001101011101000110111111001100000100\\
001100000011000101101000111101110101111110100000000010\\
101100110001101010110100011011100110110000101000000001\\
$W(z)=1z^{0}+156z^{24}+99z^{32}$\\
$\# \operatorname{Aut}=144$

\medskip

\noindent
$\codeeff{n}{k}{d}{q}=\codeeff{55}{8}{24}{2}$\\
1111111111111111111111100000000000000000000000010000000\\
1111111111110000000000011111111111000000000000001000000\\
1111000000001111000000011111111000111111100000000100000\\
0000111100001111111100011110000000111100011100000010000\\
0000000000000000110011011111100110111111011011100001000\\
0000100011101100001000111101110101100011100011100000100\\
1100010011010000001000110001101011011111111010000000010\\
0010000011101011101110101101000010011111000110000000001\\
$W(z)=1z^{0}+141z^{24}+113z^{32}+1z^{40}$\\
$\# \operatorname{Aut}=21$

\medskip

\noindent
$\codeeff{n}{k}{d}{q}=\codeeff{55}{8}{24}{2}$\\
1111111111111111111111100000000000000000000000010000000\\
1111111111110000000000011111111111000000000000001000000\\
1111000000001111000000011111111000111111100000000100000\\
0000111100001111111100011110000000111100011100000010000\\
0000100010000000111011110001110110100011111011100001000\\
0000000001101100110011011111001000011011010111100000100\\
1000111000000010110000101110101111011011110010000000010\\
1000010010011110110011110000011100111100100110000000001\\
$W(z)=1z^{0}+142z^{24}+112z^{32}+1z^{48}$\\
$\# \operatorname{Aut}=21$

\medskip

\noindent
$\codeeff{n}{k}{d}{q}=\codeeff{55}{8}{24}{2}$\\
1111111111111111111111100000000000000000000000010000000\\
1111111111110000000000011111111111000000000000001000000\\
1111000000001111000000011111111000111111100000000100000\\
0000111100001111111100011110000000111100011100000010000\\
0000100010000000111011110001110110100011111011100001000\\
0000000001101100110011011111001000011011010111100000100\\
1000111001011000000000101001111001011011101011100000010\\
0110010010000011110111110110001000000111101110000000001\\
$W(z)=1z^{0}+140z^{24}+115z^{32}$\\
$\# \operatorname{Aut}=40$

\medskip

\noindent
$\codeeff{n}{k}{d}{q}=\codeeff{55}{8}{24}{2}$\\
1111111111111111111111100000000000000000000000010000000\\
1111111111110000000000011111111111000000000000001000000\\
1111000000001111000000011110000000111111111110000100000\\
0000110000001100111100011001111000111111110001000010000\\
0000001100001010100011110101110100111111001100100001000\\
1000000011100001110010010001000011111110111101100000100\\
0100101011010001111011011101110000000001101001100000010\\
0010011110011101110100001110101010000001111010000000001\\
$W(z)=1z^{0}+140z^{24}+115z^{32}$\\
$\# \operatorname{Aut}=960$

\medskip

\noindent
$\codeeff{n}{k}{d}{q}=\codeeff{56}{8}{24}{2}$\\
11111111111111111111111000000000000000000000000010000000\\
11111111111100000000000111111111110000000000000001000000\\
11111100000011111100000111111000001111100000000000100000\\
00000011111111111100000111111000000000011111000000010000\\
00000011000011000011110110000111101111011000111000001000\\
00000000110010100011101101110100011100011110110100000100\\
00000000101001010011011111001110000010110111101100000010\\
00000010000101111010010001100101111000111101011100000001\\
$W(z)=1z^{0}+129z^{24}+122z^{32}+3z^{40}+1z^{48}$\\
$\# \operatorname{Aut}=96$

\medskip

\noindent
$\codeeff{n}{k}{d}{q}=\codeeff{56}{8}{24}{2}$\\
11111111111111111111111000000000000000000000000010000000\\
11111111111100000000000111111111110000000000000001000000\\
11111100000011111100000111111000001111100000000000100000\\
00000011111111111100000111111000000000011111000000010000\\
00000011000011000011110110000111101111011000111000001000\\
00000000110010100011101101110100011100011110110100000100\\
00000000101001010011011111001110000010110111101100000010\\
11000000000000111100110110110110000011011110011100000001\\
$W(z)=1z^{0}+127z^{24}+126z^{32}+1z^{40}+1z^{48}$\\
$\# \operatorname{Aut}=24$

\medskip

\noindent
$\codeeff{n}{k}{d}{q}=\codeeff{56}{8}{24}{2}$\\
11111111111111111111111000000000000000000000000010000000\\
11111111111100000000000111111111110000000000000001000000\\
11111100000011111100000111111000001111100000000000100000\\
00000011111111111100000111111000000000011111000000010000\\
00000011000011000011110110000111101111011000111000001000\\
00000000110010100011101101110100011100011110110100000100\\
00000000101001010011011111001110000010110111101100000010\\
11000000011000111100000100001100011101110111011100000001\\
$W(z)=1z^{0}+127z^{24}+126z^{32}+1z^{40}+1z^{48}$\\
$\# \operatorname{Aut}=168$

\medskip

\noindent
$\codeeff{n}{k}{d}{q}=\codeeff{56}{8}{24}{2}$\\
11111111111111111111111000000000000000000000000010000000\\
11111111111100000000000111111111110000000000000001000000\\
11111100000011111100000111111000001111100000000000100000\\
00000011111111111100000111111000000000011111000000010000\\
00000011000011000011110110000111101111011000111000001000\\
00000000110010100011101101110100011100011110110100000100\\
00000000101001010011011111001110000010110111101100000010\\
11100000001000101100001110001000011100111111011100000001\\
$W(z)=1z^{0}+126z^{24}+128z^{32}+1z^{48}$\\
$\# \operatorname{Aut}=21$

\medskip

\noindent
$\codeeff{n}{k}{d}{q}=\codeeff{56}{8}{24}{2}$\\
11111111111111111111111000000000000000000000000010000000\\
11111111111100000000000111111111110000000000000001000000\\
11111100000011111100000111111000001111100000000000100000\\
00000011111111111100000111111000000000011111000000010000\\
00000011000011000011110110000111101111011000111000001000\\
00000000110010100011101101110100011100011110110100000100\\
10000000001000010011111111000110010011100111101100000010\\
01100000101000111100000100001100011110110111011100000001\\
$W(z)=1z^{0}+126z^{24}+128z^{32}+1z^{48}$\\
$\# \operatorname{Aut}=3$

\medskip

\noindent
$\codeeff{n}{k}{d}{q}=\codeeff{56}{8}{24}{2}$\\
11111111111111111111111000000000000000000000000010000000\\
11111111111100000000000111111111110000000000000001000000\\
11111100000011111100000111111000001111100000000000100000\\
00000011111111111100000111111000000000011111000000010000\\
00000011000011000011110110000111101111011000111000001000\\
00000000110010100011101101110100011100011110110100000100\\
10000000001000010011111111000110010011100111101100000010\\
11100000111001111110000100000100001100100111011100000001\\
$W(z)=1z^{0}+127z^{24}+126z^{32}+1z^{40}+1z^{48}$\\
$\# \operatorname{Aut}=3$

\medskip

\noindent
$\codeeff{n}{k}{d}{q}=\codeeff{56}{8}{24}{2}$\\
11111111111111111111111000000000000000000000000010000000\\
11111111111100000000000111111111110000000000000001000000\\
11111100000011111100000111111000001111100000000000100000\\
00000011111111111100000111111000000000011111000000010000\\
00000011000011000011110110000111101111011000111000001000\\
00000000110010100011101101110100011100011110110100000100\\
10000000001000010011111111000110010011100111101100000010\\
10000011111000011110011101100111001000010000011100000001\\
$W(z)=1z^{0}+129z^{24}+122z^{32}+3z^{40}+1z^{48}$\\
$\# \operatorname{Aut}=16$

\medskip

\noindent
$\codeeff{n}{k}{d}{q}=\codeeff{56}{8}{24}{2}$\\
11111111111111111111111000000000000000000000000010000000\\
11111111111100000000000111111111110000000000000001000000\\
11111100000011111100000111111000001111100000000000100000\\
00000011111111111100000111111000000000011111000000010000\\
00000011000011000011110110000111101111011000111000001000\\
00000000110010100011101101110100011100011110110100000100\\
10000000001000010011111111000110010011100111101100000010\\
11110000111111101000000001100011001100000110011100000001\\
$W(z)=1z^{0}+126z^{24}+128z^{32}+1z^{48}$\\
$\# \operatorname{Aut}=3$

\medskip

\noindent
$\codeeff{n}{k}{d}{q}=\codeeff{56}{8}{24}{2}$\\
11111111111111111111111000000000000000000000000010000000\\
11111111111100000000000111111111110000000000000001000000\\
11111100000011111100000111111000001111100000000000100000\\
00000011111111111100000111111000000000011111000000010000\\
00000011000011000011110110000111101111011000111000001000\\
00000000110010100011101101110100011100011110110100000100\\
10000000001000010011111111000110010011100111101100000010\\
01000011110100011110011110100101011111111111011100000001\\
$W(z)=1z^{0}+127z^{24}+126z^{32}+1z^{40}+1z^{48}$\\
$\# \operatorname{Aut}=4$

\medskip

\noindent
$\codeeff{n}{k}{d}{q}=\codeeff{56}{8}{24}{2}$\\
11111111111111111111111000000000000000000000000010000000\\
11111111111100000000000111111111110000000000000001000000\\
11111100000011111100000111111000001111100000000000100000\\
00000011111111111100000111111000000000011111000000010000\\
00000011000011000011110110000111101111011000111000001000\\
00000000110010100011101101110100011100011110110100000100\\
11000000000001011110001101000111011111000110101100000010\\
00110000000011011010010001111100100011011110011100000001\\
$W(z)=1z^{0}+129z^{24}+122z^{32}+3z^{40}+1z^{48}$\\
$\# \operatorname{Aut}=8$

\medskip

\noindent
$\codeeff{n}{k}{d}{q}=\codeeff{56}{8}{24}{2}$\\
11111111111111111111111000000000000000000000000010000000\\
11111111111100000000000111111111110000000000000001000000\\
11111100000011111100000111111000001111100000000000100000\\
00000011111111111100000111111000000000011111000000010000\\
00000011000011000011110110000111101111011000111000001000\\
00000000110010100011101101110100011100011110110100000100\\
11000000000001011110001101000111011111000110101100000010\\
11000010001011000000011101110000001011111101011100000001\\
$W(z)=1z^{0}+129z^{24}+122z^{32}+3z^{40}+1z^{48}$\\
$\# \operatorname{Aut}=24$

\medskip

\noindent
$\codeeff{n}{k}{d}{q}=\codeeff{56}{8}{24}{2}$\\
11111111111111111111111000000000000000000000000010000000\\
11111111111100000000000111111111110000000000000001000000\\
11111100000011111100000111111000001111100000000000100000\\
00000011111111111100000111111000000000011111000000010000\\
00000011000011000011110110000111101111011000111000001000\\
00000000110010100011101101110100011100011110110100000100\\
11000000101010100000011000000111101110110111101100000010\\
10110000001011010000001001110000010011111111011100000001\\
$W(z)=1z^{0}+128z^{24}+124z^{32}+2z^{40}+1z^{48}$\\
$\# \operatorname{Aut}=3$

\medskip

\noindent
$\codeeff{n}{k}{d}{q}=\codeeff{56}{8}{24}{2}$\\
11111111111111111111111000000000000000000000000010000000\\
11111111111100000000000111111111110000000000000001000000\\
11111100000011111100000111111000001111100000000000100000\\
00000011111111111100000111111000000000011111000000010000\\
00000011000011000011110110000111101111011000111000001000\\
00000000110010100011101101110100011100011110110100000100\\
11100000001000100011001110001000011111111001101100000010\\
10011000001001011000010101111011100000100111011100000001\\
$W(z)=1z^{0}+129z^{24}+122z^{32}+3z^{40}+1z^{48}$\\
$\# \operatorname{Aut}=4$

\medskip

\noindent
$\codeeff{n}{k}{d}{q}=\codeeff{56}{8}{24}{2}$\\
11111111111111111111111000000000000000000000000010000000\\
11111111111100000000000111111111110000000000000001000000\\
11111100000011111100000111111000001111100000000000100000\\
00000011111111111100000111111000000000011111000000010000\\
00000011000011000011110110000111101111011000111000001000\\
00000000110010100011101101110100011100011110110100000100\\
11100000001000100011001110001000011111111001101100000010\\
10011010110000011110110011111000011000000100011100000001\\
$W(z)=1z^{0}+127z^{24}+126z^{32}+1z^{40}+1z^{48}$\\
$\# \operatorname{Aut}=6$

\medskip

\noindent
$\codeeff{n}{k}{d}{q}=\codeeff{56}{8}{24}{2}$\\
11111111111111111111111000000000000000000000000010000000\\
11111111111100000000000111111111110000000000000001000000\\
11111100000011111100000111111000001111100000000000100000\\
00000011111111111100000111111000000000011111000000010000\\
00000011000011000011110110000111101111011000111000001000\\
10000000100000100011111001110110011100111100110100000100\\
01000000100011010011001000001111111010100111101100000010\\
11111111110011011010100111100001100101011110011100000001\\
$W(z)=1z^{0}+127z^{24}+126z^{32}+1z^{40}+1z^{48}$\\
$\# \operatorname{Aut}=4$

\medskip

\noindent
$\codeeff{n}{k}{d}{q}=\codeeff{56}{8}{24}{2}$\\
11111111111111111111111000000000000000000000000010000000\\
11111111111100000000000111111111110000000000000001000000\\
11111100000011111100000111111000001111100000000000100000\\
00000011111111111100000111111000000000011111000000010000\\
00000011000011000011110110000111101111011000111000001000\\
10000000100000100011111001110110011100111100110100000100\\
01000000100010011011100101111001000011111010101100000010\\
00111010000000100011010101111001111110000010011100000001\\
$W(z)=1z^{0}+128z^{24}+124z^{32}+2z^{40}+1z^{48}$\\
$\# \operatorname{Aut}=1$

\medskip

\noindent
$\codeeff{n}{k}{d}{q}=\codeeff{56}{8}{24}{2}$\\
11111111111111111111111000000000000000000000000010000000\\
11111111111100000000000111111111110000000000000001000000\\
11111100000011111100000111111000001111100000000000100000\\
00000011111111111100000111111000000000011111000000010000\\
00000011000011000011110110000111101111011000111000001000\\
10000000100000100011111001110110011100111100110100000100\\
01000000100010011011100101111001000011111010101100000010\\
10111010100001010000000001001111011011111000011100000001\\
$W(z)=1z^{0}+127z^{24}+126z^{32}+1z^{40}+1z^{48}$\\
$\# \operatorname{Aut}=2$

\medskip

\noindent
$\codeeff{n}{k}{d}{q}=\codeeff{56}{8}{24}{2}$\\
11111111111111111111111000000000000000000000000010000000\\
11111111111100000000000111111111110000000000000001000000\\
11111100000011111100000111111000001111100000000000100000\\
00000011111111111100000111111000000000011111000000010000\\
00000011000011000011110110000111101111011000111000001000\\
10000000100000100011111001110110011100111100110100000100\\
01000000100010011011100101111001000011111010101100000010\\
11000011110010011110001101000110110000011000011100000001\\
$W(z)=1z^{0}+129z^{24}+122z^{32}+3z^{40}+1z^{48}$\\
$\# \operatorname{Aut}=16$

\medskip

\noindent
$\codeeff{n}{k}{d}{q}=\codeeff{56}{8}{24}{2}$\\
11111111111111111111111000000000000000000000000010000000\\
11111111111100000000000111111111110000000000000001000000\\
11111100000011111100000111111000001111100000000000100000\\
00000011111111111100000111111000000000011111000000010000\\
00000011000011000011110110000111101111011000111000001000\\
10000000100000100011111001110110011100111100110100000100\\
01000000100010011011100101111001000011111010101100000010\\
11111100110000011011000001100110001010000110011100000001\\
$W(z)=1z^{0}+127z^{24}+126z^{32}+1z^{40}+1z^{48}$\\
$\# \operatorname{Aut}=8$

\medskip

\noindent
$\codeeff{n}{k}{d}{q}=\codeeff{56}{8}{24}{2}$\\
11111111111111111111111000000000000000000000000010000000\\
11111111111100000000000111111111110000000000000001000000\\
11110000000011110000000111111110001111111000000000100000\\
00001111000011111111000111100000001111000111000000010000\\
00000000000000001100110111111001101111110110111000001000\\
00000000110011001010101110010101011111101101000100000100\\
00001110001000000111100001000100111110111101110100000010\\
11100000111010001110110110011010010001011010100000000001\\
$W(z)=1z^{0}+127z^{24}+125z^{32}+3z^{40}$\\
$\# \operatorname{Aut}=9$

\medskip

\noindent
$\codeeff{n}{k}{d}{q}=\codeeff{56}{8}{24}{2}$\\
11111111111111111111111000000000000000000000000010000000\\
11111111111100000000000111111111110000000000000001000000\\
11110000000011110000000111111110001111111000000000100000\\
00001111000011111111000111100000001111000111000000010000\\
00000000000000001100110111111001101111110110111000001000\\
00000000110011001010101110010101011111101101000100000100\\
00001110001000000111100001000100111110111101110100000010\\
11100001110010001100111111010010010000111010100000000001\\
$W(z)=1z^{0}+127z^{24}+125z^{32}+3z^{40}$\\
$\# \operatorname{Aut}=3$

\medskip

\noindent
$\codeeff{n}{k}{d}{q}=\codeeff{56}{8}{24}{2}$\\
11111111111111111111111000000000000000000000000010000000\\
11111111111100000000000111111111110000000000000001000000\\
11110000000011110000000111111110001111111000000000100000\\
00001111000011111111000111100000001111000111000000010000\\
00000000000000001100110111111001101111110110111000001000\\
00000000110011001010101110010101011111101101000100000100\\
10001110000000101001100111001011111100111001000100000010\\
01101000111000111101001111010001101011010000000100000001\\
$W(z)=1z^{0}+126z^{24}+127z^{32}+2z^{40}$\\
$\# \operatorname{Aut}=8$

\medskip

\noindent
$\codeeff{n}{k}{d}{q}=\codeeff{56}{8}{24}{2}$\\
11111111111111111111111000000000000000000000000010000000\\
11111111111100000000000111111111110000000000000001000000\\
11110000000011110000000111111110001111111000000000100000\\
00001111000011111111000111100000001111000111000000010000\\
00000000000000001100110111111001101111110110111000001000\\
00000000110011001010101110010101011111101101000100000100\\
10001110000000101001100111001011111100111001000100000010\\
11101001111000101110000110001000101011110100000100000001\\
$W(z)=1z^{0}+126z^{24}+127z^{32}+2z^{40}$\\
$\# \operatorname{Aut}=8$

\medskip

\noindent
$\codeeff{n}{k}{d}{q}=\codeeff{56}{8}{24}{2}$\\
11111111111111111111111000000000000000000000000010000000\\
11111111111100000000000111111111110000000000000001000000\\
11110000000011110000000111111110001111111000000000100000\\
00001111000011111111000111100000001111000111000000010000\\
00000000000000001100110111111001101111110110111000001000\\
00000000110011001010101110010101011111101101000100000100\\
10000000101111100010000110011011111000101010111000000010\\
01111110110000010110100001011001001111100001100000000001\\
$W(z)=1z^{0}+128z^{24}+123z^{32}+4z^{40}$\\
$\# \operatorname{Aut}=2$

\medskip

\noindent
$\codeeff{n}{k}{d}{q}=\codeeff{56}{8}{24}{2}$\\
11111111111111111111111000000000000000000000000010000000\\
11111111111100000000000111111111110000000000000001000000\\
11110000000011110000000111111110001111111000000000100000\\
00001111000011111111000111100000001111000111000000010000\\
00000000000000001100110111111001101111110110111000001000\\
00000000110011001010101110010101011111101101000100000100\\
10000000101111100010000110011011111000101010111000000010\\
11111110010000110100100001001001100111100011100000000001\\
$W(z)=1z^{0}+128z^{24}+123z^{32}+4z^{40}$\\
$\# \operatorname{Aut}=6$

\medskip

\noindent
$\codeeff{n}{k}{d}{q}=\codeeff{56}{8}{24}{2}$\\
11111111111111111111111000000000000000000000000010000000\\
11111111111100000000000111111111110000000000000001000000\\
11110000000011110000000111111110001111111000000000100000\\
00001111000011111111000111100000001111000111000000010000\\
00000000000000001100110111111001101111110110111000001000\\
00001000100011001011100100011101011110111101000100000100\\
10001000011010000011001011110011111111001100100000000010\\
10000000011101111011011011001001000001000111111000000001\\
$W(z)=1z^{0}+129z^{24}+122z^{32}+3z^{40}+1z^{48}$\\
$\# \operatorname{Aut}=18$

\medskip

\noindent
$\codeeff{n}{k}{d}{q}=\codeeff{56}{8}{24}{2}$\\
11111111111111111111111000000000000000000000000010000000\\
11111111111100000000000111111111110000000000000001000000\\
11110000000011110000000111111110001111111000000000100000\\
00001111000011111111000111100000001111000111000000010000\\
00000000000000001100110111111001101111110110111000001000\\
00001000100011001011100100011101011110111101000100000100\\
10001000011010000011001011110011111111001100100000000010\\
10000110011101001011011000001001001101110111100000000001\\
$W(z)=1z^{0}+129z^{24}+122z^{32}+3z^{40}+1z^{48}$\\
$\# \operatorname{Aut}=6$

\medskip

\noindent
$\codeeff{n}{k}{d}{q}=\codeeff{56}{8}{24}{2}$\\
11111111111111111111111000000000000000000000000010000000\\
11111111111100000000000111111111110000000000000001000000\\
11110000000011110000000111111110001111111000000000100000\\
00001111000011111111000111100000001111000111000000010000\\
00000000000000001100110111111001101111110110111000001000\\
00001000100011001011100100011101011110111101000100000100\\
10001111100010000010000110011010011101101010111000000010\\
01111110011010000101100100011000101111100001100000000001\\
$W(z)=1z^{0}+126z^{24}+127z^{32}+2z^{40}$\\
$\# \operatorname{Aut}=6$

\medskip

\noindent
$\codeeff{n}{k}{d}{q}=\codeeff{56}{8}{24}{2}$\\
11111111111111111111111000000000000000000000000010000000\\
11111111111100000000000111111111110000000000000001000000\\
11110000000011110000000111111110001111111000000000100000\\
00001111000011111111000111100000001111000111000000010000\\
00000000000000001100110111111001101111110110111000001000\\
00001000100011001011100100011101011110111101000100000100\\
10001111100010000010000110011010011101101010111000000010\\
11110110110000111100000110011000000011000110110100000001\\
$W(z)=1z^{0}+125z^{24}+129z^{32}+1z^{40}$\\
$\# \operatorname{Aut}=3$

\medskip

\noindent
$\codeeff{n}{k}{d}{q}=\codeeff{56}{8}{24}{2}$\\
11111111111111111111111000000000000000000000000010000000\\
11111111111100000000000111111111110000000000000001000000\\
11110000000011110000000111111110001111111000000000100000\\
00001111000011111111000111100000001111000111000000010000\\
00000000000000001100110111111001101111110110111000001000\\
00001100110011110000000110011111101100000110110100000100\\
11000010111011111000100101110001100010100101100000000010\\
11111101001011000100100100001001101110010101010000000001\\
$W(z)=1z^{0}+129z^{24}+122z^{32}+3z^{40}+1z^{48}$\\
$\# \operatorname{Aut}=72$

\medskip

\noindent
$\codeeff{n}{k}{d}{q}=\codeeff{56}{8}{24}{2}$\\
11111111111111111111111000000000000000000000000010000000\\
11111111111100000000000111111111110000000000000001000000\\
11110000000011110000000111111110001111111000000000100000\\
00001111000011111111000111100000001111000111000000010000\\
00000000000000001100110111111001101111110110111000001000\\
00001100110011110000000110011111101100000110110100000100\\
11000010111011111000100101110001100010100101100000000010\\
11111101001011001000010010010001101110010101010000000001\\
$W(z)=1z^{0}+130z^{24}+119z^{32}+6z^{40}$\\
$\# \operatorname{Aut}=32$

\medskip

\noindent
$\codeeff{n}{k}{d}{q}=\codeeff{56}{8}{24}{2}$\\
11111111111111111111111000000000000000000000000010000000\\
11111111111100000000000111111111110000000000000001000000\\
11110000000011110000000111111110001111111000000000100000\\
00001111000011111111000111100000001111000111000000010000\\
00000000000000001111000110011111101111110110111000001000\\
00000000110011001100110101011001011111101101000100000100\\
10001110000010001010001001000110011111111001110100000010\\
11101000110001111001001100000001001111010100110100000001\\
$W(z)=1z^{0}+130z^{24}+119z^{32}+6z^{40}$\\
$\# \operatorname{Aut}=32$

\medskip

\noindent
$\codeeff{n}{k}{d}{q}=\codeeff{56}{8}{24}{2}$\\
11111111111111111111111000000000000000000000000010000000\\
11111111111100000000000111111111110000000000000001000000\\
11110000000011110000000111111110001111111000000000100000\\
00001111000011111111000111100000001111000111000000010000\\
00000000000000001111000110011111101111110110111000001000\\
00000000110011001100110101011001011111101101000100000100\\
10001110000010001010001001000110011111111001110100000010\\
10001110001101110110001010011110101100100010000100000001\\
$W(z)=1z^{0}+129z^{24}+122z^{32}+3z^{40}+1z^{48}$\\
$\# \operatorname{Aut}=8$

\medskip

\noindent
$\codeeff{n}{k}{d}{q}=\codeeff{56}{8}{24}{2}$\\
11111111111111111111111000000000000000000000000010000000\\
11111111111100000000000111111111110000000000000001000000\\
11110000000011110000000111111110001111111000000000100000\\
00001111000011111111000111100000001111000111000000010000\\
00001100000000001111110110011001101100110110111000001000\\
10000010000010000000001111000001111111111111110100000100\\
01001000110001110000100100010101001111001111111000000010\\
11110110000011001100011101010011010000110000111000000001\\
$W(z)=1z^{0}+126z^{24}+127z^{32}+2z^{40}$\\
$\# \operatorname{Aut}=6$

\medskip

\noindent
$\codeeff{n}{k}{d}{q}=\codeeff{56}{8}{24}{2}$\\
11111111111111111111111000000000000000000000000010000000\\
11111111111100000000000111111111110000000000000001000000\\
11110000000011110000000111111110001111111000000000100000\\
00001111000011111111000111100000001111000111000000010000\\
00001100000000001111110110011001101100110110111000001000\\
10000010000010000000001111000001111111111111110100000100\\
00000010111011111110100001000101101000100110110100000010\\
01111010110111100001000101000010011000110111000100000001\\
$W(z)=1z^{0}+128z^{24}+123z^{32}+4z^{40}$\\
$\# \operatorname{Aut}=8$

\medskip

\noindent
$\codeeff{n}{k}{d}{q}=\codeeff{56}{8}{24}{2}$\\
11111111111111111111111000000000000000000000000010000000\\
11111111111100000000000111111111110000000000000001000000\\
11110000000011110000000111111110001111111000000000100000\\
00001111000011111111000111100000001111000111000000010000\\
00001100000000001111110110011001101100110110111000001000\\
00000000110011001100110101011001011111101101000100000100\\
00001010000011001010101100100111101111011000110100000010\\
00000110001100000110101111111001010011000011110100000001\\
$W(z)=1z^{0}+127z^{24}+127z^{32}+1z^{56}$\\
$\# \operatorname{Aut}=16$

\medskip

\noindent
$\codeeff{n}{k}{d}{q}=\codeeff{56}{8}{24}{2}$\\
11111111111111111111111000000000000000000000000010000000\\
11111111111100000000000111111111110000000000000001000000\\
11110000000011110000000111111110001111111000000000100000\\
00001111000011111111000111100000001111000111000000010000\\
00001100000000001111110110011001101100110110111000001000\\
00000000110011001100110101011001011111101101000100000100\\
10000010000000101100111101100110101010111111100000000010\\
10001110001111100000111110111000010110100100100000000001\\
$W(z)=1z^{0}+127z^{24}+127z^{32}+1z^{56}$\\
$\# \operatorname{Aut}=6$

\medskip

\noindent
$\codeeff{n}{k}{d}{q}=\codeeff{56}{8}{24}{2}$\\
11111111111111111111111000000000000000000000000010000000\\
11111111111100000000000111111111110000000000000001000000\\
11110000000011110000000111111110001111111000000000100000\\
00001111000011111111000111100000001111000111000000010000\\
00001100000000001111110110011001101100110110111000001000\\
00000000110011001100110101011001011111101101000100000100\\
00001010110010100000101110011110111010101000111000000010\\
00000110111101101100101101000000000110110011111000000001\\
$W(z)=1z^{0}+127z^{24}+127z^{32}+1z^{56}$\\
$\# \operatorname{Aut}=168$

\medskip

\noindent
$\codeeff{n}{k}{d}{q}=\codeeff{56}{8}{24}{2}$\\
11111111111111111111111000000000000000000000000010000000\\
11111111111100000000000111111111110000000000000001000000\\
11110000000011110000000111111110001111111000000000100000\\
00001111000011111111000111100000001111000111000000010000\\
00001100000000001111110110011001101100110110111000001000\\
00001000100011001000111101110001011110111101000100000100\\
10000010000010110110001010011111001001001111111000000010\\
10000110011101111110000001101001110100000100111000000001\\
$W(z)=1z^{0}+127z^{24}+127z^{32}+1z^{56}$\\
$\# \operatorname{Aut}=36$

\medskip

\noindent
$\codeeff{n}{k}{d}{q}=\codeeff{56}{8}{24}{2}$\\
11111111111111111111111000000000000000000000000010000000\\
11111111111100000000000111111111110000000000000001000000\\
11110000000011110000000111111110001111111000000000100000\\
00001111000011111111000111100000001111000111000000010000\\
00001100000000001111110110011001101100110110111000001000\\
00001000100011001000111101110001011110111101000100000100\\
00000011011000000110110011000000111111101101110100000010\\
00000111000111001110111000110110000010100110110100000001\\
$W(z)=1z^{0}+127z^{24}+127z^{32}+1z^{56}$\\
$\# \operatorname{Aut}=32$

\medskip

\noindent
$\codeeff{n}{k}{d}{q}=\codeeff{56}{8}{24}{2}$\\
11111111111111111111111000000000000000000000000010000000\\
11111111111100000000000111111111110000000000000001000000\\
11110000000011110000000111111110001111111000000000100000\\
00001111000011111111000111100000001111000111000000010000\\
00001100000000001111110110011001101100110110111000001000\\
00001000100011001000111101110001011110111101000100000100\\
00000011011000000110110011000000111111101101110100000010\\
11000010100011110111100100010001101110001000110100000001\\
$W(z)=1z^{0}+124z^{24}+131z^{32}$\\
$\# \operatorname{Aut}=16$

\medskip

\noindent
$\codeeff{n}{k}{d}{q}=\codeeff{56}{8}{24}{2}$\\
11111111111111111111111000000000000000000000000010000000\\
11111111111100000000000111111111110000000000000001000000\\
11110000000011110000000111111110001111111000000000100000\\
00001111000011111111000111100000001111000111000000010000\\
00001100000000001111110110011001101100110110111000001000\\
00001000100011001000111101110001011110111101000100000100\\
11100000010010001000101000010111001011110111111000000010\\
11100100001101000000100011100001110110111100111000000001\\
$W(z)=1z^{0}+127z^{24}+127z^{32}+1z^{56}$\\
$\# \operatorname{Aut}=6$

\medskip

\noindent
$\codeeff{n}{k}{d}{q}=\codeeff{56}{8}{24}{2}$\\
11111111111111111111111000000000000000000000000010000000\\
11111111111100000000000111111111110000000000000001000000\\
11110000000011110000000111111110001111111000000000100000\\
00001111000011111111000111100000001111000111000000010000\\
00001100000000001111110110011001101100110110111000001000\\
00001000100011001000111101110001011110111101000100000100\\
10000110010000100111000001101111111011110010100000000010\\
11100110100011101000110000010111110100110010010000000001\\
$W(z)=1z^{0}+126z^{24}+127z^{32}+2z^{40}$\\
$\# \operatorname{Aut}=42$

\medskip

\noindent
$\codeeff{n}{k}{d}{q}=\codeeff{56}{8}{24}{2}$\\
11111111111111111111111000000000000000000000000010000000\\
11111111111100000000000111111111110000000000000001000000\\
11110000000011110000000111111110001111111000000000100000\\
00001111000011111111000111100000001111000111000000010000\\
00001100000000001111110110011001101100110110111000001000\\
00001000100011001000111101110001011110111101000100000100\\
11000010100000000111100101110111100010111000110100000010\\
11000110111111001111101110000001011111110011110100000001\\
$W(z)=1z^{0}+127z^{24}+127z^{32}+1z^{56}$\\
$\# \operatorname{Aut}=8$

\medskip

\noindent
$\codeeff{n}{k}{d}{q}=\codeeff{56}{8}{24}{2}$\\
11111111111111111111111000000000000000000000000010000000\\
11111111111100000000000111111111110000000000000001000000\\
11110000000011110000000111111110001111111000000000100000\\
00001111000011111111000111100000001111000111000000010000\\
00001100000000001111110110011001101100110110111000001000\\
00001000100011001110001101100101101011111000110100000100\\
11110000011000001100000101010011101010101110101100000010\\
11000011000010101111110010101100000101101000110100000001\\
$W(z)=1z^{0}+124z^{24}+131z^{32}$\\
$\# \operatorname{Aut}=12$

\medskip

\noindent
$\codeeff{n}{k}{d}{q}=\codeeff{56}{8}{24}{2}$\\
11111111111111111111111000000000000000000000000010000000\\
11111111111100000000000111111111110000000000000001000000\\
11110000000011110000000111111110001111111000000000100000\\
00001111000011111111000111100000001111000111000000010000\\
00001100000000001111110110011001101100110110111000001000\\
00001010000011001100101101000111101111101000110100000100\\
11000000110000111100000111100111101100000110101100000010\\
11000110001111110000011100100111100000011000011100000001\\
$W(z)=1z^{0}+127z^{24}+127z^{32}+1z^{56}$\\
$\# \operatorname{Aut}=2304$

\medskip

\noindent
$\codeeff{n}{k}{d}{q}=\codeeff{56}{8}{24}{2}$\\
11111111111111111111111000000000000000000000000010000000\\
11111111111100000000000111111111110000000000000001000000\\
11110000000011110000000111111110001111111000000000100000\\
00001111000011111111000111100000001111000111000000010000\\
00001100000000001111110110011001101100110110111000001000\\
00001010000011001100101101000111101111101000110100000100\\
11000000110011000011000111100111101100000110101100000010\\
11000110001100001111011100100111100000011000011100000001\\
$W(z)=1z^{0}+127z^{24}+127z^{32}+1z^{56}$\\
$\# \operatorname{Aut}=1536$

\medskip

\noindent
$\codeeff{n}{k}{d}{q}=\codeeff{56}{8}{24}{2}$\\
11111111111111111111111000000000000000000000000010000000\\
11111111111100000000000111111111110000000000000001000000\\
11110000000011110000000111111110001111111000000000100000\\
00001111000011111111000111100000001111000111000000010000\\
00001100000000001111110110011001101100110110111000001000\\
00000000110011111100000101000111010011101101110100000100\\
11000000000011001111000100100110111100011011101100000010\\
11001100001100110011110111100110001100000000011100000001\\
$W(z)=1z^{0}+127z^{24}+127z^{32}+1z^{56}$\\
$\# \operatorname{Aut}=192$

\medskip

\noindent
$\codeeff{n}{k}{d}{q}=\codeeff{56}{8}{24}{2}$\\
11111111111111111111111000000000000000000000000010000000\\
11111111111100000000000111111111110000000000000001000000\\
11110000000011110000000111111110001111111000000000100000\\
00001111000011111111000111100000001111000111000000010000\\
00001100000000001111110110011001101100110110111000001000\\
00000000110011111100000101000111010011101101110100000100\\
11001111000011000000000100100110110011011011101100000010\\
11000011001100111100110111100110000011000000011100000001\\
$W(z)=1z^{0}+127z^{24}+127z^{32}+1z^{56}$\\
$\# \operatorname{Aut}=192$

\medskip

\noindent
$\codeeff{n}{k}{d}{q}=\codeeff{56}{8}{24}{2}$\\
11111111111111111111111000000000000000000000000010000000\\
11111111111100000000000111111111110000000000000001000000\\
11110000000011110000000111111110001111111000000000100000\\
00001111000011111111000111100000001111000111000000010000\\
00001100000000001111110110011001101100110110111000001000\\
11001010000011000000101101000001101111101110110100000100\\
00110011110011001100110110000000001111110000101100000010\\
11110101001100001100101101000110000011101000011100000001\\
$W(z)=1z^{0}+127z^{24}+127z^{32}+1z^{56}$\\
$\# \operatorname{Aut}=7680$

\medskip

\noindent
$\codeeff{n}{k}{d}{q}=\codeeff{56}{8}{24}{2}$\\
11111111111111111111111000000000000000000000000010000000\\
11111111111100000000000111111111110000000000000001000000\\
00000000000011110000000111111110001111111111100000100000\\
11110000000000001111000111111001101111110000010000010000\\
00001111000000001100110111111001101100001111001000001000\\
10001000110011000010100000000111011011101110111100000100\\
01001110101010101000111000000001011110011001011100000010\\
10100101011001101001101111111000000101000100100100000001\\
$W(z)=1z^{0}+130z^{24}+119z^{32}+6z^{40}$\\
$\# \operatorname{Aut}=384$

\medskip

\noindent
$\codeeff{n}{k}{d}{q}=\codeeff{56}{8}{24}{2}$\\
11111111111111111111111000000000000000000000000010000000\\
11111111111100000000000111111111110000000000000001000000\\
11110000000011110000000111100000001111111111100000100000\\
00001100000011001111000110011110001111111100010000010000\\
00000011000000111100110110011001101111110011001000001000\\
00000000110010101000111101010110011111101010100100000100\\
11000000000010101110100110010101010000011111111100000010\\
11000000001100000101101100100100101111111010011100000001\\
$W(z)=1z^{0}+127z^{24}+127z^{32}+1z^{56}$\\
$\# \operatorname{Aut}=240$

\medskip

\noindent
$\codeeff{n}{k}{d}{q}=\codeeff{56}{8}{24}{2}$\\
11111111111111111111111000000000000000000000000010000000\\
11111111111100000000000111111111110000000000000001000000\\
11110000000011110000000111100000001111111111100000100000\\
00001100000011001111000110011110001111111100010000010000\\
00000011000000111100110110011001101111110011001000001000\\
00000000110010101000111101010110011111101010100100000100\\
11000000000010101110100110010101010000011111111100000010\\
10101010101011110010010101011110000000011100100100000001\\
$W(z)=1z^{0}+124z^{24}+131z^{32}$\\
$\# \operatorname{Aut}=240$

\medskip

\noindent
$\codeeff{n}{k}{d}{q}=\codeeff{56}{8}{24}{2}$\\
11111111111111111111111000000000000000000000000010000000\\
11111111111100000000000111111111110000000000000001000000\\
11110000000011110000000111100000001111111111100000100000\\
00001100000011001111000110011110001111111100010000010000\\
00000011000000111100110110011001101111110011001000001000\\
00000000110010101000111101010110011111101010100100000100\\
00001110100000001010110001111111011100010101101000000010\\
00001110101110100001111011001110100011110000001000000001\\
$W(z)=1z^{0}+127z^{24}+127z^{32}+1z^{56}$\\
$\# \operatorname{Aut}=42$

\medskip

\noindent
$\codeeff{n}{k}{d}{q}=\codeeff{56}{8}{24}{2}$\\
11111111111111111111111000000000000000000000000010000000\\
11111111111100000000000111111111110000000000000001000000\\
11110000000011110000000111100000001111111111100000100000\\
00001100000011001111000110011110001111111100010000010000\\
00000011000000111100110110011001101111110011001000001000\\
11000000110000001010101101000101001111111010011100000100\\
10101010101010100101101000010000011111110110000100000010\\
01101010100110101100110010110010110000000011011100000001\\
$W(z)=1z^{0}+127z^{24}+127z^{32}+1z^{56}$\\
$\# \operatorname{Aut}=5760$

\medskip

\noindent
$\codeeff{n}{k}{d}{q}=\codeeff{56}{9}{24}{2}$\\
11111111111111111111111000000000000000000000000100000000\\
11111111111100000000000111111111110000000000000010000000\\
11110000000011110000000111100000001111111111100001000000\\
00001100000011001111000110011110001111111100010000100000\\
00000011000010101000111101011101001111110011001000010000\\
10000000111000011100100100010000111111101111011000001000\\
01001010110100011110110111011100000000011010011000000100\\
00100111100111011101000011101010100000011110100000000010\\
11101101010101001000101100000101011111100100100000000001\\
$W(z)=1z^{0}+255z^{24}+255z^{32}+1z^{56}$\\
$\# \operatorname{Aut}=48960$

\medskip

\noindent
$\codeeff{n}{k}{d}{q}=\codeeff{56}{9}{24}{2}$\\
11111111111111111111111000000000000000000000000100000000\\
11111111111100000000000111111111110000000000000010000000\\
11110000000011110000000111111110001111111000000001000000\\
00001111000011111111000111100000001111000111000000100000\\
00001000100000001110111100011101101000111110111000010000\\
00000000011011001100110111110010000110110101111000001000\\
10001110010110000000001010011110010110111010111000000100\\
01100100100000111101111101100010000001111011100000000010\\
11100010110001110000111100001101110110001010100000000001\\
$W(z)=1z^{0}+255z^{24}+255z^{32}+1z^{56}$\\  
$\# \operatorname{Aut}=1440$


\begin{thebibliography}{10}

\bibitem{barth1996two}
W.~Barth.
\newblock Two projective surfaces with many nodes, admitting the symmetries of
  the icosahedron.
\newblock {\em Journal of Algebraic Geometry}, 5(1):173--186, 1996.

\bibitem{basset1906maximum}
A.~Basset.
\newblock The maximum number of double points on a surface.
\newblock {\em Nature}, 73(1889):246, 1906.

\bibitem{beauville1979nombre}
A.~Beauville.
\newblock Sur le nombre maximum de points doubles d’une surface dans ${P}^3$
  ($\mu (5)= 31$).
\newblock {\em Journ\'ees de G\'eom\'etrie alg\'ebraique d'Angers}, pages
  207--215, 1979.

\bibitem{betsumiya2012triply}
K.~Betsumiya and A.~Munemasa.
\newblock On triply even binary codes.
\newblock {\em Journal of the London Mathematical Society}, 86(1):1--16, 2012.

\bibitem{betten2006error}
A.~Betten, M.~Braun, H.~Fripertinger, A.~Kerber, A.~Kohnert, and A.~Wassermann.
\newblock {\em Error-correcting linear codes: Classification by isometry and
  applications}, volume~18.
\newblock Springer Science \& Business Media, 2006.

\bibitem{bouyukliev2007q}
I.~Bouyukliev.
\newblock What is {Q}-extension?
\newblock {\em Serdica Journal of Computing}, 1(2):115--130, 2007.

\bibitem{catanese1981babbage}
F.~Catanese.
\newblock Babbage's conjecture, contact of surfaces, symmetric determinantal
  varieties and applications.
\newblock {\em Inventiones Mathematicae}, 63(3):433--465, 1981.

\bibitem{catanese2005even}
F.~Catanese and F.~Tonoli.
\newblock Even sets of nodes on sextic surfaces.
\newblock {\em arXiv preprint math/0510499}, 2005.

\bibitem{cayley1869vii}
A.~Cayley.
\newblock A memoir on cubic surfaces.
\newblock {\em Philosophical Transactions of the Royal Society of London},
  159:231--326, 1869.

\bibitem{delsarte1975alternating}
P.~Delsarte and J.-M. Goethals.
\newblock Alternating bilinear forms over $\operatorname{GF}(q)$.
\newblock {\em Journal of Combinatorial Theory, Series A}, 19(1):26--50, 1975.

\bibitem{dodunekov1998codes}
S.~Dodunekov and J.~Simonis.
\newblock Codes and projective multisets.
\newblock {\em The Electronic Journal of Combinatorics}, 5(1):37, 1998.

\bibitem{doran2011codes}
C.~F. Doran, M.~G. Faux, S.~J. Gates, T.~H{\"u}bsch, K.~M. Iga, G.~D.
  Landweber, and R.~L. Miller.
\newblock Codes and supersymmetry in one dimension.
\newblock {\em Advances in Theoretical and Mathematical Physics},
  15(6):1909--1970, 2011.

\bibitem{endrass1997minimal}
S.~Endra\ss.
\newblock Minimal even sets of nodes.
\newblock {\em Journal f\"ur die reine und angewandte Mathematik (Crelles
  Journal)}, 503:87--108, 1998.
\newblock arXiv preprint alg-geom/9710025.

\bibitem{feulner2009automorphism}
T.~Feulner.
\newblock The automorphism groups of linear codes and canonical representatives
  of their semilinear isometry classes.
\newblock {\em Advances in Mathematics of Communications}, 3(4):363--383, 2009.

\bibitem{feulner2014classification}
T.~Feulner.
\newblock Classification and nonexistence results for linear codes with
  prescribed minimum distances.
\newblock {\em Designs, Codes and Cryptography}, 70(1-2):127--138, 2014.

\bibitem{Grassl:codetables}
M.~Grassl.
\newblock {Bounds on the minimum distance of linear codes and quantum codes}.
\newblock Online available at \url{http://www.codetables.de}, 2007.
\newblock Accessed on 2018-10-15.

\bibitem{ubt_eref40887}
D.~Heinlein, T.~Honold, M.~Kiermaier, S.~Kurz, and A.~Wassermann.
\newblock Projective divisible binary codes.
\newblock In {\em The Tenth International Workshop on Coding and Cryptography
  2017: WCC Proceedings}. Saint-Petersburg, September 2017.

\bibitem{heinlein2019projective}
D.~Heinlein, T.~Honold, M.~Kiermaier, S.~Kurz, and A.~Wassermann.
\newblock On projective $q^r$-divisible codes.
\newblock {\em arXiv preprint 1912.10147}, 2019.

\bibitem{honold2018partial}
T.~Honold, M.~Kiermaier, and S.~Kurz.
\newblock Partial spreads and vector space partitions.
\newblock In {\em Network Coding and Subspace Designs}, pages 131--170.
  Springer, 2018.

\bibitem{honold2019lengths}
T.~Honold, M.~Kiermaier, S.~Kurz, and A.~Wassermann.
\newblock The lengths of projective triply-even binary codes.
\newblock {\em IEEE Transactions on Information Theory}, 2019.

\bibitem{jaffe1997sextic}
D.~B. Jaffe and D.~Ruberman.
\newblock A sextic surface cannot have $66$ nodes.
\newblock {\em Journal of Algebraic Geometry}, 6(1):151--168, 1997.

\bibitem{jungnickel2018classification}
D.~Jungnickel and V.~D. Tonchev.
\newblock The classification of antipodal two-weight linear codes.
\newblock {\em Finite Fields and Their Applications}, 50:372--381, 2018.

\bibitem{kaski2006classification}
P.~Kaski and P.~R. {\"O}sterg{\aa}rd.
\newblock {\em Classification algorithms for codes and designs}, volume~15.
\newblock Springer, 2006.

\bibitem{kiermaier2017improvement}
M.~Kiermaier and S.~Kurz.
\newblock On the lengths of divisible codes.
\newblock {\em IEEE Transactions on Information Theory}, to appear.

\bibitem{kummer1864flachen}
E.~E. Kummer.
\newblock {\"U}ber die {F}l{\"a}chen vierten {G}rades mit sechzehn
  singul{\"a}ren {P}unkten.
\newblock {\em Collected papers II}, pages 418--432, 1864.

\bibitem{kurz2019lincode}
S.~Kurz.
\newblock Lincode -- computer classification of linear codes.
\newblock {\em arXiv preprint 1912.09357}, 2019.

\bibitem{labs2004septic}
O.~Labs.
\newblock A septic with $99$ real nodes.
\newblock {\em arXiv preprint math/0409348}, 2004.

\bibitem{PhdLabs}
O.~Labs.
\newblock {\em Hypersurfaces with many singularities -- History, Constructions,
  Algorithms, Visualization}.
\newblock PhD thesis, Mainz, Germany, 2005.

\bibitem{macwilliams1977theory}
F.~J. MacWilliams and N.~J.~A. Sloane.
\newblock {\em The theory of error-correcting codes}.
\newblock Elsevier, 1977.

\bibitem{PhdPettersen}
K.~F.~P. Pettersen.
\newblock {\em On nodal determinantal quartic hyperfurfaces in $\mathbb{P}^4$}.
\newblock PhD thesis, Oslo, Norway, 1998.

\bibitem{pignatelli2007wahl}
R.~Pignatelli and F.~Tonoli.
\newblock On {W}ahl's proof of $\mu(6)= 65$.
\newblock {\em arXiv preprint 0706.4358}, 2007.

\bibitem{pless1963power}
V.~Pless.
\newblock Power moment identities on weight distributions in error correcting
  codes.
\newblock {\em Information and Control}, 6(2):147--152, 1963.

\bibitem{dr1863distribution}
L.~Schl\"afli.
\newblock On the distribution of surfaces of the third order into species, in
  reference to the absence or presence of singular points, and the reality of
  their lines.
\newblock {\em Philosophical Transactions of the Royal Society of London},
  pages 193--241, 1863.

\bibitem{varchenko1983semicontinuity}
A.~N. Varchenko.
\newblock Semicontinuity of the spectrum and an upper bound for the number of
  singular points of the projective hypersurface.
\newblock In {\em Dokl. Akad. Nauk SSSR}, volume 270, pages 1294--1297, 1983.

\bibitem{wahl1998nodes}
J.~Wahl.
\newblock Nodes on sextic hypersurfaces in ${P}^3$.
\newblock {\em J. Differential Geom}, 48(3):439--444, 1998.

\bibitem{ward1992bound}
H.~N. Ward.
\newblock A bound for divisible codes.
\newblock {\em IEEE Transactions on Information Theory}, 38(1):191--194, 1992.

\bibitem{ward1998divisibility}
H.~N. Ward.
\newblock Divisibility of codes meeting the {G}riesmer bound.
\newblock {\em Journal of Combinatorial Theory, Series A}, 83(1):79--93, 1998.

\bibitem{ward2001divisible}
H.~N. Ward.
\newblock Divisible codes -- a survey.
\newblock {\em Serdica Mathematical Journal}, 27(4):263--278, 2001.

\bibitem{wassermann2002attacking}
A.~Wassermann.
\newblock Attacking the market split problem with lattice point enumeration.
\newblock {\em Journal of Combinatorial Optimization}, 6(1):5--16, 2002.

\end{thebibliography}
\end{document}